\documentclass[11pt]{article}
\usepackage{amsmath,amsthm,amscd,amssymb}
\usepackage{latexsym}
\usepackage{tikz}
\usepackage{mathrsfs}

\usepackage{yfonts}


\newcommand{\bbN}{{\mathbb{N}}}

\newcommand{\bbR}{{\mathbb{R}}}

\newcommand{\bbZ}{{\mathbb{Z}}}

\newcommand{\bbC}{{\mathbb{C}}}

\newcommand{\gothic}[1]{\textgoth{#1}}

\newcommand{\lozr}{
--++(1,1)--++(1,-1)--++(-1,-1)
--++(-1,1)[fill=white]}

\newcommand{\lozd}{--++(1,-1)--++(0,2)--++(-1,1)--++(0,-2)
[fill=white!30!black]}

\newcommand{\lozu}{--++(1,1)--++(0,2)--++(-1,-1)--++(0,-2)
[fill=white!70!black]}


\newcommand{\no}{\nonumber}

\newcommand{\supp}{\text{\rm{supp}}}

\newcommand{\bi}{\bibitem}

\newcommand{\eps}{\varepsilon}

\newcommand{\beq}{\begin{equation}}

\newcommand{\eeq}{\end{equation}}

\newcommand{\ba}{\begin{align}}

\newcommand{\ea}{\end{align}}


\DeclareMathOperator{\Tr}{Tr}

\DeclareMathOperator*{\EE}{{\mathbb{E}}}

\DeclareMathOperator*{\Var}{Var}

\allowdisplaybreaks

\numberwithin{equation}{section}




\newtheorem{theorem}{Theorem}[section]

\newtheorem{proposition}[theorem]{Proposition}

\newtheorem{lemma}[theorem]{Lemma}

\newtheorem{corollary}[theorem]{Corollary}

\theoremstyle{definition}

\newtheorem{example}[theorem]{Example}

\theoremstyle{remark}

\newtheorem{remark}{Remark}[section]

\title{Central Limit Theorems for Biorthogonal Ensembles and Asymptotics of Recurrence Coefficients}
\author{Jonathan Breuer\footnote{ Einstein Institute of Mathematics, The Hebrew University of Jerusalem,
Jerusalem 91904, Israel. E-mail: jbreuer@math.huji.ac.il. Supported in part by the US-Israel Binational Science Foundation (BSF) Grant no.\ 2010348 and by the Israel Science Foundation (ISF) Grant no.\ 1105/10.} \and Maurice Duits\footnote{Department of Mathematics, Royal Institute of Technology (KTH), Lindstedtsv\"agen 25, SE-10044 Stockholm, Sweden. Supported in part by the grant KAW 2010.0063 from the Knut and Alice Wallenberg Foundation and by the Swedish Research Council (VR) Grant no. 2012-3128. }}

\date{}


\begin{document}
\maketitle 
\begin{abstract}
We study fluctuations of linear statistics corresponding to smooth functions for certain biorthogonal ensembles.  We study those biorthogonal ensembles for which  the underlying biorthogonal family  satisfies a finite term recurrence and describe the asymptotic fluctuations using right limits of the recurrence matrix. As a consequence, we show that whenever the right limit is a Laurent matrix, a Central Limit Theorem holds. We will also discuss the implications for orthogonal polynomial ensembles. In particular, we obtain a Central Limit Theorem  for the orthogonal polynomial ensemble associated with any measure belonging to the Nevai Class of an interval. Our results also extend previous results on Unitary Ensembles in the one-cut case. Finally, we will illustrate our results by deriving Central Limit Theorems for the Hahn ensemble for lozenge tilings of a hexagon and for the Hermitian two matrix model.
\end{abstract}

\section{Introduction}

Eigenvalues of random matrices typically tend to an equilibrium configuration as the size of the matrix tends to infinity. The fluctuations of the empirical measure around this equilibrium have been an important topic of study in random matrix theory  (we refer to \cite{AGZ,F,PS,Handbook} as general references).  In particular, such fluctuations can often be described by certain Central Limit Theorems. This paper is concerned with studying this phenomenon in the general context of orthogonal polynomial ensembles \cite{K} and, even more generally, biorthogonal ensembles \cite{Bor} for which the underlying biorthogonal family satisfies a recurrence relation. The family of biorthogonal ensembles is a rather large family of determinantal point processes which includes many models from probability and combinatorics, such as non-colliding random walks, growth models and last passage percolation, as well as eigenvalues of certain invariant random matrix ensembles (for a review of various important models leading to orthogonal polynomial ensembles see, e.g., \cite{K}; for examples of the generalization to biorthogonal polynomials see \cite{Bor}).  

Let $\mu$ be a Borel measure on $\bbR$. For two families of linearly independent functions $\mathcal \{\psi_j\}_{j \in \bbN}$ and $\mathcal \{\phi_j\}_{j \in \bbN}$ the \emph{biorthogonal ensemble} of size $n$ is defined as the probability measure on $\bbR^n$ proportional to
\begin{equation}\label{eq:defBO}
\det \left(\psi_{j-1}(x_i)\right)_{i,j=1}^n \det \left(\phi_{j-1}(x_i)\right)_{i,j=1})^n {\rm d} \mu(x_1) \cdots {\rm d}\mu(x_n).
\end{equation}
By Gram-Schmidt and the fact that determinants are linear in each of the columns, we can assume without loss of generality that the families are biorthogonal in the sense that
\begin{equation}\label{eq:orthogonality}
\int \psi_j(x) \phi_k(x) {\rm d}\mu(x)= \delta_{jk}.
\end{equation}
The term `biorthogonal ensemble' was introduced by Borodin in \cite{Bor}.  

An important subclass is that of the \emph{orthogonal polynomials ensembles}. In this case we take  $\psi_j=\phi_j=p_j$, a polynomial of degree $j$. In other words, by \eqref{eq:orthogonality} the polynomials are the orthogonal polynomials with respect to $\mu$
$$\int p_k(x) p_l(x) {\rm d} \mu(x)=\delta_{kl}.$$
We will discuss the implications of our main results for orthogonal polynomials ensembles  below. 

Let $x_1,\ldots,x_n$ be $n$ random point chosen from a biorthogonal ensemble \eqref{eq:defBO}. We are interested in the asymptotic behavior of these points in the limit $n\to \infty$. In many situations, the points converge to a deterministic configuration almost surely. That is, we have weak convergence of the empirical measure
$$\frac{1}{n} \sum_{j=1}^n \delta _{x_j} \to \nu,$$
almost surely, for some special measure $\nu$ on $\bbR$. A famous example is the Gaussian Unitary Ensemble where $\nu$ is the semi-circle law. More generally, in the case of orthogonal polynomial ensembles, $\nu$ is often given by  the minimizer of a  particular energy functional. See \cite{AGZ} for a general discussion.

In this paper we will be interested in the fluctuations of the empirical measure around its mean.  The fluctuations we shall study are those of linear statistics associated with sufficiently smooth functions. Given a function $f$ on $\bbR$, and a biorthogonal ensemble of size $n$, the linear statistic associated with $f$, $X_f^{(n)}$, is defined by 
$$ 
X_f^{(n)} = \sum_{j=1} ^n f(x_j). 
$$  
Linear statistics form a powerful tool in the study of determinantal point processes (and for  general $\beta$ analogues)  and a substantial amount of effort has been devoted to studying Central Limit Theorems that they satisfy. The amount of literature on the subject is too great for us to be able to review it here. A partial list of references that are relatively close to our setting is \cite{Borad,Borad2,Borad3,DP,HKPV,Jannals,Jduke,KS,SoshInf,SoshCLT}. If $f$ is sufficiently smooth, one often has  
\beq \label{eq:RMTCLT}
X_f^{(n)} - \EE X_f^{(n)} \to N(0,\sigma_f^2), \quad \text{as } n\to \infty,
\eeq
in distribution, for some particular value of $\sigma_f^2$.

\begin{remark}
Note the difference between this formulation and the standard Central Limit Theorem for sums of independent random variables, in that \eqref{eq:RMTCLT} lacks the normalization by $\sqrt{n}$. This is due to the effective repulsion between the random eigenvalues, which is typical for models of random matrix theory. For an excellent discussion of this phenomenon we refer to \cite{Diac}. Part of \eqref{eq:RMTCLT} is that the variance of $X_f^{(n)}$ remains bounded as $n\to \infty$. This can only be expected to hold for sufficiently smooth $f$. Indeed, for certain  functions that are only piecewise continuous and have jump discontinuities, the variance grows  as $\sim \log n$ (which is of course still smaller than $\sim n$) see \cite{CL,SoshInf}. For determinantal point processes with growing variances general Central Limit Theorems formulated  \cite{HKPV,SoshCLT}. However, in these proofs the assumption of a growing variance is essential.
\end{remark}

An important example of a Central Limit Theorem was given in a remarkable work by Johansson \cite{Jduke}, where he proved \eqref{eq:RMTCLT} in the case of unitary ensembles in the one-cut situation. More precisely, he considered the orthogonal polynomial ensembles\footnote{In fact, \cite{Jduke} deals with unitary ensembles for general $\beta$. In the present paper we restrict ourselves to  the case $\beta=2$.} with  ${\rm d} \mu(x)={\rm d} \mu_n(x)= {{\rm e} ^{-n V(x)}}{\rm d} x$ and a polynomial $V$ of even degree satisfying certain assumptions. One of the crucial assumptions is that the support of the equilibrium measure $\mu_V$ (minimizing the logarithmic energy with external field $V$) is a single interval, say $[\gamma,\delta]$. For such ensembles Johansson \cite{Jduke} showed that for any real-valued sufficiently smooth function $f$ we have
\begin{align}\label{eq:joh}
X_f^{(n)} -\EE X_f^{(n)} \to N\left(0,\sum_{k \geq 1} k  |\hat f_k|^2 \right), \qquad \text{ as } n \to \infty,
\end{align}
in distribution, where   the coefficients $\hat f_k$ are defined as 
$$
 \hat f_k=\frac{1}{2\pi} \int_0^{2\pi}  f\left(\frac{\delta-\gamma}{2} \cos  \theta+\frac{\delta+\gamma}{2} \right) {\rm e}^{-{\rm i} k \theta} {\rm d}\theta,
$$
for $k\geq 1$.  The conditions on $V$ were further relaxed to allow more general analytic $V$ by Kriecherbauer and Shcherbina \cite{KS} by a refinement of the techniques in \cite{Jduke}. However, the one interval assumption for the support of the equilibrium measure can not be relaxed, since it is known that in the multi-cut case the fluctuations have a more exotic and not necessarily Gaussian  behavior, as recently analyzed by Shcherbina in  \cite{Shch}.

In this paper we revisit the problem  of finding  conditions that imply \eqref{eq:RMTCLT} using an approach that is quite different from \cite{Jduke,KS}. The starting point of our analysis is the assumption that the biorthogonal families are subject to  a recurrence relation. We  assume that there exists a \emph{banded} matrix $J$ such that 
\begin{equation}\label{eq:biorecur}
x \begin{pmatrix}  \psi_0(x)\\ \psi_1(x) \\ \psi_2(x) \\ \vdots\end{pmatrix}  =J \begin{pmatrix}  \psi_0(x)\\ \psi_1(x) \\ \psi_2(x) \\ \vdots\end{pmatrix}  ,
\end{equation}
for $x\in \bbR.$  Since $J$ is banded, the relation \eqref{eq:biorecur} indeed defines a finite term recurrence for the functions $\psi_j^{(n)}$ where the number of terms is bounded by the number of non-trivial diagonals of $J$. We call biorthogonal ensembles satisfying \eqref{eq:biorecur} \emph{biorthogonal ensembles with a recurrence}. It is a classical result that in the case of orthogonal polynomial ensembles  \eqref{eq:biorecur} always holds, with $J$ a tridiagonal matrix, known as the Jacobi matrix associated with $\mu$.  It is also known to hold for other ensembles based on polynomials, such as the class of Multiple Orthogonal Polynomial Ensembles \cite{Kuijlaars} and the two matrix model that will be discussed in more detail in Section~\ref{sec:two}.
 
In a nutshell, the main observation of the paper is that the fluctuations of any fixed moment of the empirical measure can be expressed in terms of a finite size submatrix of the recurrence matrix $J$ around the $(n,n)$-entry (where the size of the submatrix depends on the moment, cf. Lemma \ref{lem:keylemma}). If this submatrix has a limit along a certain subsequence, i.e. $J$ has a  \emph{right limit}, then the fluctuations also have a limit along the same subsequence. Using the right limit we provide a limiting expression for the fluctuation in that case. In general, the limiting expression is rather complicated, but in case the right limit is  a Laurent matrix it can be explicitly computed and we derive a Central Limit Theorem in the spirit of \eqref{eq:joh}.
 
The concept of right limit has been an extremely useful one in the spectral theory of Jacobi matrices. Notably, the right limits of a Jacobi matrix determine its essential spectrum \cite{S304} and also give information about the absolutely continuous spectrum \cite{LS, Remling}. Recently, an analogous concept for power series has also been shown to be useful for the understanding of when an analytic function has a natural boundary on its radius of convergence \cite{BS}. We find it remarkable that the notion of right limit also appears naturally in the context of the present paper. 
 
The approach that we follow  is new in that the conditions we formulate center on the asymptotic behavior of the recurrence coefficients and not on continuity properties of the underlying measure. As we shall discuss below, in the special case of orthogonal polynomial ensembles, this allows us to consider quite general measures, including some which are purely singular with respect to Lebesgue measure. Moreover, since a particular corollary to the  assumptions on $V$ in \cite{Jduke,KS} is that the recurrence coefficients of the orthogonal polynomials have a limit as $n\to \infty$, we reproduce the Central Limit Theorem in  \cite{Jduke,KS}.  Our results also apply to discrete orthogonal polynomial ensembles  related to tilings of large planar domains. As an example of the latter, we will discuss the Hahn ensemble and the corresponding lozenge tilings of hexagons in some detail. 
 
Finally, the structure of biorthogonal ensembles is known to be essentially more complicated (in contrast to the orthogonal polynomial ensembles, the underlying correlation kernel is not self-adjoint) and few general results can be found in the literature. To the best of our knowledge, the results in this paper are the first general Central Limit Theorems for these ensembles. To illustrate these results we will briefly discuss a case study of the two matrix model.

{\bf Acknowledgments} We thank Alexei Borodin, Torsten Ehrhardt, Kurt Johansson and Ofer Zeitouni for useful discussions.

\section{Statement of results}

We now state our main results. The proofs will be given in later sections.

\subsection{Biorthogonal ensembles with a recurrence}

Our main results are for biorthogonal ensembles \eqref{eq:defBO} with a recurrence given  by \eqref{eq:biorecur}. In many models of interest, such as the Unitary Ensembles, the parameters in the biorthogonal ensemble vary with $n$. Hence we want to also allow the measure $\mu$, and the functions $\psi_j$ and $\phi_j$ to depend on $n$, and write
$$
\mu=\mu_n, \qquad \psi_j= \psi_j^{(n)}, \qquad \phi=\phi^{(n)}_j \qquad  \text{and} \qquad  J= J^{(n)}.
$$
In this case we always assume that the band structure of $J^{(n)}$ (the number of non-trivial diagonals) does not depend on $n \in \bbN$. 

As discussed above, we want to describe conditions for a Central Limit Theorem \eqref{eq:RMTCLT} using the asymptotics of the recurrence coefficients as $n \rightarrow \infty$. The simplest situation is when the recurrence coefficients have a limit. In the general situation, we shall do this using the notion of right limit.

Let $\{\mathcal B^{(n)}\}_{n\in \bbN}$ be a sequence of banded infinite matrices  $\mathcal B^{(n)}=\left( \mathcal B^{(n)}_{r,s} \right)_{r,s=1}^\infty$. We say that $\mathcal B^R=(\mathcal B^R_{r,s})_{r,s=-\infty}^\infty$ is  a \emph{right} limit of $\{\mathcal B^{(n)}\}_{n\in \bbN}$  iff there exists a subsequence $\{n_j\}_{j}$ such that $$
\left({\mathcal B^R}\right)_{r,s}= \lim_{j\to \infty} \mathcal B_{n_j+r,n_j+s}^{(n_j)},$$
and $\mathcal B^R$ is bounded as an operator on $\ell_2(\bbZ)$. 

Put differently, a right limit is a subsequential limit \emph{in the strong operator topology} of shifts of the sequence of operators. If the sequence $\left\{ \mathcal B^{(n)} \right \}_n$ is uniformly bounded then a right limit exists by compactness. A right limit is not necessarily unique (at the very least, if $\mathcal B^R$ is a right limit, then any shift of $\mathcal B^R$ is a right limit as well). Note that we specifically demand that $\mathcal B^R$ is bounded, even if $\mathcal B^{(n)}$ is unbounded. 

In the first main result we assume that for some polynomial $p$ we have that $p(J^{(n)})$ has a right limit that has constant values on each diagonal. A two-sided infinite matrix with constant diagonals is called a Laurent matrix. Laurent matrices are completely determined by their symbol. Given a Laurent polynomial $s(w)=\sum_{j=-q}^p s_j w^j$, the Laurent matrix $L(s)$ is defined by 
$$ \left(L(s)\right)_{jk}=s_{j-k}, \qquad j,k \in \bbZ.
$$
We refer to $s$ as the \emph{symbol} of the Laurent matrix $L(s)$. 

Let $\{\mu_n, \psi_j^{(n)}, \phi_j^{(n)}, J^{(n)}\}$ be a sequence of biorthogonal ensembles with a recurrence. The following is our first main result.
\begin{theorem}\label{th:fluctuationsbio}
Let  $Q,P$ be two polynomials with real coefficients. Assume that $\{Q\left(J^{(n)}\right)\}_n$ has a right limit along a subsequence $\{n_j\}_j$  and assume the right-limit is the  Laurent matrix with symbol $s_Q(w)=\sum_{l=-q}^p s_l w^l$. Then 
$$
X_{P \circ Q}^{(n_j)} -\EE X_{P\circ Q}^{(n_j)} \to N\left(0,\sum_{k=1}^\infty k \widehat{(P\circ Q)}_k \widehat{(P\circ Q)}_{-k}\right), 
$$
where 
$$
\widehat{ (P\circ Q)}_k=\frac{1}{2 \pi {\rm i}} \oint_{|z|=1} P(s_Q(w)) w^k \frac{{\rm d}w}{w}.
$$
\end{theorem}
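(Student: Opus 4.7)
My plan is to prove the theorem by the method of cumulants: I will show that the second cumulant of $X_{P\circ Q}^{(n_j)}$ converges to the stated variance, while every higher cumulant vanishes in the limit. Since a Gaussian is determined by its moments, this implies the claimed convergence in distribution. The starting point is the determinantal structure of the biorthogonal ensemble: the correlation kernel $K_n(x,y)=\sum_{j=0}^{n-1}\psi_j^{(n)}(x)\phi_j^{(n)}(y)$ corresponds, in the biorthogonal basis, to the coordinate projection $P_n$ onto indices $\{0,\dots,n-1\}$, and by the recurrence \eqref{eq:biorecur}, multiplication by $P\circ Q$ is represented on the $\psi$-basis by the matrix $A_n:=P(Q(J^{(n)}))$. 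Consequently the $m$-th cumulant of $X_{P\circ Q}^{(n)}$ is a fixed polynomial expression in traces of the form $\Tr(P_n A_n^{r_1}P_n A_n^{r_2}\cdots P_n A_n^{r_k})$ with $r_1+\cdots+r_k=m$. I expect this reformulation to be the content of Lemma~\ref{lem:keylemma}.

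The next step is a \emph{boundary reduction}. Since $J^{(n)}$ is banded with bandwidth independent of $n$, so is $A_n$, with bandwidth at most $\deg(P)\deg(Q)$ times the bandwidth of $J^{(n)}$. The alternating structure of the cumulant expansion causes the bulk (diagonal) contribution to cancel, so that each cumulant of order $\ge 2$ reduces to a finite sum involving only entries of $A_n$ within a bounded distance $W_m$ of the corner index $n$. For instance one computes directly
\[
\kappa_2\bigl(X_{P\circ Q}^{(n)}\bigr)=\Tr\bigl(P_n A_n^2\bigr)-\Tr\bigl((P_n A_n)^2\bigr)=\sum_{i<n\le k}A_n(i,k)\,A_n(k,i),
\]
a finite sum by bandedness. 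Shifting indices by $-n_j$ and invoking the right-limit hypothesis gives entrywise convergence $A_{n_j}(n_j+r,n_j+s)\to L(g)_{r,s}=\hat g_{r-s}$, where $g:=P\circ s_Q$, because evaluating the polynomial $P$ on a Laurent matrix yields the Laurent matrix with symbol $P\circ s_Q$. Hence
\[
\kappa_2\bigl(X_{P\circ Q}^{(n_j)}\bigr)\longrightarrow \sum_{i<0\le k}\hat g_{i-k}\,\hat g_{k-i}=\sum_{m=1}^{\infty}m\,\hat g_{-m}\,\hat g_{m},
\]
the second equality by collecting pairs $(i,k)$ with fixed $m=k-i$. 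Since the contour formula in the statement gives $\widehat{(P\circ Q)}_m=\hat g_{-m}$, this is precisely the asserted variance.

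The decisive step is showing that all higher cumulants vanish in the limit, and here the Laurent hypothesis enters essentially. I would approach it through the moment generating function $\Lambda_n(t):=\log\bbE\exp(t X_{P\circ Q}^{(n)})=\log\det\bigl(I+(e^{tA_n}-I)P_n\bigr)$. By the same boundary reduction, $\Lambda_n(t)-t\bbE X_{P\circ Q}^{(n)}$ depends only on entries of $e^{tA_n}$ in a bounded neighborhood of the corner, so along the subsequence $\{n_j\}$ it converges to an operator-theoretic limit involving $L(e^{tg})$ and the projection $\mathcal P_-$ onto $\ell^2(\bbZ_{<0})\subset\ell^2(\bbZ)$. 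In the Laurent case this limit is essentially the logarithm of a Toeplitz operator determinant for the symbol $e^{tg}$, and the strong Szeg\H o limit theorem in its operator form identifies its expansion in $t$ as $\tfrac{t^2}{2}\sum_{m\ge 1}m\,\hat g_m\hat g_{-m}$ with no higher-order contributions. A more self-contained alternative is to exploit the fact that $[\mathcal P_-,L(g)]$ has finite rank (at most the bandwidth of $L(g)$) and verify directly, by a Fourier-combinatorial cancellation on $\ell^2(\bbZ)$, that the limiting expression for $\kappa_m$ collapses to zero for every $m\ge 3$.

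This vanishing of the higher cumulants is the main obstacle, and the Laurent assumption is indispensable at exactly this point: in the multi-cut situation analyzed in \cite{Shch}, where the right limit is periodic rather than Laurent, the higher cumulants genuinely survive and yield non-Gaussian limiting fluctuations. All earlier ingredients — the cumulant/trace expansion, the boundary reduction, and the entrywise passage to the right limit — are robust consequences of the banded structure of $J^{(n)}$ and remain valid without any Laurent assumption, so in full generality they should produce a ``right-limit formula'' for the fluctuations of which Theorem~\ref{th:fluctuationsbio} is the Gaussian specialization.
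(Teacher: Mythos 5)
Your proposal follows essentially the same route as the paper: reduce each cumulant to a trace expression $\Tr(P_nA_n^{r_1}P_n\cdots)$ that, by bandedness, depends only on a bounded window of entries around $(n,n)$ (this is Lemma~\ref{lem:keylemma}), pass to the right limit, and then recognize the limiting quantities as cumulants of a Toeplitz/Laurent operator, where a Szeg\H{o}-type limit theorem gives Gaussianity. The step you sketch but do not carry out---that the limiting log-determinant for the Laurent symbol $e^{tg}$ has only a quadratic term in $t$---is precisely what the paper proves in Lemma~\ref{lem:toeplitz} via Ehrhardt's generalization of the Helton--Howe--Pincus formula (together with Lemma~\ref{lem:equivalent}, which lets one replace $P(Q(J^{(n)}))$ outright by the Toeplitz operator $T(P(s_Q))$, since both have the same right limit). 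One small imprecision: $e^{tA_n}$ is not banded, so your claim that $\Lambda_n(t)-t\,\EE X^{(n)}$ ``depends only on a bounded neighborhood of the corner'' holds only cumulant by cumulant (each $\mathcal C_m^{(n)}$ depends on a window of radius $O(m)$); the paper handles the needed uniform control of the tail over $m$ via the estimate \eqref{eq:ineqfromBD} in order to pass this to the generating function.
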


We postpone the proof of this theorem to Section \ref{sec:mainproof}.  We formulate separately the case of $Q=id$.

\begin{corollary}\label{cor:fluctuationsbio}
Assume that $\left\{J^{(n)}\right\}_n$ has a right limit along a subsequence $\{n_j\}_j$  and assume the right-limit is the  Laurent matrix with symbol $s(w)=\sum_{l=-q}^p s_l w^l$. Then for any polynomial $P$ with real coefficients  we have that 
$$
X_P^{(n_j)} -\EE X_P^{(n_j)} \to N\left(0,\sum_{k=1}^\infty k \hat P_k \hat P_{-k}\right), 
$$
where 
\beq \label{eq:cor:coef}
\hat P_k=\frac{1}{2 \pi {\rm i}} \oint_{|z|=1} P(s(w)) w^k \frac{{\rm d}w}{w}.
\eeq
\end{corollary}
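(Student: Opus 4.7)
The plan is to deduce the corollary as the immediate specialization of Theorem~\ref{th:fluctuationsbio} to the case where $Q$ is the identity polynomial $Q(x)=x$. With this choice we have $Q(J^{(n)})=J^{(n)}$, so the hypothesis that $\{J^{(n)}\}$ has a right limit along $\{n_j\}$ equal to the Laurent matrix with symbol $s(w)=\sum_{l=-q}^p s_l w^l$ is exactly the hypothesis of Theorem~\ref{th:fluctuationsbio} with $s_Q$ replaced by $s$.

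Under the identification $Q=\mathrm{id}$, the composition $P\circ Q$ equals $P$, so the linear statistic $X_{P\circ Q}^{(n_j)}$ reduces to $X_P^{(n_j)}$, and the Fourier-like coefficient $\widehat{(P\circ Q)}_k$ appearing in Theorem~\ref{th:fluctuationsbio} reduces to the coefficient $\hat P_k$ defined in \eqref{eq:cor:coef}. Substituting these identifications into the variance expression from Theorem~\ref{th:fluctuationsbio} yields
\[
\sum_{k=1}^\infty k\,\widehat{(P\circ Q)}_k\,\widehat{(P\circ Q)}_{-k} \;=\; \sum_{k=1}^\infty k\,\hat P_k\,\hat P_{-k},
\]
and the Gaussian limit delivered by the theorem becomes precisely the statement of the corollary.

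There is, accordingly, essentially no obstacle in the proof: one only needs to check that the identity polynomial is an admissible choice for $Q$ in Theorem~\ref{th:fluctuationsbio}, which it is, as that theorem places no restriction on $Q$ beyond having real coefficients. The corollary is singled out because it captures the form of the result that is most transparent in applications, where the recurrence matrix $J^{(n)}$ itself (rather than a polynomial in it) is directly known to approach a Laurent structure, as happens for instance for orthogonal polynomial ensembles whose measure lies in the Nevai class.
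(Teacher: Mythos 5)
Your proof is correct and matches the paper's own reasoning: the corollary is stated explicitly as the specialization of Theorem~\ref{th:fluctuationsbio} to $Q=\mathrm{id}$, which is exactly what you carry out.
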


\begin{remark}
The fact that $P$ has to have real coefficients is not a necessary condition. It is there to make sure that $X_P^{(n)}$ takes real values. The generalization to polynomials with complex coefficients is immediate and the only difference is that we now have a complex normal distribution. 
\end{remark}

\begin{remark}
The difference between Theorem \ref{th:fluctuationsbio} and Corollary \ref{cor:fluctuationsbio} is illustrated by the following example (also see Theorem \ref{th:ope-multi-interval} below). Suppose $\{J^{(n)}\}_{n\in \bbN}$ has a right limit $J^R$, but that the diagonals of $J^R$ are 2-periodic (in other words, the values alternate between two values). Then $J^R$ is not a Laurent matrix, but  $Q(J^R)$ is for some quadratic polynomial $Q$. Thus Corollary \ref{cor:fluctuationsbio} does not apply, but Theorem \ref{th:fluctuationsbio} does.
\end{remark}

\begin{remark} \label{rem:changeinJ}
We want to emphasize that there is a certain amount of freedom in the choice of $\psi_{j}$ and $\phi_{k}$, and hence also in $J$. For instance, for any given sequence of nonzero numbers $\{d_k\}_k$ the sequences $\tilde \psi_k= d_k \psi_k$ and $\tilde \phi_l=d_l^{-1} \psi_l$ are also biorthogonal and give rise to the same biorthogonal polynomial ensemble. However, the recurrence changes! Indeed, if $D$ is the diagonal matrix with $d_k$ along the diagonal, we now have $\tilde J= D  J D^{-1}$. Hence even in case $J$ has no right-limit, $\tilde J$ may have one.  To illustrate that there is no conflict with Theorem \ref{th:fluctuationsbio}, consider for example the case where $J$ has a right-limit that is a Laurent matrix with symbol $s$, and take $d_k=r^k$ for $r>0$. Then $\tilde J= D J D^{-1}$ has also right limit (along the same subsequence) which is also a Laurent matrix whose symbol now is $s_r(w)=s(rw)$.  It is not hard to show that $\hat P_k$ now gets an additional factor $r^{-k}$ but these cancel against each other in the computation of the variance. Thus, the value of $\hat P_k$ depends on the choice of $\psi_k$, $\phi_l$ and hence of $J$, but the variance does not. 

\end{remark}

\begin{remark}
It has often been observed that there is an interesting connection between Central Limit Theorems in random matrix theory and the Strong Szeg\H{o} Limit Theorem for the asymptotic behavior of determinants of Toeplitz matrices \cite{Jannals}. In fact, when dealing with the Circular Unitary Ensemble the two are identical. From this perspective, it is perhaps not surprising that this theorem and its relatives also play an important role in the methods of the present paper. In particular, the proof of Theorem \ref{th:fluctuationsbio} uses  Ehrhardt's generalization \cite{E} of the Helton-Howe-Pincus formula on a particular determinant involving Toeplitz operators.
\end{remark}

In Theorem \ref{th:fluctuationsbio} and Corollary \ref{cor:fluctuationsbio} we assumed the that the function in the linear statistic is a polynomial. When compared to \eqref{eq:joh}  one would like to extend Theorem \ref{th:fluctuationsbio} and Corollary \ref{cor:fluctuationsbio} so that they   hold for a more  general class of functions, say $f\in C^1(\bbR)$. The next theorem presents a criterion in the case that  $J^{(n)}$ is symmetric.  Note that if $L(s)$ is a right limit of  symmetric banded matrices, then $s(z)\in \bbR$ for $|z|=1$.

 \begin{theorem} \label{th:approx}
 Let $\phi_j^{(n)}=\psi_j^{(n)}$ (and hence $J^{(n)}$ is symmetric). Suppose that there exists a compact $E\subset \bbR$ such that for all $k\in \bbN$ we have 
\beq \label{eq:criterium}
\sum_{j=0}^{n-1}   \int_{\bbR \setminus E} x^k \phi_j^{(n)}(x)\psi_j^{(n)}(x) {\rm d} \mu_n(x) = o(1/n),
 \eeq
 as $n\to \infty$. Then Theorem \ref{th:fluctuationsbio} $($and so Corollary \ref{cor:fluctuationsbio}$)$ also hold with $P$ replaced by  any  real-valued $f \in C^1(\bbR)$ such that $|f(x)| \leq C (1+|x|)^k$ for some $C>0,k\in \bbN$.  
 \end{theorem}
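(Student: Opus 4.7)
My plan is a polynomial-approximation argument layered on top of Corollary~\ref{cor:fluctuationsbio}. I describe the case $Q = \mathrm{id}$; the general case of Theorem~\ref{th:fluctuationsbio} is handled identically after approximating $f$ on a compact neighborhood of $s_Q(\{|w|=1\})$ and using that $(f-P)\circ Q$ is $C^1$ with polynomial growth whenever $f$ is. Fix $\eps > 0$, choose an interval $[a,b] \supset E$ that also contains $s(\{|w|=1\})$, and by polynomial approximation in $C^1$-norm pick $P$ with $\|f - P\|_{C^1([a,b])} < \eps$. Set $g = f - P$. Given the uniform bound $\limsup_{n_j\to\infty} \Var(X_g^{(n_j)}) \leq C\eps^2$, a characteristic-function estimate combined with Corollary~\ref{cor:fluctuationsbio} applied to $P$ will yield the theorem in the limit $\eps \to 0$.

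\textbf{Variance estimate.} Since $\phi_j^{(n)} = \psi_j^{(n)}$, the correlation kernel $K_n(x,y) = \sum_{j=0}^{n-1} \psi_j^{(n)}(x) \psi_j^{(n)}(y)$ is the kernel of a self-adjoint projection, so
\[
\Var(X_g^{(n)}) = \tfrac{1}{2}\iint (g(x) - g(y))^2 K_n(x,y)^2\, {\rm d}\mu_n(x)\, {\rm d}\mu_n(y).
\]
I split the domain as $[a,b]^2 \cup ([a,b]^2)^c$. On the bulk $[a,b]^2$, $(g(x) - g(y))^2 \leq \eps^2 (x-y)^2$, and the resulting double integral equals $\|[M_x, P_n]\|_{\rm HS}^2$. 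Expressed in the orthonormal basis $\{\psi_j^{(n)}\}$, this is $2\sum_{j < n \leq k}(J^{(n)}_{jk})^2$; bandedness of $J^{(n)}$ restricts the sum to a fixed number of entries near the $(n,n)$ block, each bounded uniformly along $\{n_j\}$ by the right-limit hypothesis, so the bulk contributes $O(\eps^2)$ uniformly. On the complement I bound $(g(x) - g(y))^2 \leq 2(g(x)^2 + g(y)^2)$ and use the reproducing identity $\int K_n(x,y)^2\, {\rm d}\mu_n(y) = K_n(x,x)$ together with the Cauchy--Schwarz inequality $K_n(x,y)^2 \leq K_n(x,x) K_n(y,y)$. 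Everything then reduces to sums of the form $\sum_j \int_{[a,b]^c} |x|^l \psi_j^{(n)}(x)^2\, {\rm d}\mu_n(x)$; dominating $|x|^l \leq 1 + x^{2\lceil l/2 \rceil}$ and expanding $g^2 \leq C(1+|x|)^{2k'}$ in integer powers, hypothesis \eqref{eq:criterium} makes each such sum $o(1/n)$, so the tail vanishes as $n_j \to \infty$. Altogether $\limsup_{n_j \to \infty} \Var(X_g^{(n_j)}) \leq C\eps^2$.

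\textbf{Passage to the Gaussian limit.} Writing $X_f^{(n)} - \EE X_f^{(n)} = (X_P^{(n)} - \EE X_P^{(n)}) + (X_g^{(n)} - \EE X_g^{(n)})$, the elementary inequality $|e^{{\rm i}a} - 1| \leq |a|$ together with Cauchy--Schwarz gives
\[
\left| \EE e^{{\rm i}t(X_f^{(n_j)} - \EE X_f^{(n_j)})} - \EE e^{{\rm i}t(X_P^{(n_j)} - \EE X_P^{(n_j)})} \right| \leq |t| \sqrt{\Var(X_g^{(n_j)})}.
\]
Corollary~\ref{cor:fluctuationsbio} identifies the limit of the second characteristic function as $\exp(-t^2\sigma_P^2/2)$ with $\sigma_P^2 = \sum_{k \geq 1} k\, \hat P_k \hat P_{-k}$. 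Because $P \to f$ in $C^1([a,b])$ and $[a,b]$ contains $s(\{|w|=1\})$, $P \circ s \to f \circ s$ in $C^1(\{|w|=1\})$, so $\sigma_P^2 \to \sigma_f^2 := \sum_{k \geq 1} k\, \hat f_k \hat f_{-k}$ with $\hat f_k = \frac{1}{2\pi {\rm i}}\oint f(s(w)) w^k \frac{{\rm d}w}{w}$; the same $C^1$-regularity places $f \circ s$ in $H^1(\{|w|=1\})$ and thereby guarantees $\sigma_f^2 < \infty$. Letting $n_j \to \infty$ first and then $\eps \to 0$ completes the proof.

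\textbf{Main obstacle.} The critical point is the bulk variance bound: bandedness of $J^{(n)}$, combined with uniform boundedness of its entries near the $(n,n)$ block along the subsequence $\{n_j\}$, is exactly what keeps $\|[M_x, P_n]\|_{\rm HS}^2$ at $O(1)$ rather than growing with $n$. Without this structural input a merely uniform approximation of $f$ by $P$ would be far too crude to control $\Var(X_g^{(n)})$, and one would instead need cancellations within $g$ itself.
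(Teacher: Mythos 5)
Your proposal is correct and follows essentially the same route as the paper's proof: the same $C^1$-polynomial approximation on a compact interval containing $E$, the same bulk/tail decomposition of the variance via \eqref{eq:variancedeterminantal} with the tail controlled by $K_n(x,y)^2 \le K_n(x,x)K_n(y,y)$ and \eqref{eq:criterium}, the same bound of the bulk by $(\sup_E|g'|)^2$ times the (uniformly bounded along $\{n_j\}$, by bandedness and the right-limit hypothesis) quantity $\Var X_{e_1}^{(n)} = \|[J^{(n)},P_n]\|_2^2$, the same $|t|\sqrt{\Var}$ comparison of characteristic functions, and the same $C^1$-continuity of the limiting variance $\sigma_P^2 \to \sigma_f^2$ — which the paper quantifies via \eqref{eq:approx:cont1} while you invoke it at the level of $H^{1/2}$-convergence.
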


The proof of the theorem follows by approximating the function $f$ by polynomials, which is why we need the compactness of $E$. Roughly, the criterion \eqref{eq:criterium} says that the point process essentially takes place on a compact set. If $\mu$ has compact support then this holds trivially. However, \eqref{eq:criterium} is also satisfied (with even a stronger order term) for unitary ensembles with a  real analytic potential that grows sufficiently fast at infinity. This can be checked for example by using the asymptotic results in \cite{DKMVZ1,DKMVZ2}. Thus, Theorem \ref{th:approx} extends the results of \cite{Jduke,KS}. Moreover, Theorem \ref{th:approx} also contains classical discrete orthogonal polynomial ensembles that have been of interest in tiling models of planar domain  \cite{BKMM,Jannals2}. As an example we will discuss its consequence to the Hahn ensemble in Subsection 6.2, which corresponds to lozenge tilings of a hexagon. 

\begin{remark}
An important issue in the non-symmetric case is that it is not obvious how one  defines \eqref{eq:cor:coef} for functions $f\in C^1(\bbR)$ since it is not  a priori clear that $s(z)$ is real for $|z|=1$. However, with additional information from the model at hand, it should be possible to extend Theorem \ref{th:approx} and its proof to  non-symmetric cases.
\end{remark}
In the results so far, we have assumed that the right limit is a Laurent matrix. However, we also have a  general result.  Let $J$ be a bounded operator not depending on $n$ and let $J^R$ be a right limit.  For $M\in \bbN$ we define the  truncation $J^R_M$ of $J^R$  by $$
 \left(J_M^R \right)_{kl} = \begin{cases}
 \left(J^R\right)_{kl}, & k,l=-M,\ldots,M\\
 0, & \text{ otherwise.}
 \end{cases}
$$
We also define the operator $P_-$ on $\ell_2(\bbZ)$ as  the projection onto the negative coefficients, i.e.\  
$$
\left(P_- x\right)_r=\left\{ \begin{array}{cc} x_r &  r<0 \\
 0 & \textrm{otherwise.} \end{array} \right. 
$$

\begin{theorem}\label{th:maintheorem}
Assume that $J^R$ is a right limit of $J$ with subsequence $\{n_j\}_j$.  Then for any  polynomial $f$, we have 
\begin{multline}\label{eq:maintheorem}
\lim_{j\to \infty} \EE \left[\exp z \left(  X_f^{(n_j)} -\EE X_f^{(n_j)} \right)\right]\\ = \lim_{M\to \infty}   {\rm e}^{-z \Tr P_{-} f(J^R_M)P_{-} }  \det  \left(I+P_{-} ({\rm e}^{z   f(J^R_M) } -I)P_{-}\right).
\end{multline}
uniformly for $z$ in a sufficiently small neighborhood of the origin. In particular, both limits exist. 
\end{theorem}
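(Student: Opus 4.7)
The starting point is the determinantal representation of the biorthogonal ensemble. Writing $K_n(x,y)=\sum_{j=0}^{n-1}\psi_j(x)\phi_j(y)$ for the correlation kernel, one has the finite-rank Fredholm identity $\EE[e^{zX_f^{(n)}}]=\det_{L^2(\mu)}(I+K_n(e^{zf}-1))$. Iterating the recurrence $x\psi=J\psi$ and using biorthogonality gives $\int \phi_j(x)f(x)\psi_l(x)\,\dmu=(f(J))_{lj}$, which is the transfer identity underlying Lemma~\ref{lem:keylemma}; combined with the mean formula $\EE[X_f^{(n)}]=\Tr P_n f(J)P_n$, this yields
\begin{equation*}
\EE\bigl[e^{z(X_f^{(n)}-\EE X_f^{(n)})}\bigr]=e^{-z\Tr P_n f(J) P_n}\det_{\ell^2(\bbN)}\bigl(I+P_n(e^{zf(J)}-I)P_n\bigr),
\end{equation*}
where $P_n$ is the projection onto $\{0,\ldots,n-1\}$. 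I would then conjugate by the shift $k\mapsto k-n_j$ on $\ell^2(\bbZ)$, which preserves trace and determinant: $P_{n_j}$ becomes the projection $P_-^{(n_j)}$ onto $\{-n_j,\ldots,-1\}$, and $J$ becomes the shifted operator $\hat J^{(n_j)}$ with entries $(\hat J^{(n_j)})_{r,s}=J_{n_j+r,\,n_j+s}$.

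By the definition of right limit, $\hat J^{(n_j)}\to J^R$ in the strong operator topology, with uniform operator-norm bound $\sup_j\|\hat J^{(n_j)}\|\le\|J\|<\infty$, and $P_-^{(n_j)}\to P_-$ strongly; for polynomial $f$ and $z$ in a small disk around the origin, the power-series expansion then gives $e^{zf(\hat J^{(n_j)})}\to e^{zf(J^R)}$ in SOT as well. However, strong operator convergence does not pass through Fredholm determinants. This is exactly what forces the finite truncation $J_M^R$ into the statement: $J_M^R$ is finite rank, so $e^{zf(J_M^R)}-I$ is finite rank and the $M$-truncated right-hand side is an unambiguously defined finite-dimensional determinant.

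The plan is then a ``double approximation''. For fixed $M$, the matrix $\hat J^{(n_j)}$ truncated to the window $\{-M,\ldots,M\}^2$ converges in norm to $J_M^R$, and hence the $M$-truncated analogue of the left-hand side converges, as $j\to\infty$, to the $M$-truncated right-hand side. The truncation error is controlled by the super-exponential off-diagonal decay of $e^{zf(A)}-I$ for banded $A$ with uniform norm bound: if $A$ has bandwidth $w$ and $\|A\|\le C$, then $(A^k)_{rs}=0$ for $|r-s|>kw$, so $|(e^{zA}-I)_{rs}|\le\sum_{k\ge|r-s|/w}(|z|C)^k/k!$, which decays super-exponentially in $|r-s|$ uniformly in such $A$. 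Combined with the standard perturbation estimate $|\det(I+X)-\det(I+Y)|\le\|X-Y\|_1\,e^{1+\|X\|_1+\|Y\|_1}$ on the trace ideal, this delivers both the Cauchy property in $M$ for the right-hand side (so $\lim_M$ exists) and the uniform-in-$j$ approximation of the left-hand side by its $M$-truncation (allowing the $j\to\infty$ and $M\to\infty$ limits to be interchanged). The main obstacle is precisely this uniform-in-$j$ truncation control, which requires the bandedness of $J$ together with the uniform operator-norm bound; uniform convergence for $z$ in a small neighborhood of the origin then follows from Vitali's theorem, since all expressions involved are analytic in $z$ and uniformly bounded on such a disk.
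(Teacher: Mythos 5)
Your setup agrees with the paper's through the reduction to
\[
\EE\bigl[e^{z(X_f^{(n)}-\EE X_f^{(n)})}\bigr]=e^{-z\Tr f(J)P_n}\det\bigl(I+(e^{zf(J)}-I)P_n\bigr),
\]
the shift by $n_j$, and the (correct) observation that strong operator convergence of $\hat J^{(n_j)}$ to $J^R$ is not enough by itself. But the ``double approximation'' you propose breaks down at the very step you flag as the main obstacle. The operator $P_n(e^{zf(J)}-I)P_n$ has trace norm of order $n$: each diagonal entry is $\approx z\,(f(J))_{kk}+O(z^2)$ and these do not decay in $k$, so replacing $J$ by a matrix that agrees with it only on a window of size $M$ near $(n,n)$ changes $P_n(e^{zf(J)}-I)P_n$ by an operator of trace norm $O(n)$, and the perturbation bound $|\det(I+X)-\det(I+Y)|\leq\|X-Y\|_1 e^{1+\|X\|_1+\|Y\|_1}$ returns nothing useful. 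The off-diagonal decay of $e^{zf(A)}-I$, while true, does not localize the determinant. The same difficulty defeats your claimed Cauchy property in $M$ on the right-hand side, since $\|P_-(e^{zf(J^R_M)}-I)P_-\|_1$ grows linearly in $M$.

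The idea you are missing is that the determinant alone is \emph{not} local in $J$ near $(n,n)$; only the combination with the prefactor $e^{-z\Tr f(J)P_n}$ is, and this cancellation is invisible at the level of trace-norm perturbation of the two factors separately. The paper exposes it by working with the cumulant coefficients $C_m^{(n)}(\mathcal B)$ of \eqref{eq:newcumu} applied to $\mathcal B=f(J)$: the subtraction of $\Tr\mathcal B^m P_n$ there converts each term into a trace of a product containing at least two commutators $[P_n,\mathcal B^{l_i}]$, and for banded $\mathcal B$ the commutator $[P_n,\mathcal B]$ has uniformly bounded rank and is supported on a fixed-width window around $n$. This gives both Lemma~\ref{lem:keylemma} (the cumulant $C_m^{(n)}(\mathcal B)$ depends only on entries of $\mathcal B$ within distance $bm$ of $(n,n)$) and the key estimate \eqref{eq:ineqfromBD}, $|C_m^{(n)}(\mathcal B)|\lesssim m^{3/2}e^m\|\mathcal B\|_\infty^{m-2}\|[\mathcal B,P_n]\|_2^2$, with $\|[\mathcal B,P_n]\|_2$ bounded uniformly in $n$. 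These supply exactly the uniform-in-$j$ tail control (over $m>M$) and the $\lim_M$ convergence you wanted but could not obtain from the determinant perturbation estimate; combining with Lemma~\ref{lem:fromctoD} and a normal-families argument finishes the proof. In short: the quantity to control is the Hilbert--Schmidt norm of the commutator $[\mathcal B,P_n]$, not the trace norm of $P_n(e^{zf(J)}-I)P_n$.
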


\begin{remark}
Note that although this theorem is only for non-varying $J$, we have no assumptions on the right-limit, giving a fairly general conclusion. Particularly in view of the the multi-cut results in \cite{P,Shch}, it is  interesting  to  further investigate of the right-hand side of \eqref{eq:maintheorem}, possibly under various assumptions on $J^R$. We will not pursue this in the present paper.  
\end{remark}

\subsection{Applications to orthogonal polynomial ensembles}

We next demonstrate the scope of the applicability of Theorem \ref{th:fluctuationsbio} by formulating a few corollaries for orthogonal polynomial ensembles with measures of compact support. 

Let ${\rm d}\mu(x)=w(x){\rm d}x+{\rm d}\mu_{\textrm{sing}}(x)$ be a probability measure on $\bbR$ with compact support, where $\mu_{\textrm{sing}}$ is the part of $\mu$ that is singular with respect to Lebesgue measure. Let $\{p_j\}_{j=0}^\infty$ be the orthogonal polynomials with respect to $\mu$, namely $\deg p_j=j$ and 
$$
\int_\bbR p_j(x) p_k(x) {\rm d} \mu(x) =\delta_{jk}, 
$$
for $j,k=0,1,\ldots.$  We now consider the orthogonal polynomial ensemble of size $n$, which  is the probability measure on $\bbR^n$ as defined in \eqref{eq:defBO} with the special choice $\phi_j=\psi_j=p_j$.  The recurrence \eqref{eq:biorecur} is in this case given by the well-known three term recurrence relation for orthogonal polynomials,
\begin{equation}\label{eq:oprecur}
\begin{split}
x p_n(x)&=a_{n+1} p_{n+1}(x)+b_{n+1} p_n(x)+a_{n} p_{n-1}(x), \quad n\geq 1, \\
xp_0(x)&=a_1p_1(x)+b_1p_0(x)
\end{split}
\end{equation}
where $a_n>0$ and $b_n \in \bbR$ are coefficients associated with $\mu$ \cite{deift}. Thus, orthogonal polynomial ensembles are clearly a special case of a biorthogonal ensemble with a recursion, where $J$ is a tridiagonal matrix. In this case, $J$ is called \emph{the Jacobi matrix associated with $\mu$}.

\subsubsection{The case of a single interval}

Here is an immediate corollary of Theorem \ref{th:approx}.

\begin{theorem}\label{th:theorem1}
Assume that there exists a subsequence $\{n_{j}\}_j$ such that  
 \beq \label{eq:nevaiclass}
a_{n_j}\to a, \quad b_{n_j} \to b
\eeq 
as $j\to \infty$ for some $a>0$, $b$. Then for any real-valued $f\in C^1(\bbR)$ we have
$$
X_f^{(n_j)} -\EE X_f^{(n_j)} \to N\left(0,\sum_{k \geq 1} k  |\hat f_k|^2 \right), \qquad \text{ as } j \to \infty,
$$
in distribution, where   the coefficients $\hat f_k$ are defined as 
$$ 
 \hat f_k=\frac{1}{2\pi} \int_0^{2\pi}  f(2 a \cos  \theta+b) {\rm e}^{-{\rm i} k \theta} {\rm d}\theta,
$$
for $k\geq 1$. 
\end{theorem}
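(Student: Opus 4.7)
The plan is to recognize this as an almost immediate application of Theorem \ref{th:approx} (with $Q = \mathrm{id}$), once we identify the right limit of the Jacobi matrix and verify the compactness-type criterion \eqref{eq:criterium}.

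First, since $\mu$ does not depend on $n$, we have $J^{(n)} = J$ for all $n$, a fixed tridiagonal (Jacobi) matrix with entries $J_{k,k} = b_k$ and $J_{k,k+1} = J_{k+1,k} = a_{k+1}$. Under the assumption $a_{n_j} \to a$ and $b_{n_j} \to b$, the shifted matrices $J_{n_j + r,\, n_j + s}$ converge entrywise to the two-sided constant tridiagonal matrix with diagonal $b$ and off-diagonals $a$; this is the Laurent matrix $L(s)$ with symbol
\[
s(w) = a w + b + a w^{-1}.
\]
Boundedness of this right limit as an operator on $\ell_2(\bbZ)$ is clear. Thus the hypothesis of Corollary \ref{cor:fluctuationsbio} is satisfied along $\{n_j\}_j$.

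Next I would verify the criterion \eqref{eq:criterium} required to promote the conclusion of Corollary \ref{cor:fluctuationsbio} from polynomials to $C^1$ functions via Theorem \ref{th:approx}. Because $\mu$ is compactly supported, take $E = \mathrm{supp}(\mu)$, which is a fixed compact set. Then for every $k$ and every $j$, $\int_{\bbR \setminus E} x^k p_j(x)^2\, d\mu(x) = 0$, so \eqref{eq:criterium} holds trivially with the left-hand side identically zero. The symmetry condition $\phi_j^{(n)} = \psi_j^{(n)} = p_j$ is built into the orthogonal polynomial ensemble setting, so Theorem \ref{th:approx} applies and yields, for every real $f \in C^1(\bbR)$,
\[
X_f^{(n_j)} - \EE X_f^{(n_j)} \to N\!\left(0, \sum_{k=1}^\infty k\, \hat f_k\, \hat f_{-k}\right)
\]
in distribution, where $\hat f_k = \frac{1}{2\pi i}\oint_{|w|=1} f(s(w))\, w^k\, \frac{dw}{w}$.

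Finally, I would identify this variance with the one in the statement by a direct Fourier computation. Parametrizing $w = e^{i\theta}$ gives $s(e^{i\theta}) = 2a\cos\theta + b$ and $dw/w = i\, d\theta$, so
\[
\hat f_k = \frac{1}{2\pi}\int_0^{2\pi} f(2a\cos\theta + b)\, e^{ik\theta}\, d\theta.
\]
Because $f$ is real and $\theta \mapsto 2a\cos\theta + b$ is even, $\hat f_k$ equals the coefficient $\hat f_k$ of the theorem (up to a complex conjugation that is washed out by realness/evenness), and in particular $\hat f_k \hat f_{-k} = |\hat f_k|^2$. This turns the variance into $\sum_{k\ge 1} k |\hat f_k|^2$, completing the proof.

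There is no serious obstacle: once Theorem \ref{th:approx} is granted, the only real work is the identification of the right limit from \eqref{eq:nevaiclass} and the trivial verification of \eqref{eq:criterium} from compact support. The only place a small care is needed is the Fourier-coefficient bookkeeping to match the two formulations of the variance.
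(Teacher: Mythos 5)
Your proposal is correct and follows the same route as the paper, which introduces Theorem~\ref{th:theorem1} explicitly as ``an immediate corollary of Theorem \ref{th:approx}''. You correctly identify the right limit $L(s)$ with $s(w)=aw+b+aw^{-1}$, observe that the compactness criterion \eqref{eq:criterium} holds trivially with $E=\mathrm{supp}(\mu)$, and handle the Fourier bookkeeping (the sign of $k$ is absorbed by the evenness of $\theta\mapsto 2a\cos\theta+b$ and realness of $f$ makes $\hat f_k\hat f_{-k}=|\hat f_k|^2$) exactly as needed.
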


A measure whose recurrence coefficients satisfy \eqref{eq:nevaiclass} is said to be in the Nevai class for the interval $[b-2a,b+2a]$.  Properties of this class of measures have been  intensively studied in the orthogonal polynomial literature (see \cite{SimonSzego} and references therein). In particular, if $\mu$ is in Nevai class for $[b-2a,b+2a]$ then $\sigma_{\textrm{ess}}(\mu)=[b-2a,b+2a]$, where $\sigma_{\textrm{ess}}(\mu)$   is the support of the measure $\mu$ with isolated points removed. 

To formulate a corollary that does not explicitly involve conditions on the recurrence coefficients, the following result follows immediately from Theorem \ref{th:theorem1} and the celebrated Denisov-Rakhmanov Theorem \cite[Theorem 1.4.2]{SimonSzego} (the original results comprising the Denisov-Rakhmanov Theorem are in \cite{Den, R1, R2}).

\begin{theorem} \label{th:ope-single-interval}
Suppose that $\sigma_{\textrm{ess}}(\mu)=[\gamma,\delta]$, an interval, and suppose further that $w(x)>0$ for Lebesgue a.e.\ $x \in [\gamma,\delta]$. Then for any real-valued $f\in C^1(\bbR)$ we have
$$
X_f^{(n)} -\EE X_f^{(n)} \to N\left(0,\sum_{k \geq 1} k  |\hat f_k|^2 \right), \qquad \text{ as } n \to \infty,
$$
in distribution, where   the coefficients $\hat f_k$ are defined as 
$$
 \hat f_k=\frac{1}{2\pi} \int_0^{2\pi}  f\left(\frac{\delta-\gamma}{2} \cos  \theta+\frac{\delta+\gamma}{2} \right) {\rm e}^{-{\rm i} k \theta} {\rm d}\theta,
$$
for $k\geq 1$.
\end{theorem}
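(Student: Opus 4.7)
The plan is to obtain this as a direct corollary of Theorem~\ref{th:theorem1} by invoking the Denisov--Rakhmanov theorem to convert the spectral-theoretic hypothesis on $\mu$ into the required convergence of the recurrence coefficients.

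First, I would apply the Denisov--Rakhmanov theorem in the form cited in \cite[Theorem 1.4.2]{SimonSzego}: for a compactly supported probability measure $\mu$ on $\bbR$ with $\sigma_{\textrm{ess}}(\mu)=[\gamma,\delta]$ and $w(x)>0$ for Lebesgue a.e.\ $x\in[\gamma,\delta]$, the Jacobi parameters satisfy
\[
a_n\to\frac{\delta-\gamma}{4},\qquad b_n\to\frac{\delta+\gamma}{2}
\]
as $n\to\infty$ (note that the compact support hypothesis on $\mu$ is built into the standing assumptions of Subsection~2.2, so it is available here without further argument).

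Second, this convergence is precisely the hypothesis \eqref{eq:nevaiclass} of Theorem~\ref{th:theorem1}, satisfied along the full sequence $\{n\}_n$ (rather than merely a subsequence), with the values $a=(\delta-\gamma)/4$ and $b=(\delta+\gamma)/2$. Plugging these into the expression $2a\cos\theta+b$ that appears in Theorem~\ref{th:theorem1} recovers the argument $\frac{\delta-\gamma}{2}\cos\theta+\frac{\delta+\gamma}{2}$ that appears in the statement of Theorem~\ref{th:ope-single-interval}, so the Fourier coefficients $\hat f_k$ agree. Applying Theorem~\ref{th:theorem1} then yields the claimed CLT for any real-valued $f\in C^1(\bbR)$.

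There is essentially no obstacle: the entire work is done by Theorem~\ref{th:theorem1} (whose proof in turn relies on the symmetric-case approximation Theorem~\ref{th:approx} and the compactness of the support, which provides the cutoff set $E$ required by \eqref{eq:criterium}) and by Denisov--Rakhmanov. The only verification to record is the arithmetic identification of $(a,b)$ with the endpoints of $[\gamma,\delta]$; no additional estimates on $\mu$ or $\mu_{\textrm{sing}}$ are needed, since Theorem~\ref{th:theorem1} is insensitive to them once the recurrence coefficients converge.
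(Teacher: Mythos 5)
Your proof is correct and matches the paper's own argument exactly: both invoke the Denisov--Rakhmanov theorem (Theorem~1.4.2 of \cite{SimonSzego}) to convert the hypotheses on $\sigma_{\textrm{ess}}(\mu)$ and $w$ into the Nevai-class convergence $a_n\to(\delta-\gamma)/4$, $b_n\to(\delta+\gamma)/2$, and then apply Theorem~\ref{th:theorem1}. The arithmetic identification $2a\cos\theta+b=\frac{\delta-\gamma}{2}\cos\theta+\frac{\delta+\gamma}{2}$ is the only remaining check, as you note.
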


\begin{proof}[Proof of Theorem \ref{th:ope-single-interval}]

Theorem 1.4.2 in \cite{SimonSzego} says (after scaling and shifting) that the conditions of Theorem \ref{th:ope-single-interval} imply that $\mu$ is in Nevai class for the interval $[\gamma,\delta]$, namely that the recursion coefficients satisfy $a_n \rightarrow \frac{\delta-\gamma}{4}$ and $b_n \rightarrow\frac{\delta+\gamma}{2}$ as $n \rightarrow \infty$. Thus, the result follows from Theorem \ref{th:theorem1}. 
\end{proof}

Note that the conditions of Theorem \ref{th:ope-single-interval} are phrased only in terms of properties of the underlying measure. In particular, the orthogonal polynomial ensemble for any absolutely continuous (w.r.t.\ Lebesgue) measure, such that the support of the weight is an interval, satisfies a Central Limit Theorem. No other assumption on the weight (e.g.\ continuity) is needed. 
Moreover, the Nevai class for $[\gamma,\delta]$ contains also many measures which are purely singular with respect to Lebesgue measure. One family of such examples is described in Section \ref{sec:OPEexamples} below. By Theorem \ref{th:theorem1}, the orthogonal polynomial ensemble associated with any such measure obeys a Central Limit Theorem.

It is important to note that $\sigma_{\textrm{ess}}(\mu)=[\gamma,\delta]$, or even $\supp(\mu)=[\gamma,\delta]$ are not sufficient conditions to ensure a Central Limit Theorem for the associated orthogonal polynomial ensemble. In Section \ref{sec:OPEexamples} we describe two examples with $\supp(\mu)=[\gamma,\delta]$ such that the associated orthogonal polynomial ensembles have different Gaussian limits along different subsequences. Example \ref{ex:example2}, in fact, has a continuum of such limits.

\subsubsection{The case of several intervals}

In the case that the support of a measure is a disjoint union of intervals, it is well known that the corresponding orthogonal polynomial ensemble does not necessarily satisfy a Central Limit Theorem in general \cite{P, Shch}. There are cases, however, where we can prove a Central Limit Theorem for the linear statistics of some particular functions. 

Let $S=[\gamma_1,\delta_1]\cup[\gamma_2,\delta_2]\cup\ldots\cup[\gamma_\ell, \delta_\ell]$ be a union of intervals. The following is part of Theorem 5.13.8 of \cite{SimonSzego}:

\begin{proposition} \label{prop:simonPer}
Let $p \in \{1,2,\ldots\}$ and let $\rho_S$ be the equilibrium measure of $S$ $($for the logarithmic potential$).$ Assume that for any $1 \leq j \leq \ell$, $\rho_S(\left(\gamma_j,\delta_j \right))$ is rational with $p \rho_S(\left(\gamma_j,\delta_j \right)) \in \mathbb{Z}$. 
Then there is a polynomial $\Delta$ of degree $p$, with real coefficients, such that 
\beq \no
\Delta^{-1}([-2,2])=S.
\eeq
\end{proposition}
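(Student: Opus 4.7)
The plan is to construct $\Delta$ from the complex Green's function of $S$, exploiting the rationality hypothesis to secure single-valuedness. Begin with
$$g(z) = \int \log(z-x)\,{\rm d}\rho_S(x),$$
which is analytic but multi-valued on $\bbC\setminus S$; its period around the $j$-th band is $2\pi{\rm i}\,\rho_S((\gamma_j,\delta_j))$. Because $p\,\rho_S((\gamma_j,\delta_j))\in\bbZ$ for every $j$, the function
$$W(z)=\mathrm{cap}(S)^{-p}\exp\bigl(p\,g(z)\bigr)$$
is single-valued on $\bbC\setminus S$. Standard potential theory for the regular compact set $S$ then gives $|W|\equiv 1$ on $S$, $|W|>1$ on $\bbC\setminus S$ (by the strict maximum principle applied to the equilibrium potential), and $W(z)\sim (z/\mathrm{cap}(S))^{p}$ as $z\to\infty$.

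Next, set $\Delta(z)=W(z)+1/W(z)$. The boundary values of $W$ from above and below each band satisfy $W^{+}W^{-}=1$ (Schwarz reflection across the band, using $|W|=1$), so $1/W^{+}=W^{-}$ and hence $\Delta^{+}=\Delta^{-}$ on $S$. Morera's theorem then continues $\Delta$ analytically across $S$, so $\Delta$ is entire; since $W$ has a pole of order $p$ at infinity while $1/W$ is bounded there, $\Delta$ is a polynomial of degree exactly $p$. Its coefficients are real because $S\subset\bbR$ forces $\overline{W(\bar z)}=W(z)$, and hence $\overline{\Delta(\bar z)}=\Delta(z)$.

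To conclude, verify that $\Delta^{-1}([-2,2])=S$. On $S$ we have $\Delta=W+\overline W=2\,\Real W\in[-2,2]$, giving $S\subset\Delta^{-1}([-2,2])$. Conversely, if $\Delta(z)\in[-2,2]$ for some $z\notin S$, then writing $\Delta(z)=2\cos\theta$ with $\theta\in\bbR$ turns $W(z)+W(z)^{-1}=2\cos\theta$ into $W(z)={\rm e}^{\pm{\rm i}\theta}$, so $|W(z)|=1$, contradicting $|W|>1$ off $S$.

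The main obstacle is the careful setup of $g$: identifying its periods as precisely $2\pi{\rm i}\,\rho_S((\gamma_j,\delta_j))$ so that the rationality hypothesis delivers single-valuedness of $W$, and establishing the strict inequality $|W|>1$ on $\bbC\setminus S$. Once these potential-theoretic facts are in place, the Schwarz-reflection construction of $\Delta$ and the identification of its preimage set are essentially algebraic.
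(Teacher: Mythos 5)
Your proof is correct and takes a genuinely different route from the paper, which does not prove the proposition at all: it cites Theorem 5.13.8 of Simon's book and only sketches the construction of $\Delta$ via the isospectral torus of $p$-periodic Jacobi matrices (minimal Herglotz functions on the two-sheeted surface over $S$, their associated $p$-periodic Jacobi matrices, and the common trace of the $p$-step transfer matrices). Your construction is the classical potential-theoretic one: you build $\Delta = W + W^{-1}$ with $W = \mathrm{cap}(S)^{-p}\exp(p\,g)$, where the rationality hypothesis is exactly what makes the periods $2\pi{\rm i}\,p\,\rho_S((\gamma_j,\delta_j))$ of $p\,g$ lie in $2\pi{\rm i}\,\bbZ$, hence $W$ single-valued; the modulus identity $|W| = \exp(p\,g_S(\cdot,\infty))$ gives $|W|=1$ on $S$ and $|W|>1$ off $S$ (regularity of $S$ and the strong minimum principle for the Green's function), and the reflection relation $W^- = \overline{W^+}$ plus $|W^+|=1$ yields $W^+W^-=1$, so $\Delta$ continues across $S$ to an entire function with a degree-$p$ pole at infinity, i.e.\ a real polynomial of degree $p$, whose preimage of $[-2,2]$ is forced to be $S$ by the strict inequality $|W|>1$ off $S$. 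The two constructions yield the same polynomial --- the trace of the transfer matrix is the discriminant of the periodic Jacobi operator, which by Floquet theory equals $W + W^{-1}$ --- so nothing downstream is affected; your argument is more self-contained and elementary, avoiding the Jacobi-matrix and Riemann-surface machinery entirely, while the paper's sketch keeps the link to the isospectral torus that it later exploits through the magic formula $\Delta_S(J^R)=\mathbb{S}^p+\mathbb{S}^{-p}$.
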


The polynomial $\Delta$ is constructed roughly as follows. The set $S$ defines a two-sheeted Riemann surface by taking two copies of $\mathbb{C} \setminus S$ and gluing them along $S$. One considers all Herglotz functions that satisfy certain regularity conditions (called \emph{minimal} Herglotz functions in \cite{SimonSzego}). This family forms a torus in a natural way and each function corresponds to a one sided Jacobi matrix, as its $m$-function. The rationality of the equilibrium measure implies that each of these matrices is a periodic Jacobi matrix of period $p$. It turns out that the traces of the $p$'th transfer matrices for all these $p$-periodic Jacobi matrices are all equal. These traces are the polynomial $\Delta$. For the minimal $p$ satisfying the conditions of Proposition \ref{prop:simonPer}, we denote the polynomial $\Delta$, associated with such $S$, by $\Delta_S$.

\begin{theorem} \label{th:ope-multi-interval}
Suppose that ${\rm d}\mu(x)=w(x){\rm d}x+{\rm d}\mu_{\textrm{sing}}(x)$ is a probability measure on $\bbR$ with compact support such that 
\beq \no
\sigma_{\textrm{ess}}(\mu)=[\gamma_1,\delta_1]\cup[\gamma_2,\delta_2]\cup\ldots\cup[\gamma_\ell, \delta_\ell]\equiv S.
\eeq 
Assume that the equilibrium measure $\rho_S$ satisfies the conditions of Proposition \ref{prop:simonPer} with minimal integer $p$. Assume further that $w(x)>0$ for Lebesgue a.e.\ $x \in S$. Then for any real-valued $f\in C^1(\bbR)$ we have
$$
X_{f\circ \Delta_S}^{(n)} -\EE X_{f\circ \Delta_S}^{(n)} \to N\left(0,\sum_{k \geq 1} k  |\hat f_k^p|^2 \right), \qquad \text{ as } n \to \infty,
$$
in distribution, where   the coefficients $\hat f_k^p$ are defined as 
\beq \label{eq:coeffPer}
\hat f_k^p=\frac{1}{2\pi} \int_0^{2\pi}  f(2 \cos p  \theta) {\rm e}^{-{\rm i} k \theta} {\rm d}\theta
\eeq
\end{theorem}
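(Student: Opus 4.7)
The strategy is to show that for every right limit $J^R$ of the Jacobi matrix $J$ of $\mu$, the operator $\Delta_S(J^R)$ is the \emph{same} Laurent matrix $L(s)$ with symbol $s(w) = w^p + w^{-p}$, and then invoke Theorem \ref{th:approx} with $Q = \Delta_S$. Compactness of $\supp \mu$ makes $J$ bounded, so every subsequence admits a further subsequence along which $J$ has a right limit. Condition \eqref{eq:criterium} of Theorem \ref{th:approx} is trivial because $\psi_j = \phi_j = p_j$ are $n$-independent and $E = \supp \mu$ is compact, forcing the integral to vanish identically. Hence, once we establish that all right limits produce the same Laurent matrix for $\Delta_S(J^R)$, Theorem \ref{th:approx} combined with Theorem \ref{th:fluctuationsbio} yields the Gaussian limit with the asserted variance along every convergent subsequence; since the limit law is the same for all such subsequences, convergence holds along the full sequence.

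To identify the right limits, invoke the multi-interval analogue of the Denisov--Rakhmanov theorem: under the assumptions $\sigma_{\textrm{ess}}(\mu) = S$ and $w(x) > 0$ for a.e.\ $x \in S$, every right limit $J^R$ lies in the isospectral torus $\mathcal{T}_S$ of $p$-periodic (two-sided) Jacobi matrices with discriminant $\Delta_S$. This is a consequence of the spectral-theoretic results collected in \cite{SimonSzego} (closely related to Remling's theorem on the preservation of absolutely continuous spectrum under right limits). The Damanik--Killip--Simon \emph{magic formula} then says that a bounded two-sided Jacobi matrix $\widetilde J$ lies in $\mathcal{T}_S$ if and only if $\Delta_S(\widetilde J) = S^p + (S^*)^p$, where $S$ denotes the bilateral right shift on $\ell^2(\bbZ)$. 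Since $S^p + (S^*)^p$ is precisely the Laurent matrix with symbol $s(w) = w^p + w^{-p}$, this furnishes the desired identification.

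Finally, compute the variance. With $Q = \Delta_S$ and symbol $s(w) = w^p + w^{-p}$, parametrize $w = {\rm e}^{{\rm i}\theta}$ in the formula of Theorem \ref{th:fluctuationsbio} to obtain
\[
\widehat{(f \circ \Delta_S)}_k \;=\; \frac{1}{2\pi {\rm i}} \oint_{|w|=1} f(w^p + w^{-p}) w^k \frac{{\rm d} w}{w} \;=\; \frac{1}{2\pi} \int_0^{2\pi} f(2 \cos p\theta) \, {\rm e}^{{\rm i} k \theta} \, {\rm d}\theta,
\]
so that $\widehat{(f \circ \Delta_S)}_{-k} = \hat f_k^p$. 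Using that $f$ is real, whence $\hat f_{-k}^p = \overline{\hat f_k^p}$, the variance $\sum_{k \geq 1} k \, \widehat{(f \circ \Delta_S)}_k \, \widehat{(f \circ \Delta_S)}_{-k}$ reduces to $\sum_{k \geq 1} k \, |\hat f_k^p|^2$. The main obstacle is the spectral-theoretic step of identifying all right limits with the isospectral torus and invoking the magic formula; the rest is a direct application of the machinery already developed in the earlier sections.
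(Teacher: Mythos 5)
Your proposal is correct and follows essentially the same route as the paper's proof: apply the Damanik--Killip--Simon multi-interval analogue of Denisov--Rakhmanov to place all right limits of $J$ in the isospectral torus $\mathcal{T}_S$, use the magic formula $\Delta_S(J^R)=\mathbb{S}^p+\mathbb{S}^{-p}$ to conclude that $\Delta_S(J)$ has the unique right limit $L(w^p+w^{-p})$, and then invoke Theorem \ref{th:approx} with $Q=\Delta_S$ (which applies since $S$ is compact). The only cosmetic difference is that the paper cites \cite[Theorem 1.2]{dks} directly for the right-limit statement whereas you attribute it to Remling's theorem via \cite{SimonSzego}; these are the same circle of ideas, and your explicit handling of subsequential compactness and the variance computation is a helpful but routine elaboration.
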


\begin{proof}
Let $J$ be the Jacobi matrix corresponding to $\mu$. By \cite[Theorem 5.13.8]{SimonSzego} the condition on $\rho_S$ implies that $S$ is the spectrum of a two sided $p$-periodic Jacobi matrix, $J_0$. By Theorem 1.2 of \cite{dks}, the conditions 
\beq \no
\sigma_{\textrm{ess}}(\mu)=S
\eeq
and
\beq \no
w(x)>0 \quad \textrm{a.e.} x \in S
\eeq
imply that all right limits, $J^R$ of $J$ are in the isospectral torus of $J_0$. Namely, if $J^R$ is a right limit of $J$ then $J^R$ is $p$-periodic, with spectrum $S$, and by Theorem 3.1 of \cite{dks} 
\beq \label{eq:magicformula}
\Delta_S(J^R)=\mathbb{S}^p+\mathbb{S}^{-p}
\eeq
where $\mathbb{S}$ is the shift operator (this is what \cite{dks} call `the magic formula'). 

Since $\Delta_S$ is a polynomial, $\Delta_S(J)$ is a banded matrix and if  $J^R$ is a right limit of $J$ along the sequence $n_j$ then $\Delta_S\left(J^R \right)$ is a right limit of $\Delta_S(J)$ along the same sequence. Together with \eqref{eq:magicformula}, this implies that the \emph{single} right limit of $\Delta(J)$ is in fact $\mathbb{S}^p+\mathbb{S}^{-p}$, a Toeplitz matrix. Theorem \ref{th:approx} ($S$ is compact) finishes the proof. 
\end{proof}

\subsection{Overview of the rest of the paper}
The rest of the paper is organized as follows. Section 3 presents some preliminaries from operator theory and the theory of determinantal point processes. Section 4 sets the stage for the proofs of Theorems  \ref{th:fluctuationsbio}, \ref{th:approx} and \ref{th:maintheorem} by studying the asymptotics of a matrix version of the formula for the cumulants of a linear statistic. The proofs of  \ref{th:fluctuationsbio}, \ref{th:approx} and \ref{th:maintheorem} are given in Section 5, and Section 6 presents examples for applications of our results. Subsection 6.1 deals with orthogonal polynomial ensembles whose associated measure is compactly supported,  Subsection 6.2 contains a discussion on the discrete Hahn ensemble and the relation to lozenge tilings of a hexagon, whereas Subsection 6.3 presents the application of our results to the two matrix model. 

\section{Preliminaries}
\subsection{Preliminaries from operator theory}\label{sec:prel}
For a compact operator $A$ on a separable Hilbert space we denote  the singular values by $\sigma_j(A)$ (we recall that the singular values are the  square roots of the eigenvalues of the compact self-adjoint operator $A^*A$). We then define the trace norm $\|\cdot\|_1$ and Hilbert-Schmidt norm  $\|\cdot\|_2$ by 
$$
\|A\|_p^p =\sum_j \sigma_j(A)^p, \qquad p=1,2. 
$$
We also denote the operator norm by $\|\cdot\|_\infty$.  If $\|A\|_1 < \infty$ we say that $A$ is of trace class and if $\|A\|_2<\infty$ we say that $A$ is Hilbert-Schmidt.  

The following identities are standard (see for example \cite{Simon} for more details).
\begin{enumerate}
\item $|\Tr A| \leq \|A\|_1$
\item $\|A B\|_p\leq \|A\|_p \|B\|_\infty$ for $p=1,2$.
\item $\|A B\|_1\leq \|A\|_2\|B\|_2$
\end{enumerate}

We also need the notion of Fredholm determinant. If $A$ is a trace class operator we can define the Fredholm determinant 
$$\det (1+A) =\prod_j (1+\lambda_j),$$
where $\lambda _j$ are the eigenvalues of $A$. We then have the following identity
\beq\label{eq:Fredcont}| \det (1+A)-\det (1+B)|\leq \|A-B\|_1 \exp ( \|A||_1+ \|B\|_1+1),
\eeq
for any two trace class operators $A,B$ (cf. \cite[Th. 5.2]{GK}).

Finally, we recall an identity by Ehrhardt \cite{E} that we will use, which is a generalization of the Helton-Howe-Pincus formula. If $A,B$ are bounded operators such that the commutator $[A,B]=AB-BA$ is trace class then
\beq\label{eq:HH}
\det {\rm e}^{-A} {\rm e}^{A+B} {\rm e}^{-B} =\exp \left(-\frac{1}{2} \Tr [A,B]\right).
\eeq
Note that if $A$ or $B$ is trace class then, $\Tr [A,B] =0$, but for general bounded operators, the trace does not necessarily vanish. Part of that statement is that $ {\rm e}^{-A} {\rm e}^{A+B} {\rm e}^{-B}-I$ is trace class. In fact, in \cite{E} it was proved that 
\beq \label{eq:erh}
\| {\rm e}^{-A} {\rm e}^{A+B} {\rm e}^{-B}-I\|_1\leq \|[A,B]\|_1 \phi \left( \|A\|_\infty,\|B\|_\infty \right),
\eeq
for some function $\phi$.

\subsection{Preliminaries from the theory of determinantal point processes}
It is well-known \cite{Bor} that a biorthogonal ensemble as defined in \eqref{eq:defBO} is a determinantal point process with kernel $K_n(x,y)$ defined by 
\beq\label{eq:defKn}
K_n(x,y)= \sum_{j=0}^{n-1} \psi_j(x) \phi_j(y),
\eeq
By definition, this means that for any bounded measurable function $g$ we have 
\begin{multline} \label{eq:detproc}
\EE \prod_{j=1}^n (1+g(x_j)) = 1+ \sum_{k=1}^n  \frac{1}{k!}\underset{k \text{ times}}{\underbrace{\int \cdots \int }} \det \left(K_n(x_i,x_j)\right)_{i,j=1}^k \\ \times g(x_1) \cdots g(x_k)  {\rm d} \mu(x_1) \cdots {\rm d}\mu(x_k).
\end{multline}
For more details and background on general determinantal point processes we refer to \cite{BorDet,HKPV,J4,K,L,Sosh,Sosh2}.  For what follows it is important to note that the expansion at the right-hand side of \eqref{eq:detproc} is a Fredholm determinant and we can write 
$$
\EE \prod_{j=1}^n (1+g(x_j)) = \det (1+g K_n)_{\mathbb L_2(\mu)}, 
$$
where $gK$ stands for the integral operator on $\mathbb L_2(\mu)$ with kernel $g(x) K_n(x,y)$. Note that by taking $g= {\rm e}^{z f}-1$ we see that for any linear statistic we have 
\beq \label{eq:detproc2}
\EE [ \exp (z X^{(n)}_f)]  = \det \left(1+\left( {\rm e}^{z f}-1 \right) K_n \right)_{\mathbb L_2(\mu)}, 
\eeq
for any $z\in \bbC$. 

Moreover, we have the following useful identities
$$
\EE X^{(n)}_f= \int f(x)K_n(x,x)  {\rm d}\mu(x)
$$
and
$$
\Var X^{(n)}_f= \int   f(x)^2 K_n(x,x) {\rm d} \mu(x)- \iint f(x) f(y) K_n(x,y) K_n(y,x) {\rm d}\mu(x) {\rm d} \mu(y).
$$
By the reproducing property, ($\int K_n(x,y)K_n(y,x) {\rm d} \mu(y) =K_n(x,x)$), we can rewrite the latter in the following way
\beq \label{eq:variancedeterminantal}
\Var X^{(n)}_f=  \frac{1}{2}\iint (f(x)- f(y))^2 K_n(x,y)K_n(y,x) {\rm d}\mu(x) {\rm d} \mu(y).
\eeq
Both the mean and the variance can also be written in operator form
\beq \no
\begin{split}
\EE X^{(n)}_f&= \Tr f K_n,\\
\Var X^{(n)}_f & =- \Tr [f,K_n] ^2,
\end{split}
\eeq
where $[f,K_n]$ is the commutator of $f$ (as a multiplication operator) and $K_n$.

\subsection{Cumulants}\label{sec:cumul}

The \emph{cumulants} $\mathcal C_m^{(n)}(f)$ for the linear statistic $X_f^{(n)}$ are defined by 
$$
\log \EE \left[\exp  z X_f^{(n)} \right]= \sum_{m=1}^\infty \mathcal  C_m^{(n)}(f) z^m.
$$
Note that by definition,  the first two cumulants are $\mathcal C_1^{(n)}(f)=\EE X^{(n)}_f$ and $\mathcal C_2^{(n)}(f)=\frac12\Var X_f^{(n)}$. Moreover, it is clear that the cumulants determine the moments for the linear statistic entirely. Hence, if one computes the limiting behavior of the cumulants we also obtain the limiting behavior of the moments.

By \eqref{eq:detproc2} we also have 
\begin{multline} \no
\log \EE \left[\exp  z X_f^{(n)} \right]=\log \det (1+({\rm e}^{z f}-1) K_n)_{\mathbb L_2(\mu)}\\=
\exp \Tr  (({\rm e}^{z f}-1) K_n)_{\mathbb L_2(\mu)}.
\end{multline}
We trust there is no confusion after dropping  the lower index ${\mathbb L_2(\mu)}$ which we will do from now on. The latter expression will provide us with a standard expression for the cumulants in terms of the kernel $K_n$. Indeed, by expanding $\log ( 1+ ({\rm e}^{z  f}-1)K_n)$ in powers of $z$ and changing the order of summation we have
\begin{multline}\no
\log \EE \left[\exp  z X_f^{(n)} \right] = 
\sum_{j=1}^\infty \frac{(-1)^j}{j} \Tr\left( ({\rm e}^{z  f} -1)K_n\right)^j\\
= \sum_{j=1}^\infty \frac{(-1)^j}{j} \sum_{{ l_1\geq 1,\ldots, l_j\geq 1}} z^{l_1+\cdots + l_j}\frac{\Tr  f^{l_1} K_n\cdots  f^{l_j} K_n}{l_1! \cdots l_j!}\\
= \sum_{j=1}^\infty \frac{(-1)^j}{j} \sum_{m=j}^\infty z ^m \sum_{\overset{l_1+ l_2+ \cdots l_j=m}{ l_i\geq 1}}\frac{\Tr f^{l_1} K_n\cdots   f^{l_j} K_n}{l_1! \cdots l_j!}\\
= \sum_{m=1}^\infty z ^m  \sum_{j=1}^m\frac{(-1)^j}{j}  \sum_{\overset{l_1+ l_2+ \cdots l_j=m}{ l_i\geq 1}}\frac{\Tr f^{l_1} K_n\cdots   f^{l_j} K_n}{l_1! \cdots l_j!}.
\end{multline}
for sufficiently small $z$. Hence
\beq\label{eq:cumulantexpression0}
\mathcal C^{(n)}_m(f) = \sum_{j=1}^m \frac{(-1)^j}{j} \sum_{\overset{l_1+ l_2+ \cdots l_j=m}{ l_i\geq 1}} \frac{\Tr f^{l_1} K_n\cdots 
 f^{l_j} K_n}{l_1! \cdots l_j!}.
\eeq
This expression for the cumulants will be essential in our analysis.
\subsection{Some results on Toeplitz operators}

We end this section with recalling some definitions and  results from the theory of Toeplitz matrices/operators. For a general reference we refer to~\cite{BStoeplitz}.

Let $s(w)=\sum_{j=-q}^p s_j w^j$ be a Laurent polynomial for some $p,q\in \bbN$. Then we can associate with $s$ the Laurent operator $L(s)$, the Toeplitz operator $T(s)$ and the Hankel operator $H(s)$, by the infinite matrices
\begin{align}\no
\left( L(s) \right)_{j,k}=s_{j-k},& \qquad j,k\in \bbZ\\
\left( T(s) \right)_{j,k}=s_{j-k},& \qquad  j,k=1,2,\ldots\\
\left( H(s) \right)_{j,k}=s_{j+k-1},& \qquad  j,k=1,2,\ldots
\end{align}
Of course the definitions extend to more general symbols $s$, but in our situations the symbol $s$ will always be a Laurent polynomial so we restrict ourselves to that case. Note that  for Laurent polynomials we have that $L(s)$ and $T(s)$ are banded and $H(s)$ is of finite rank.

For any two symbols $s_1,s_2$ we have
\begin{align} \no
L(s_1s_2)&=L(s_1)L(s_2)\\
T(s_1s_2) &=T(s_1) T(s_2) +H(s_1) H(\tilde s_2),
\end{align}
where $\tilde s_2(z)=s_2(1/z)$. The second identity has two important   corollaries. First, 
\beq \label{eq:comtopli}
[T(s_1),T(s_2)]= -H(s_1) H(\tilde s_2)+H(s_2) H(\tilde s_1)\eeq
Second, if we split a given symbol by $s=s_++s_-$ with $s_+(w) =\sum_{j\geq 0} s_j w^j$, the projection onto the analytic part, then 
$$T(s_\pm ) T(s_\pm)= T(s_\pm^2)$$
which implies that for exponentiated symbols  
\beq \no
T({\rm e}^{s_\pm })= {\rm e}^{T(s_\pm)}.
\eeq
By combining this with  \eqref{eq:HH} we find the following identity
\begin{multline}\no
\det T({\rm e}^{-s_+}){\rm e}^{T(s)} T({\rm e}^{-s_-})= \det {\rm e}^{-T(s_+)}{\rm e}^{T(s)}  {\rm e}^{-T(s_-)}\\=\exp  \left(-\frac12\Tr [T(s_+),T( s_-)]\right)=\exp \frac12  \Tr H(s_+) H(\widetilde{( s_-)})\\=\exp\left(\frac{1}{2}\sum_{k\geq 1} k s_k s_{-k} \right),
\end{multline}
see also Example 3.1 in \cite{E}. 

The following lemma is of particular importance to us. To the best of our knowledge, the result has not appeared before in the literature. The proof  is a based on \eqref{eq:erh} and a trick that was also used in \cite{BW} to give a proof of the Borodin-Okounkov identity for Toeplitz determinants.    We define $P_n$ as the projection in $\ell_2(\bbN)$ onto the first $n$ components.  
\begin{lemma}\label{lem:toeplitz}
Let $K$ be a trace class operator on $\ell_2(\bbN)$, $s(w)$ a Laurent polynomial and $T(s)$ the corresponding Toeplitz operator. Then 
\beq
\lim_{n\to \infty} {\rm e}^{-\Tr P_n (T(s)+K)} \det (I+P_n({\rm e}^{T(s)+K}-I))=\exp\left(\frac{1}{2}\sum_{k\geq 1} k s_k s_{-k} \right). \no
\eeq
\end{lemma}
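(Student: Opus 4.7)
The plan is to factor $W := e^{T(s)+K}$ as $W = U A V$ with $U := T(e^{s_+}) = e^{T(s_+)}$ lower triangular (diagonal entries $e^{s_0}$), $V := T(e^{s_-}) = e^{T(s_-)}$ upper triangular (diagonal entries $1$), and $A := T(e^{-s_+}) W T(e^{-s_-})$ an explicit correction term. The key algebraic observation is that because $U$ is lower triangular and $V$ is upper triangular, for $j,k \leq n$ the sum $(UAV)_{jk} = \sum_{l,m} U_{jl} A_{lm} V_{mk}$ is automatically restricted to $l,m \leq n$, so $(P_n W P_n)|_{n\times n} = U_n A_n V_n$ exactly as $n \times n$ matrices. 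Since $\det U_n = e^{n s_0}$ and $\det V_n = 1$, and since $\det(I + P_n(W-I)) = \det(P_n W P_n)$, the lemma reduces to proving
\begin{equation*}
\frac{\det A_n}{e^{\Tr P_n K}} \longrightarrow E(s) := \exp\Bigl(\tfrac12 \sum_{k\geq 1} k s_k s_{-k}\Bigr).
\end{equation*}

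Next I would identify $\det A$ on the full space $\ell_2(\bbN)$ by inserting $e^{-T(s_-)-K} e^{T(s_-)+K} = I$ to split
\begin{equation*}
A = \bigl(e^{-T(s_+)}\,e^{T(s)+K}\,e^{-T(s_-)-K}\bigr)\bigl(e^{T(s_-)+K}\,e^{-T(s_-)}\bigr) =: (I+R)(I+K_3).
\end{equation*}
For the first factor, the commutator $[T(s_+), T(s_-)+K] = -H(s_+)H(\widetilde{s_-}) + [T(s_+),K]$ is trace class: the Hankel term has finite rank because $s_\pm$ are Laurent polynomials, and $[T(s_+),K]$ is trace class since $K$ is. Ehrhardt's bound \eqref{eq:erh} then makes $R$ trace class, and \eqref{eq:HH} yields
\begin{equation*}
\det(I+R) = \exp\bigl(-\tfrac12 \Tr[T(s_+), T(s_-)+K]\bigr) = \exp\bigl(\tfrac12 \sum_{k\geq 1} k s_k s_{-k}\bigr) = E(s),
\end{equation*}
where $\Tr[T(s_+),K]=0$ by cyclicity and a direct computation of the diagonal of $H(s_+)H(\widetilde{s_-})$ gives $\sum_{k\geq 1} k s_k s_{-k}$. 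For the second factor, a Duhamel expansion shows $K_3$ is trace class, and to evaluate $\det(I+K_3)$ I would introduce $h(t) := \det(e^{T(s_-)+tK}e^{-T(s_-)})$, differentiate, and use the cyclic property of the trace on the Duhamel integrand to obtain $\tfrac{d}{dt}\log h(t) = \Tr K$; together with $h(0)=1$ this gives $\det(I+K_3) = e^{\Tr K}$, hence $\det A = E(s)\, e^{\Tr K}$.

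To close the argument, note that $A - I$ is trace class, so $A_n - I = P_n(A-I)P_n \to A-I$ in trace norm; by the continuity estimate \eqref{eq:Fredcont} for Fredholm determinants this gives $\det A_n \to \det A = E(s)\,e^{\Tr K}$. Combined with the elementary $\Tr P_n K \to \Tr K$, the factorization $\det(P_n W P_n) = e^{n s_0}\det A_n$ and $\Tr P_n (T(s)+K) = n s_0 + \Tr P_n K$ deliver the claim. The main obstacle is the evaluation of $\det(I+K_3)$: no direct Helton--Howe--Pincus identity \eqref{eq:HH} applies to $e^{T(s_-)+K} e^{-T(s_-)}$ since it is not of the form $e^{-A}e^{A+B}e^{-B}$, so one is forced to introduce the auxiliary parameter $t$ and exploit that $K$ is trace class while $T(s_-)$ is only bounded, so that the cyclic trace identity collapses the Duhamel expression down to $\Tr K$.
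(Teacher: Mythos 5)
Your proof is correct and follows essentially the same route as the paper: the same splitting $T(s)=T(s_+)+T(s_-)$, the same exploitation of the lower/upper triangular structure of $T(e^{s_\pm})=e^{T(s_\pm)}$ (the paper phrases it as $e^{-P_nT(s_\pm)P_n}=(1-P_n)+P_ne^{-T(s_\pm)}$, you phrase it as an LU-type factorization $W_n=U_nA_nV_n$ of the finite sections, which is the same computation), then trace-norm convergence $P_n(A-I)P_n\to A-I$ plus \eqref{eq:Fredcont}, then the same insertion of $e^{-T(s_-)-K}e^{T(s_-)+K}$ to split $A$, and \eqref{eq:HH} for the first factor. The one place you genuinely deviate is the evaluation of $\det(I+K_3)=\det\bigl(e^{T(s_-)+K}e^{-T(s_-)}\bigr)$, which you handle via the Duhamel/logarithmic-derivative argument. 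That argument is correct, but your closing claim that ``no direct Helton--Howe--Pincus identity applies'' here is a slight overstatement: after multiplying by $e^{-K}$ one has $\det\bigl(e^{T(s_-)+K}e^{-T(s_-)}e^{-K}\bigr)=\det\bigl(e^{-K}e^{T(s_-)+K}e^{-T(s_-)}\bigr)$ by $\det(XY)=\det(YX)$ (both $XY-I$ and $YX-I$ are trace class here), and the right-hand side is exactly of the form $e^{-A}e^{A+B}e^{-B}$ with $A=K$, $B=T(s_-)$, so \eqref{eq:HH} gives $\exp(-\tfrac12\Tr[K,T(s_-)])=1$ immediately; that is what the paper does, and it avoids the auxiliary parameter entirely. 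Your route costs a differentiation lemma but buys nothing extra; otherwise the two proofs coincide.
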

\begin{proof}
Let us first deal with the case $K=0$.  In that case, split $s=s_++s_-$ where $s_+$ is the analytic part of $s$, and write  $T(s)=T(s_+)+T(s_-)$. Note that $T(s_+)$ is lower triangular (including the diagonal) and $T(s_-)$ upper triangular. By this triangular structure we have
$$P_n T(s_+)^2= P_n T(s_+) P_n T(s_+) P_n,  $$
and by iterating this identity we get
$$ {\rm e}^{-P_n T(s_+)  P_n }= (1-P_n) + P_n {\rm e}^{- T(s_+)}.$$
Similarly,
 by iterating 
$$ T(s_-)^2 P_n= P_n T(s_-) P_n T(s_-) P_n,  $$
we also have 
$$ {\rm e}^{- T(s_-) P_n }= (1-P_n) + {\rm e}^{- T(s_-)}  P_n.$$ Moreover, 
$${\rm e}^{-\Tr P_n T(s)  P_n } = {\rm e}^{-\Tr P_n T(s_+) P_n } = \det ({\rm e}^{-P_n T(s_+)P_n}) =\det\left((1-P_n) +  P_n {\rm e}^{- T(s_+)}  \right),$$ 
and $$ 1= {\rm e}^{-\Tr P_n T(s_-) P_n } = \det ({\rm e}^{-P_n T(s_-)P_n}) =\det\left((1-P_n) +   {\rm e}^{- T(s_-)} P_n  \right).$$
By taking the product matrices we obtain  
\begin{multline}\no
\det\left((1-P_n) + P_n{\rm e}^{- T(s_+)}  \right) \det\left((1-P_n) + P_n  {\rm e}^{T(s)} P_n\right) \det\left((1-P_n) +   {\rm e}^{- T(s_-)} P_n \right)\\= \det \left((1-P_n) + P_n  {\rm e}^{- T(s_+)} {\rm e}^{ T(s)} {\rm e}^{- T(s_-)} P_n \right)\\= \det \left(I+ P_n \left( {\rm e}^{- T(s_+)} {\rm e}^{ T(s)} {\rm e}^{- T(s_-)}-I \right) P_n \right).
\end{multline}
Here we also used that $(1-P_n)P_n=0$ and that $P_n{\rm e}^{- T(s_+)} = P_n{\rm e}^{- T(s_+)} P_n$ and ${\rm e}^{- T(s_-)} P_n=P_n {\rm e}^{- T(s_-)} P_n$. Hence we get  
\beq\label{eq:triangdet}
{\rm e}^{-\Tr P_n T(s)} \det (I+P_n ({\rm e}^{T(s)}-I)P_n) = \det \left(I+ P_n \left( {\rm e}^{- T(s_+)} {\rm e}^{ T(s)} {\rm e}^{- T(s_-)}-I \right) P_n \right).\eeq
Note that by using \eqref{eq:comtopli}, the fact that Hankel operators of Laurent symbols have finite rank and the fact that the trace class operators are an ideal in the space of bounded operators, we find that $[T(s_+),T(s_-)]$ is trace class.  Hence by \eqref{eq:erh} we also have that $${\rm e}^{- T(s_+)} {\rm e}^{ T(s)} {\rm e}^{- T(s_-)}-I$$ is of trace class. From here it is standard (see for example \cite[Th. 5.5]{GK}) to show that 
$$ P_n \left( {\rm e}^{- T(s_+)} {\rm e}^{ T(s)} {\rm e}^{- T(s_-)}-I \right) P_n \to  {\rm e}^{- T(s_+)} {\rm e}^{ T(s)} {\rm e}^{- T(s_)}-I,$$
in trace norm, as $n\to \infty$.  Moreover, by continuity of the Fredholm determinant \eqref{eq:Fredcont} and \eqref{eq:HH} we have
\begin{multline}
\lim_{n \to \infty} \det \left(I+ P_n \left( {\rm e}^{- T(s_+)} {\rm e}^{ T(s)} {\rm e}^{- T(s_-)}-I \right) P_n \right) \\= \det   {\rm e}^{- T(s_+)} {\rm e}^{ T(s)} {\rm e}^{- T(s_-)}= \exp \left(- \frac{1}{2} \Tr [T(s_+),T(s_-)]\right)\end{multline} 
By following \eqref{eq:comtopli} we have  $\Tr [T(s_+),T(s_-)]=-\Tr  [H(s_+),H(\widetilde{ (s_-)})]$ and by computing the trace  we arrive at the result.

Finally, let us consider the case $K\neq 0$.  By using the same strategy that lead to  \eqref{eq:triangdet},  we now end up with 
\begin{multline*}
{\rm e}^{-\Tr P_n T(s)} \det (I+P_n ({\rm e}^{T(s)+K}-I)P_n) \\= \det \left(I+ P_n \left( {\rm e}^{- T(s_+)} {\rm e}^{ T(s)+K} {\rm e}^{- T(s_-)}-I \right) P_n \right)
\end{multline*}
We now use that $[T(s_+),T(s_-)]$ and $ {\rm e}^{ T(s)+K}- {\rm e}^{ T(s)}$ are both trace class (the latter follows by expanding the exponentials),  to conclude that
\begin{multline*}
\left( {\rm e}^{- T(s_+)} {\rm e}^{ T(s)+K} {\rm e}^{- T(s_-)}-I \right)\\=\left( {\rm e}^{- T(s_+)} {\rm e}^{ T(s)} {\rm e}^{- T(s_-)}-I \right)+{\rm e}^{- T(s_+)}\left( {\rm e}^{ T(s)+K}- {\rm e}^{ T(s)}\right) {\rm e}^{- T(s_-)}
\end{multline*}
is trace class. Hence we have 
\beq\no
\lim_{n\to \infty} {\rm e}^{-\Tr P_n T(s)} \det (I+P_n ({\rm e}^{T(s)+K}-I)P_n) = \det {\rm e}^{-T(s_+)} {\rm e}^{T(s)+K} {\rm e}^{-T(s_-)}.
\eeq
By using $P_n K \to K$ in trace norm and $\det {\rm e}^K= \exp(-\Tr K)$ we thus have
\begin{multline*}
\lim_{n\to \infty} {\rm e}^{-\Tr P_n (T(s)+K)} \det (I+P_n ({\rm e}^{T(s)+K}-I)P_n) \\
= \det {\rm e}^{-T(s_+)} {\rm e}^{T(s)+K} {\rm e}^{-T(s_-)}{\rm e}^{- K}\\= \det {\rm e}^{-T(s_+)} {\rm e}^{T(s)+K}  {\rm e}^{-T(s_-)-K} {\rm e}^{-T(s_-)+K}{\rm e}^{-T(s_-)}{\rm e}^{- K}\\=\det {\rm e}^{-T(s_+)} {\rm e}^{T(s)+K}  {\rm e}^{-T(s_-)-K} \det {\rm e}^{-T(s_-)+K}{\rm e}^{-T(s_-)}{\rm e}^{- K},
\end{multline*}
 (see also the generalization of \eqref{eq:erh} given in \cite[Cor. 2.3]{E}).
The statement now follows by \eqref{eq:erh} and the fact that $\Tr [B,C]=0$ if one of the operators $B$ and $C$ is trace class. 
\end{proof}

\begin{corollary}\label{cor:toeplitzcumu}
Let $K$ be a trace class operator on $\ell_2(\bbN)$, $s(w)$ a Laurent polynomial and $T(s)$ the corresponding Toeplitz operator. Then, with $C^{(n)}_m \left(\mathcal B \right)$ as defined in \eqref{eq:newcumu} below,
\begin{align}
\lim_{n\to \infty} C_m^{(n)}(T(s)+K)=\begin{cases} 
\frac12\sum_{k=1}^\infty k s_k s_{-k}, & m=2\\
0, & \text{otherwise}.
\end{cases}
\end{align}
\end{corollary}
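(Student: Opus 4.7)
The plan is to deduce Corollary \ref{cor:toeplitzcumu} directly from Lemma \ref{lem:toeplitz} by a rescaling-and-extract-coefficients argument, exploiting the fact that the Gaussian moment generating function appears on the right hand side of the lemma.

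First, I would apply Lemma \ref{lem:toeplitz} with the symbol $s$ replaced by the Laurent polynomial $zs$ and the trace class operator $K$ replaced by $zK$, where $z \in \bbC$ lies in a small neighborhood of the origin. Since $T(zs)+zK = z(T(s)+K)$ and $(zs)_k(zs)_{-k} = z^2 s_k s_{-k}$, the lemma yields, for each fixed such $z$, the pointwise limit
\[
\lim_{n\to\infty} {\rm e}^{-z\Tr P_n(T(s)+K)} \det\bigl(I+P_n({\rm e}^{z(T(s)+K)}-I)P_n\bigr) = \exp\left(\frac{z^2}{2}\sum_{k\geq 1} k s_k s_{-k}\right).
\]
Taking logarithms, the left hand side is the generating function whose $z^m$ Taylor coefficient is, by definition \eqref{eq:newcumu}, the cumulant $C_m^{(n)}(T(s)+K)$.

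Second, I would upgrade this pointwise convergence to locally uniform convergence on a disc $\{|z|\leq r\}$ around $z=0$. Each prefactor $z \mapsto {\rm e}^{-z\Tr P_n(T(s)+K)}\det\bigl(I+P_n({\rm e}^{z(T(s)+K)}-I)P_n\bigr)$ is entire in $z$, because Fredholm determinants of holomorphic families of trace class perturbations of the identity are holomorphic. Uniform boundedness on the disc follows by tracking the proof of Lemma \ref{lem:toeplitz}: the Ehrhardt estimate \eqref{eq:erh} controls $\|{\rm e}^{-T(zs_+)}{\rm e}^{T(zs)+zK}{\rm e}^{-T(zs_-)}-I\|_1$ by $\|[T(zs_+),T(zs_-)]\|_1$ times a function of the operator norms, and the Fredholm-determinant continuity \eqref{eq:Fredcont} converts this into a bound on the determinant. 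All these quantities depend polynomially on $z$, so they are uniformly bounded on $\{|z|\leq r\}$. Vitali's theorem then provides locally uniform convergence on a neighborhood of the origin.

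Third, with locally uniform convergence of a sequence of holomorphic functions in hand, I would compare Taylor coefficients via Cauchy's integral formula. The right hand side is $\exp(\tfrac12 z^2 \sum_k k s_k s_{-k})$, whose logarithm has a single nonzero Taylor coefficient, namely $\frac12 \sum_{k\geq 1} k s_k s_{-k}$ at $z^2$. On the left, the $z^m$ coefficient is precisely $C_m^{(n)}(T(s)+K)$. Equating coefficients in the limit gives the corollary.

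The main technical step is the uniform-in-$z$ control required to justify the passage from functional to coefficient convergence; everything else is bookkeeping from Lemma \ref{lem:toeplitz}. The uniformity is straightforward but not purely formal, because one must verify that the constants in \eqref{eq:erh} and \eqref{eq:Fredcont}, when applied to the $z$-dependent objects $T(zs_\pm)$ and $zK$, remain bounded on a disc, which reduces to the polynomial dependence of the relevant operator and trace norms on $z$.
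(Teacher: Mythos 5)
Your proposal is correct and takes essentially the same route as the paper: apply Lemma~\ref{lem:toeplitz} with $s$ and $K$ scaled by $z$, recognize the left-hand side as the exponential generating function of the $C_m^{(n)}(T(s)+K)$ via Lemma~\ref{lem:rewritingoffredholmdet}, and extract Taylor coefficients. The paper's one-line proof silently invokes the coefficient-extraction step, whereas you correctly flag that pointwise convergence of analytic functions does not by itself give coefficient convergence, and you supply the missing uniform-boundedness argument (via the Ehrhardt estimate~\eqref{eq:erh} together with Vitali's theorem); this is a genuine and welcome sharpening of a terse argument, and the uniformity can equivalently be obtained from the cumulant bound~\eqref{eq:ineqfromBD} which gives $|C_m^{(n)}(T(s)+K)|\leq C\,m^{3/2}\,{\rm e}^m\,\|T(s)+K\|_\infty^{m-2}$ uniformly in $n$ because $\|[T(s)+K,P_n]\|_2$ is bounded independently of $n$.
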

\begin{proof}
Since 
\[\log \det\left (I+P_n({\rm e}^{z(T(s)+K)}-I)\right) = \sum_{m=1}^\infty C_m^{(n)}(T(s)+K) z^m,\]
the statement follows from Lemma \ref{lem:toeplitz} (with $s$ and $K$ multiplied by $z$). 
\end{proof}

\begin{remark}
Lemma \ref{lem:toeplitz} and Corollary \ref{cor:toeplitzcumu} are both stated including a trace class operator $K$. However, in our proofs we will only use these results with $K=0$. Nevertheless, we wish to point out that the general result can be used for a more direct proof of Theorem \ref{th:fluctuationsbio} in case the recurrence matrix $J$ is a trace class perturbation of a Toeplitz operator. If the latter holds, Theorem \ref{th:fluctuationsbio} follows directly from Corollary \ref{cor:toeplitzcumu} and the first lines of the proof of Theorem \ref{th:fluctuationsbio} given in Section \ref{sec:proof}.
\end{remark}
\section{Some results for banded matrices}\label{sec:mainproof}

In this section, we will prepare the proofs of our main results, by studying properties of the cumulants \eqref{eq:cumulantexpression0} with $f$ replaced by a banded matrix,  $K_n$ replaced by the projection on the first $n$ components and $\mathbb L_2(\mu)$ by $\ell_2(\bbN)$. 

\subsection{Definition  of $C^{(n)}_m(\mathcal B)$ and its properties}

For any infinite matrix $\mathcal B=\left(\mathcal B_{rs}\right)_{r,s=1}^\infty$ we define $C_m^{(n)} (\mathcal B)$ by
\beq \label{eq:cumuB}
 C_m^{(n)}(\mathcal B) = \sum_{j=1}^m \frac{(-1)^j}{j} \sum_{\overset{l_1+ l_2+ \cdots l_j=m}{ l_i\geq 1}} \frac{\Tr \mathcal B^{l_1} P_n\cdots \mathcal B^{l_j} P_n }{l_1! \cdots l_j!}.
 \eeq
 Here $P_n$ is the projection in $\ell_2(\bbN)$ onto the first $n$ components.  We stress that we will be mainly interested in situations where $\mathcal B$ is banded. In that case, we may  (and do) allow $\mathcal B$ to be unbounded as an operator on $\ell_2(\bbN)$, as the band structure ensures that  there is no problem  when considering powers $\mathcal B^l$ for $l \in \bbN$. 

We start with the following simple, but crucial, observation.

\begin{lemma} 
Let $n \in \bbN$ and $m\geq 2$. Let $\mathcal B=(\mathcal B_{rs})_{r,s=1}^\infty$ be any banded infinite matrix.  Then we have
\beq  \label{eq:newcumu}
 C_m^{(n)}(\mathcal B) = \sum_{j=2}^m \frac{(-1)^j}{j} \sum_{\overset{l_1+ l_2+ \cdots l_j=m}{ l_i\geq 1}} \frac{\Tr \mathcal B^{l_1} P_n\cdots \mathcal B^{l_j} P_n-\Tr \mathcal B^m P_n }{l_1! \cdots l_j!}. 
\eeq
\end{lemma}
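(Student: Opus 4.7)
The plan is to show that the restatement \eqref{eq:newcumu} differs from the definition \eqref{eq:cumuB} only by formally shifting the $j=1$ term into the $j\geq 2$ sum, using a combinatorial identity that eliminates the constant-trace part. Concretely, I would first isolate the $j=1$ term in \eqref{eq:cumuB}: the only composition of $m$ into a single positive part is $l_1 = m$, contributing exactly $-\frac{1}{m!}\Tr \mathcal{B}^m P_n$. Thus \eqref{eq:cumuB} reads
\[
C_m^{(n)}(\mathcal{B}) = -\frac{\Tr \mathcal{B}^m P_n}{m!} + \sum_{j=2}^m \frac{(-1)^j}{j} \sum_{\substack{l_1+\cdots+l_j=m\\ l_i\geq 1}} \frac{\Tr \mathcal{B}^{l_1} P_n\cdots \mathcal{B}^{l_j} P_n}{l_1!\cdots l_j!}.
\]
Comparing with the desired \eqref{eq:newcumu}, it suffices to establish the scalar identity
\[
\sum_{j=2}^m \frac{(-1)^j}{j} \sum_{\substack{l_1+\cdots+l_j=m\\ l_i\geq 1}} \frac{1}{l_1!\cdots l_j!} = \frac{1}{m!}, \qquad m\geq 2,
\]
since then multiplying both sides by $\Tr \mathcal{B}^m P_n$ and subtracting from the display above cancels the $-\frac{1}{m!}\Tr \mathcal{B}^m P_n$ against the $\Tr \mathcal{B}^m P_n$ subtractions distributed over the $j\geq 2$ terms.

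To prove the combinatorial identity, I would invoke the trivial functional identity $\log(1 + (e^x - 1)) = x$. Expanding the logarithm in powers of $(e^x-1)$ and then each factor $(e^x-1)^j = \sum_{l_1,\ldots,l_j\geq 1} \frac{x^{l_1+\cdots+l_j}}{l_1!\cdots l_j!}$ produces
\[
\sum_{j=1}^\infty \frac{(-1)^{j+1}}{j} \sum_{\substack{l_1,\ldots,l_j\geq 1}} \frac{x^{l_1+\cdots+l_j}}{l_1!\cdots l_j!} = x.
\]
Reading off the coefficient of $x^m$ gives $\sum_{j=1}^m \frac{(-1)^{j+1}}{j}\sum_{l_1+\cdots+l_j=m,\, l_i\geq 1}\frac{1}{l_1!\cdots l_j!} = 0$ for $m\geq 2$. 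Splitting off the $j=1$ term, which equals $1/m!$, yields the desired identity. Substituting back completes the proof.

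The step most deserving of care is the legitimacy of the series manipulations when $\mathcal{B}$ is merely banded rather than bounded. However, because $\mathcal{B}$ is banded, each trace $\Tr \mathcal{B}^{l_1} P_n \cdots \mathcal{B}^{l_j} P_n$ reduces to a finite sum over matrix entries (the projections $P_n$ truncate to a finite-dimensional block and the banded powers $\mathcal{B}^{l_i}$ stay in a fixed finite band around that block), so every term in \eqref{eq:cumuB} and \eqref{eq:newcumu} is a well-defined finite sum; no convergence issues arise and the rearrangement above is purely algebraic. Thus the hardest aspect is really just the bookkeeping for the scalar identity, which is handled cleanly by the $\log(1+(e^x-1))=x$ trick.
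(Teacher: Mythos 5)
Your proof is correct and takes essentially the same approach as the paper: both derive the key scalar identity from expanding $z = \log\left(1 + (e^z - 1)\right)$ in powers of $z$. The only cosmetic difference is organizational — the paper subtracts the full (vanishing) sum $\sum_{j=1}^m \frac{(-1)^j}{j}\sum \frac{\Tr \mathcal{B}^m P_n}{l_1!\cdots l_j!} = 0$ from \eqref{eq:cumuB} and notes the $j=1$ term then drops out automatically (since there $l_1=m$), whereas you isolate the $j=1$ term first and verify the partial sum over $j\geq 2$ equals $1/m!$; these are the same computation read in two ways.
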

\begin{proof} 
Note that   by  expanding the right-hand side of $z=\log \left(1+(\exp z-1)\right)$ in  a power series in $z$, we have 
\beq\no
 \sum_{j=1}^m \frac{(-1)^j}{j} \sum_{\overset{l_1+ l_2+ \cdots l_j=m}{ l_i\geq 1}} \frac{1}{l_1! \cdots l_j!}=0, \quad m\geq 2.
\eeq
Hence \beq\no
 \sum_{j=1}^m \frac{(-1)^j}{j} \sum_{\overset{l_1+ l_2+ \cdots l_j=m}{ l_i\geq 1}} \frac{\Tr \mathcal B^m P_n }{l_1! \cdots l_j!}=0, \quad m\geq 2,
\eeq 
and thus \eqref{eq:newcumu} follows by subtracting the latter from \eqref{eq:cumuB}.
\end{proof}

This lemma brings us to the following key observation of the paper.

\begin{lemma} \label{lem:keylemma} Let $\mathcal B$ be  banded and bounded matrix and  let $b$ be   such that   $\mathcal B_{rs}=0$ if $r-s >b$. Then  for $M,n\in \bbN$ we have that 
\beq \no
\frac{\partial}{\partial \mathcal B_{rs}}  C_m^{(n)}(\mathcal B)=0, \begin{array}{l}{\text{ if $|s-n|> b M$}}\\{\text{ or  $|r-n|>b M$ }}
\end{array}
\eeq
for $m=1,\ldots, M$.
\end{lemma}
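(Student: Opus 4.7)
The strategy is to use the preceding lemma's rewriting \eqref{eq:newcumu}, which for $m\ge 2$ expresses $C_m^{(n)}(\mathcal B)$ as a combinatorial sum of differences
\[
\Delta^{(n)}_{j,l_1,\ldots,l_j}(\mathcal B) := \Tr\mathcal B^{l_1} P_n \cdots \mathcal B^{l_j} P_n - \Tr\mathcal B^m P_n,\qquad j\ge 2.
\]
The first step is to turn each such difference into a form that exposes the boundary at $n$. Writing $\Tr \mathcal B^m P_n = \Tr \mathcal B^{l_1}\mathcal B^{l_2}\cdots\mathcal B^{l_j} P_n$ and then inserting the $j-1$ missing internal projections one at a time via $P_n = I - Q_n$, where $Q_n = I - P_n$, a direct telescoping computation yields
\[
\Delta^{(n)}_{j,l_1,\ldots,l_j}(\mathcal B) = -\sum_{i=1}^{j-1}\Tr\mathcal B^{l_1} P_n\cdots P_n \mathcal B^{l_i} Q_n \mathcal B^{l_{i+1}}\cdots \mathcal B^{l_j} P_n,
\]
so every surviving summand contains exactly one $Q_n$ together with at least one $P_n$.

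The second step is to differentiate a single such summand with respect to $\mathcal B_{rs}$. Using $\partial\mathcal B^{l}/\partial \mathcal B_{rs}=\sum_{a=0}^{l-1}\mathcal B^{a}E_{rs}\mathcal B^{l-a-1}$, where $E_{rs}$ denotes the matrix unit at position $(r,s)$, together with the product rule, each partial derivative is a finite sum of traces of the form $\Tr W_1 E_{rs} W_2$, in which $W_1W_2$ is a word in the letters $\mathcal B,P_n,Q_n$ of total $\mathcal B$-degree $m-1$ that still contains at least one $P_n$ and exactly one $Q_n$. By cyclicity of the trace, each such expression collapses to the single matrix entry $(W_2W_1)_{sr}$.

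The third step is the geometric heart of the argument. Expanding $(W_2W_1)_{sr}$ yields a sum over walks on $\bbN$ of $m-1$ $\mathcal B$-edges from $s$ to $r$, in which one intermediate vertex is forced to lie in $\{1,\ldots,n\}$ (at the $P_n$ position) and another in $\{n+1,n+2,\ldots\}$ (at the $Q_n$ position). Since $\mathcal B$ is banded, each edge changes the index by at most $b$, so any two walk vertices differ by at most $(m-1)b$ in index. The coexistence of a $\le n$ vertex and a $>n$ vertex forces the walk to contain an edge straddling the level $n$, both of whose endpoints lie in the interval $[n-b+1,n+b]$. Consequently every walk vertex, and in particular $r$ and $s$, lies within $(m-1)b+b=mb\le Mb$ of $n$. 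Thus whenever $|r-n|>Mb$ or $|s-n|>Mb$, no admissible walk exists, the corresponding entry vanishes, and summing over all Leibniz contributions and over the decomposition \eqref{eq:newcumu} delivers $\partial C_m^{(n)}(\mathcal B)/\partial \mathcal B_{rs}=0$.

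The principal technical nuisance is the cyclic bookkeeping in the second step needed to isolate each derivative as a single matrix entry $(W_2W_1)_{sr}$; once this normal form is achieved, the boundary-crossing obstruction of the third step is an essentially immediate consequence of the band structure of $\mathcal B$.
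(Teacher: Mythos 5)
Your proof is correct and rests on the same core observation as the paper's: the band structure of $\mathcal B$ together with the fact that the trace difference only receives contributions from index sequences that cross the level $n$ (something $\le n$ and something $>n$ both appear) forces all relevant indices into a window of width $\Oh(mb)$ around $n$. The paper reaches the same conclusion a bit more directly — it expands $\Tr P_n\mathcal B^{l_1}P_n\cdots\mathcal B^{l_j}P_n-\Tr P_n\mathcal B^m P_n$ as $-\sum_{s\le n}\sum_{\vec r}(\mathcal B^{l_1})_{sr_1}\cdots(\mathcal B^{l_j})_{r_{j-1}s}$ restricted to tuples with some $r_k>n$, and then reads off the window constraint from the band condition — whereas you first telescope into terms each carrying exactly one $Q_n=I-P_n$, then differentiate using the power rule, invoke cyclicity to reduce to a matrix entry $(W_2W_1)_{sr}$, and finally make the walk argument. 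This adds a couple of bookkeeping steps but the geometric heart of the argument is identical, and both versions implicitly use the two-sided band (i.e., $|r-s|\le b$ for nonzero entries), which is what the word ``banded'' in the lemma is meant to supply beyond the one-sided condition $\mathcal B_{rs}=0$ for $r-s>b$.
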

\begin{proof}
We start by writing each term in the definition of $C_m^{(n)}( \mathcal B)$ as 
\beq\no
\begin{split}
& \Tr P_n\mathcal B^{l_1} P_n\cdots \mathcal B^{l_j} P_n-\Tr  P_n\mathcal B^m P_n\\
&\quad=-\sum_{s=1}^n \sum_{\{r_1,\ldots, r_j\}\in I_{j,n}} \left( \mathcal B^{l_1} \right)_{sr_1} \left( \mathcal B^{l_2} \right)_{r_1r_2}\cdots\left( \mathcal B^{l_j} \right)_{r_{j-1}s},
\end{split}
\eeq
where $I_n$ is the index set 
$$I_{j,n} \in  \{  (r_1,\ldots,r_{j-1}) \in \bbN^{j-1} \mid \exists k :  r_k > n\}.$$  
Since $\mathcal B$ is banded we can further restrict the indices  to the set $I_{j,n}^*$ defined by
$$I_{j,n}^* \in  \{  (r_1,\ldots,r_{j-1}) \in \bbN^{j-1} \mid \exists k :  r_k > n , \quad \forall l |r_l- r_{l-1}|\leq b\}.$$
But the two defining relation for this set show that the only non-trivial contributions come from terms for which all indices $r_l \geq n-b (j-2) $. On the other hand, from the band structure and the fact that $s\leq n$ we also see that the only non-trivial contributions come from terms for which all indices $r_l \leq s+b (j-1)\leq n+b(j-1)$. Since $j$ runs from $1$ to $m$ and $m$ from $1$ to $M$,  the statement follows.
\end{proof}

\subsection{Asymptotic behavior of $C^{(n)}_m(\mathcal B^{(n)}_m)$}

We will now consider a sequence  $\{\mathcal B^{(n)}\}_{n\in \bbN}$ of banded infinite matrices  $\mathcal B^{(n)}=(\mathcal B^{(n)}_{r,s})_{r,s=1}^\infty$ where the number of non-trivial diagonals is independent of $n$. That is, we assume that there exists a $b>0$ such that $  B^{(n)}_{r,s}=0$ for $r-s>b$ and $n\in \bbN$.   Our goal is to analyze $C^{(n)}_m(\mathcal B^{(n)})$ as $n\to \infty$. 

We start with a definition.  For $m \in \bbN$ and $F$ a finite rank operator on $\ell_2(\bbZ)$ we define
\beq \label{eq:defDmF}
D_m(F)=\sum_{j=1}^m \frac{(-1)^j}{j} \sum_{\overset{l_1+\cdots +l_j=m}{l_i\geq 1}} \frac{\Tr \prod_{s=1}^j P_- F^{l_s} P_-  -P_- F^m P_-   }{l_1! \cdots l_j!}, \eeq
where we recall that $P_-$ is the projection onto the negative coefficients, i.e.  $\left(P_- x\right)_r=x_r$ for $r<0$ and $0$ otherwise.

Let us also recall that we say that $\mathcal B^R=({\mathcal B^R_{r,s}})_{r,s=-\infty}^\infty$ is  a right limit of $\{\mathcal B^{(n)}\}_{n\in \bbN}$  iff there exists a subsequence $\{n_j\}_{j}$ such that $$
\left({\mathcal B^R}\right)_{r,s}= \lim_{j\to \infty} \mathcal B_{n_j+r,n_j+s}^{(n_j)},$$
and $\mathcal B^R$ is bounded as an operator on $\ell_2(\bbZ)$. We also use the notation 
$\mathcal B^R_M$  for the matrix defined by 
$$\left({\mathcal B^R_M}\right)_{r,s}=\begin{cases} \left({\mathcal B^R}\right)_{r,s}, & \text{if } r,s=-M,\ldots,M\\
0,& \text{ otherwise.}
\end{cases}
$$
We then have the following result on the asymptotic behavior of $C^{(n)}_m(\mathcal B^{(n)})$.

\begin{lemma} \label{lem:fromctoD} 
Let $\{\mathcal B^{(n)}\}_{n\in \bbN}$ be a sequence of a banded infinite matrices and let $b$ be   such that   $\mathcal B_{rs}^{(n)}=0$ if $r-s >b$, for all $n\in \bbN$.  Let $\mathcal B^R$ be a right-limit of $\{\mathcal B^{(n)}\}_{n\in \bbN}$ along a subsequence $\{n_j\}_j$.
Let $M\in \bbN$. Then
\beq \label{eq:fromctoD}
\lim_{j\to \infty}C_m^{(n_j)}(\mathcal B^{(n_j)})= D_m( \mathcal B^R_{b M}),
\eeq
for $m=1,\ldots M.$
\end{lemma}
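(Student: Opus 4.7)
The plan is to use Lemma \ref{lem:keylemma} to localize the infinite-dimensional trace to a finite window of fixed size around index $n$, transport the problem to $\ell_2(\bbZ)$ via an index shift, and then pass to the limit using the right-limit hypothesis together with continuity of the trace in finite dimensions.

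First, I invoke Lemma \ref{lem:keylemma}: for each $m = 1,\ldots,M$, the value $C_m^{(n)}(\mathcal B^{(n)})$ depends only on the $(2bM+1)^2$ entries $\mathcal B^{(n)}_{r,s}$ with $|r-n|\leq bM$ and $|s-n|\leq bM$. I may therefore zero out all entries of $\mathcal B^{(n)}$ outside this window without changing $C_m^{(n)}$, producing a matrix $\tilde{\mathcal B}^{(n)}$ supported in the finite window $W_n=\{n-bM,\ldots,n+bM\}^2$. Expanding each $\Tr\tilde{\mathcal B}^{(n)\,l_1}P_n\cdots\tilde{\mathcal B}^{(n)\,l_j}P_n$ as a sum over indices, the support of $\tilde{\mathcal B}^{(n)}$ forces every index to lie in $\{n-bM,\ldots,n+bM\}$, so the trace is effectively a finite-dimensional computation.

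Next, I conjugate by the shift isometry that identifies $\ell_2(\{n-bM,\ldots,n+bM\})$ with $\ell_2(\{-bM,\ldots,bM\})$ via $n+r\leftrightarrow r$. Under this shift, $\tilde{\mathcal B}^{(n_j)}$ becomes the $(2bM+1)\times(2bM+1)$ matrix with entries $\mathcal B^{(n_j)}_{n_j+r,n_j+s}$ for $|r|,|s|\leq bM$, which by the definition of right limit converges entrywise---and, being finite-dimensional, in operator norm---to the non-trivial block of $\mathcal B^R_{bM}$. The projection $P_{n_j}$ restricted to the shifted window becomes the projection onto the negative coordinates in $\{-bM,\ldots,bM\}$, i.e.\ the restriction of $P_-$. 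Since all matrix products and traces appearing in \eqref{eq:newcumu} now live in this fixed finite-dimensional space and the matrices converge in norm, continuity of the trace gives
$$\Tr\tilde{\mathcal B}^{(n_j)\,l_1}P_{n_j}\cdots\tilde{\mathcal B}^{(n_j)\,l_j}P_{n_j}\longrightarrow \Tr(\mathcal B^R_{bM})^{l_1}P_-\cdots(\mathcal B^R_{bM})^{l_j}P_-,$$
and similarly $\Tr\tilde{\mathcal B}^{(n_j)\,m}P_{n_j}\to\Tr(\mathcal B^R_{bM})^mP_-$.

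Finally I match the limit with $D_m(\mathcal B^R_{bM})$. Using the cyclic property of the trace together with $P_-^2=P_-$ to rewrite $\Tr F^{l_1}P_-\cdots F^{l_j}P_-=\Tr\prod_{s=1}^j P_- F^{l_s}P_-$, and observing that the $j=1$ term in \eqref{eq:defDmF} vanishes identically (so that the outer sums in both \eqref{eq:newcumu} and \eqref{eq:defDmF} effectively run from $j=2$ to $m$), the limit of $C_m^{(n_j)}(\mathcal B^{(n_j)})$ matches $D_m(\mathcal B^R_{bM})$ term by term. The main obstacle is the careful bookkeeping of index shifts---in particular verifying that, after shifting, the image of $P_n$ on the window coincides with $P_-$ restricted to $\{-bM,\ldots,bM\}$---but once Lemma \ref{lem:keylemma} has confined everything to a fixed finite-dimensional block, the argument is a matter of elementary finite-dimensional linear algebra.
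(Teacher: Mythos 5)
Your proof is correct and follows essentially the same route as the paper's: both invoke Lemma \ref{lem:keylemma} to localize the cumulant to a $(2bM+1)\times(2bM+1)$ window around index $n$, conjugate by the shift to center that window at the origin in $\ell_2(\bbZ)$ (sending $P_n$ to $P_-$), and then use entrywise convergence of the finite block to pass to $D_m(\mathcal B^R_{bM})$. The only cosmetic difference is that the paper first records the exact identity $C_m^{(n)}(\mathcal B^{(n)}) = D_m(\mathcal B^{R,n}_{bM})$ for the truncated-and-shifted matrix and then lets $n=n_j\to\infty$ in the argument of $D_m$, whereas you take the limit of each trace directly; both work because everything lives in a fixed finite-dimensional block.
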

\begin{proof} 
Note that  $C^{(n)}_m(\mathcal B^{(n)})$ is defined for $\mathcal B^{(n)}$ acting on $\ell_2(\bbN)$ and $D_m(\mathcal B^R_{bM})$ for $\mathcal B_{bM}^R$ acting on $\ell_2(\bbZ)$. To prepare the limit \eqref{eq:fromctoD} we  embed $\ell_2(\bbN)$ into $\ell_2(\bbZ)$ and extend the one-sided infinite matrices $\mathcal B^{(n)}$ to  two-sided infinite matrices by setting  all the other entries to zero. Moreover, we extend  $P_n$ in the definition of $C^{(n)}_m$ to an operator on $\ell_2(\bbZ)$, by setting $\left(P_n x\right)_r=x_r$ for $r\leq n$ and $0$ otherwise.  One easily checks that this does not alter the value of $C^{(n)}_m$.

By Lemma \ref{lem:keylemma} we have that $C_m^{(n)}(\mathcal B^{(n)})$ depends only on the $\mathcal B_{rs}^{(n)}$ with $|n-r|, |n-s|\leq b M$. Hence we can set all the other entries of $\mathcal B$ to be identically zero.  In other words, we can replace $\mathcal B^{(n)}$ by the matrix
$ S_{n} \mathcal B^{R,n}_{b M} S_{n}^*$, where $$\left(\mathcal B^{R,n}_{b M}\right)_{rs}=\begin{cases}\mathcal B^{(n)}_{n+r,n+s},& \text{ for } r,s=-b M,\ldots b M.\\
0, & \text{otherwise.}
\end{cases} $$
and the operator $S_n$ is the shift operator $\left(S_nx\right)_s=x_{n+s}$, for $s\in \bbZ$. 
Then by also using $S^*_n P_n S_n=P_-$, it is easy to see that for $n$ large enough we have
$$C_m^{(n)}(\mathcal B^{(n)})=C_m^{(n)}(S_{n} \mathcal B^{R,n}_{bM} S_{n}^*)=D_m(f,\mathcal B_{bM}^{R,n}),$$
for $m=1,\ldots, M$. By setting $n=n_j$ and taking the limit $j \to \infty$ we obtain the statement.
\end{proof}

The following is a straightforward consequence of the Lemma \ref{lem:fromctoD}, that will be important to us later on. 

\begin{lemma}\label{lem:equivalent}
Let $\{\mathcal B^{(n)}_1\}_{n\in \bbN} $ and $\{\mathcal B^{(n)}_2\}_{n\in \bbN} $ be two sequences of banded infinite matrices that have the same right-limit $\mathcal B^R$ along the (same) subsequence $n_j$. Then for  any $m \in \bbN$ we have 
\begin{align} \no
\lim_{j\to \infty} C^{(n_j)}_m\left(\mathcal B_1^{(n_j)}\right)=\lim_{j\to \infty}  C^{(n_j)}_m\left(\mathcal B_2^{(n_j)}\right).
\end{align} 
  In particular, both limits exist.
\end{lemma}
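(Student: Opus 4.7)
The plan is to reduce Lemma~\ref{lem:equivalent} to a direct application of Lemma~\ref{lem:fromctoD}. Since the two sequences $\{\mathcal{B}_1^{(n)}\}$ and $\{\mathcal{B}_2^{(n)}\}$ need not share a common bandwidth bound, the first bookkeeping step is to pick $b_1, b_2$ so that $(\mathcal{B}_i^{(n)})_{rs}=0$ whenever $|r-s|>b_i$, and set $b=\max(b_1,b_2)$. Each of the two sequences then satisfies the hypothesis of Lemma~\ref{lem:fromctoD} with the common bandwidth bound $b$, and both have the same two-sided infinite matrix $\mathcal{B}^R$ as their right limit along the (same) subsequence $\{n_j\}_j$.

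Next, fix $m\in\bbN$ and choose any integer $M\geq m$. Applying Lemma~\ref{lem:fromctoD} to $\{\mathcal{B}_1^{(n)}\}$ shows that $\lim_{j\to\infty} C_m^{(n_j)}(\mathcal{B}_1^{(n_j)})$ exists and equals $D_m(\mathcal{B}^R_{bM})$, the quantity \eqref{eq:defDmF} associated to the central $(2bM+1)\times(2bM+1)$ truncation of $\mathcal{B}^R$. The same lemma applied to $\{\mathcal{B}_2^{(n)}\}$ yields $\lim_{j\to\infty} C_m^{(n_j)}(\mathcal{B}_2^{(n_j)}) = D_m(\mathcal{B}^R_{bM})$ by the very same truncated right-limit matrix --- because the two sequences were assumed to have the \emph{same} right limit. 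Comparing the two identities gives the claimed equality, and existence of both limits is part of the conclusion of Lemma~\ref{lem:fromctoD}.

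There is essentially no genuine obstacle here: the heavy lifting has already been done in Lemmas~\ref{lem:keylemma} and~\ref{lem:fromctoD}, where it was shown that $C_m^{(n)}(\mathcal{B}^{(n)})$ depends only on a finite window of entries of $\mathcal{B}^{(n)}$ around index $n$, and that the subsequential limit is therefore determined entirely by the restriction of the right limit $\mathcal{B}^R$ to a finite central block. Lemma~\ref{lem:equivalent} is really just the clean reformulation of this fact: two sequences that share a right limit must have matching limits for $C_m^{(n_j)}$. The only subtlety worth flagging is internal consistency of the right-hand side of \eqref{eq:fromctoD} across admissible choices of $(b,M)$ with $M\geq m$ and $b$ at least the true bandwidth; this is automatic, since the left-hand side of \eqref{eq:fromctoD} depends on neither.
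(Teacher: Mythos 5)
Your proof is correct and follows essentially the same route as the paper: apply Lemma~\ref{lem:fromctoD} to each sequence and observe that both limits equal $D_m(\mathcal{B}^R_{bM})$ since the right limit is the same. The extra remark about taking $b=\max(b_1,b_2)$ to get a common bandwidth bound is a sensible bookkeeping clarification that the paper leaves implicit.
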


\begin{proof}
The statement is  an easy  consequence of the fact that by Lemma \ref{lem:fromctoD} we have   for both $t=1,2$ that
$$\lim_{j\to \infty} C^{(n_j)}_m\left(\mathcal B_t^{(n_j)}\right) = D_m\left(\left(\mathcal B^R\right)_M\right),$$
for any $M$ sufficiently large.  \end{proof}

\subsection{The case of a bounded matrix $\mathcal B$}

In the previous paragraphs we allowed the matrix $\mathcal B$ (or $\mathcal B^{(n)}$) to be unbounded as an operator on $\ell_2(\bbN)$. We will now show that if $\mathcal B$ defines a bounded operator (and for simplicity also fixed) then we can organize the $\mathcal C_m^{(n)}(\mathcal B)$ in a Fredholm determinant that serves as a generating function. Moreover,  we obtain a limit for Fredholm determinants, from which Theorem \ref{th:maintheorem} is almost immediate.   
\begin{lemma} \label{lem:rewritingoffredholmdet}
Let $\mathcal B=({\mathcal B}_{r,s})_{r,s=1}^\infty$ be a bounded operator on $\ell_2(\bbN)$. For sufficiently small $z$ we have 
\beq \label{eq:cumull}
\exp(-z\Tr \mathcal B P_n) \det\left( 1+ ({\rm e}^{z  \mathcal B}-1)P_n\right)
= \exp\left(\sum_{m=2}^\infty z^m C_m^{(n)}(\mathcal B)\right)
\eeq 
where $C_m^{(n)}(\mathcal B)$ is defined by \eqref{eq:newcumu}.
\end{lemma}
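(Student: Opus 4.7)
The plan is a direct power-series expansion of the Fredholm determinant, paralleling the cumulant computation carried out in Section 2.3. First, since $P_n$ is a rank-$n$ projection and $\mathcal B$ is bounded, the operator $A(z):=({\rm e}^{z\mathcal B}-1)P_n$ is of finite rank for every $z\in\bbC$, and its trace norm is bounded by $n({\rm e}^{|z|\|\mathcal B\|_\infty}-1)$. For $z$ in a small neighbourhood of the origin (whose radius may depend on $n$ and $\|\mathcal B\|_\infty$) this is strictly less than $1$, so $1+A(z)$ is invertible, the Fredholm determinant is nonzero, and the standard trace expansion
\[
\log\det\bigl(1+A(z)\bigr)=\sum_{j\geq 1}\frac{(-1)^{j+1}}{j}\,\Tr A(z)^j
\]
converges absolutely.

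Next I would substitute the series ${\rm e}^{z\mathcal B}-1=\sum_{l\geq 1}\frac{z^l}{l!}\mathcal B^l$ into each power $A(z)^j$ to obtain
\[
\Tr A(z)^j=\sum_{l_1,\ldots,l_j\geq 1}\frac{z^{l_1+\cdots+l_j}}{l_1!\cdots l_j!}\,\Tr\bigl(\mathcal B^{l_1}P_n\cdots \mathcal B^{l_j}P_n\bigr).
\]
For $|z|$ in the disk identified above the resulting double sum in $j$ and $(l_1,\ldots,l_j)$ is absolutely convergent, so I may interchange the order of summation and collect terms by $m=l_1+\cdots+l_j$. This produces
\[
\log\det\bigl(1+A(z)\bigr)=\sum_{m=1}^\infty z^m\sum_{j=1}^m\frac{(-1)^{j+1}}{j}\sum_{\substack{l_1+\cdots+l_j=m\\ l_i\geq 1}}\frac{\Tr\bigl(\mathcal B^{l_1}P_n\cdots \mathcal B^{l_j}P_n\bigr)}{l_1!\cdots l_j!},
\]
and, up to the sign convention, the inner coefficient is precisely the definition of $C_m^{(n)}(\mathcal B)$ given in \eqref{eq:cumuB}. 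This is exactly the calculation performed in Section 2.3, with $f$ replaced by the matrix $\mathcal B$ and the correlation kernel $K_n$ replaced by the projection $P_n$.

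The last step is to isolate the $m=1$ contribution: the only pair $(j,l_1)$ with $l_1=m=1$ is $j=1$, $l_1=1$, which gives the linear term $z\,\Tr\mathcal B P_n$. Subtracting this from the series (equivalently, multiplying the determinant by ${\rm e}^{-z\Tr\mathcal B P_n}$ on the outside) leaves the tail $\sum_{m\geq 2}z^m C_m^{(n)}(\mathcal B)$. For these remaining $m\geq 2$ the equivalent expression \eqref{eq:newcumu} proved in Lemma 4.1 may be used; exponentiating both sides then yields the claimed identity for all $z$ in a sufficiently small neighbourhood of the origin.

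There is no substantial obstacle: the only technical point is justifying the interchange of the two series, which follows from absolute convergence of $\sum_j \|A(z)^j\|_1/j$ in a small disk around $z=0$. The finite-rank nature of $P_n$ makes every expression above trace class automatically, so no analytic subtleties arise beyond the elementary estimate for the radius of convergence.
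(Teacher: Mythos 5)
Your proposal is correct and follows essentially the same route as the paper's proof: expand $\log\det\bigl(1+({\rm e}^{z\mathcal B}-1)P_n\bigr)$ via the trace–log identity, substitute the Taylor series of ${\rm e}^{z\mathcal B}-1$, reorganise by total degree $m$ to recover \eqref{eq:cumulantexpression}, and then absorb the $m=1$ term into the prefactor ${\rm e}^{-z\Tr\mathcal B P_n}$ before invoking \eqref{eq:newcumu}. The only cosmetic difference is that you write the logarithm series with the sign $(-1)^{j+1}/j$ (which is the standard one), whereas the paper consistently carries $(-1)^{j}/j$; you correctly flagged this as a convention discrepancy, and it does not affect the validity of either argument.
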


\begin{proof}
The proof is similar to the proof of \eqref{eq:cumulantexpression0}, but now with $f$ and $K_n$ replaced by $\mathcal B$ and $P_n$.  We also need one additional step.

Again, by using $ \log \det  (1+A)= \Tr \log (1+A)$ and an expansion of $\log ( 1+ ({\rm e}^{z  \mathcal B}-1)P_n)$ in powers of $z$ we can rewrite the right-hand side of \eqref{eq:maintheoremhelp} as
\begin{multline}\no
\det\left( 1+ ({\rm e}^{z  \mathcal B}-1)P_n\right)= \exp \sum_{j=1}^\infty \frac{(-1)^j}{j} \Tr\left( ({\rm e}^{z  \mathcal B} -1)P_n\right)^j\\
=\exp \sum_{j=1}^\infty \frac{(-1)^j}{j} \sum_{{ l_1\geq 1,\ldots, l_j\geq 1}} z^{l_1+\cdots + l_j}\frac{\Tr  \mathcal B^{l_1} P_n\cdots  \mathcal B^{l_j} P_n}{l_1! \cdots l_j!}\\
=\exp \sum_{j=1}^\infty \frac{(-1)^j}{j} \sum_{m=j}^\infty z ^m \sum_{\overset{l_1+ l_2+ \cdots l_j=m}{ l_i\geq 1}}\frac{\Tr \mathcal B^{l_1} P_n\cdots  \mathcal B^{l_j} P_n}{l_1! \cdots l_j!}.
\end{multline}
for sufficiently small $z$. By changing the order of summation we obtain
$$
\det\left( 1+ ({\rm e}^{z  \mathcal B}-1)P_n\right)
= \exp\left(\sum_{m=1}^\infty z^m C_m^{(n)}(\mathcal B)\right)
$$
for sufficiently small $z$,
where 
\beq\label{eq:cumulantexpression}
C^{(n)}_m(\mathcal B) = \sum_{j=1}^m \frac{(-1)^j}{j} \sum_{\overset{l_1+ l_2+ \cdots l_j=m}{ l_i\geq 1}} \frac{\Tr \mathcal B^{l_1} P_n\cdots \mathcal B^{l_j} P_n}{l_1! \cdots l_j!}.
\eeq
The statement now follows by  applying \eqref{eq:newcumu} for $m\geq 2$ and observing that the $m=1$ disappears because of the first factor at the left-hand side of \eqref{eq:cumull}.
\end{proof}
The latter lemma holds for $z$ sufficiently small. In the next lemma we state an estimate that was of the crucial observations in \cite{BD}, which shows how small $z$ can be.
\begin{lemma} Let $m,n\in \bbN$ and $\mathcal B=({\mathcal B}_{r,s})_{r,s=1}^\infty$ be a bounded operator on $\ell_2(\bbN)$. Then
\beq\label{eq:ineqfromBD}
\left|C_m^{(n)}(\mathcal B) \right| \leq \frac{m^{3/2} {\rm e}^m}{\sqrt{2\pi}}  \|B\|_\infty^{m-2} \|[\mathcal B,P_n]\|_2^2, 
\eeq
where $[B,P_n]=B P_n-P_n B$ is the commutator of $P_n$ and $B$ and $\|\cdot \|_2$ stands for the Hilbert-Schmidt norm.  Hence, there exists a universal constant $A>0$ such that 
\beq \label{eq:oldestimate}
 \exp(-\Tr \mathcal B P_n) \det\left( 1+ ({\rm e}^{z  \mathcal B}-1)P_n\right)\leq \exp  A |z|^2 \|[\mathcal B,P_n]\|_2^2,
\eeq
for $|z | \leq 1/(3\|\mathcal B \|_\infty)$. 

Moreover, 
 For any $m\in \bbN$ and $F$ of finite rank we have 
\beq\label{eq:ineqfromBD2}
\left| D_m(F) \right| \leq \frac{m^{3/2} {\rm e}^m}{\sqrt{2\pi}}  \|F\|_\infty^{m-2} \|[F,P_-]\|_2^2, 
\eeq
\end{lemma}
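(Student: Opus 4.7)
The plan is to prove the three inequalities together. Inequality \eqref{eq:ineqfromBD} is the core bound; \eqref{eq:oldestimate} follows from it by substitution into the cumulant generating function identity of Lemma \ref{lem:rewritingoffredholmdet}, and \eqref{eq:ineqfromBD2} is obtained by running the same argument with $(\mathcal B, P_n)$ replaced by $(F, P_-)$, the finite-rank hypothesis on $F$ ensuring all operators are automatically trace class.

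The core estimate I would establish is the per-summand bound
$$\bigl|\Tr\!\left(\mathcal B^{l_1} P_n \cdots \mathcal B^{l_j} P_n\right) - \Tr\!\left(\mathcal B^m P_n\right)\bigr| \leq c_{\vec l}\,\|\mathcal B\|_\infty^{m-2}\|[\mathcal B, P_n]\|_2^2,$$
with $c_{\vec l} = \sum_{1 \leq i < k \leq j} l_i l_k \leq m^2/2$, established by induction on $j$. Setting $Q_n := I - P_n$, the inductive step uses the telescoping identity $\mathcal B^{l_1} P_n \mathcal B^{l_2} P_n = \mathcal B^{l_1+l_2} P_n - \mathcal B^{l_1} Q_n \mathcal B^{l_2} P_n$ (coming from $[\mathcal B^{l_2}, P_n] P_n = Q_n \mathcal B^{l_2} P_n$) to reduce from $j$ parts to $j-1$, leaving a remainder trace $-\Tr(\mathcal B^{l_1} Q_n \mathcal B^{l_2} P_n \mathcal B^{l_3} P_n \cdots \mathcal B^{l_j} P_n)$. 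Cycling a $P_n$ to the front, decomposing the trailing $\mathcal B^{l_j} = \mathcal B^{l_j}(P_n + Q_n)$ (the $Q_n$-piece vanishes after cyclic reordering because $Q_n P_n = 0$), and writing $Q_n = Q_n^2$ rewrites the remainder as $-\Tr\!\bigl((P_n \mathcal B^{l_1} Q_n)(Q_n \mathcal B^{l_2} P_n \mathcal B^{l_3} P_n \cdots \mathcal B^{l_j} P_n)\bigr)$. Cauchy--Schwarz together with the bound $\|P_n \mathcal B^l Q_n\|_2,\ \|Q_n \mathcal B^l P_n\|_2 \leq l\|\mathcal B\|_\infty^{l-1}\|[\mathcal B, P_n]\|_2$ (derived from $[\mathcal B^l, P_n] = \sum_{r=0}^{l-1}\mathcal B^r [\mathcal B, P_n]\mathcal B^{l-1-r}$) then bounds this remainder by $l_1 l_2 \|\mathcal B\|_\infty^{m-2} \|[\mathcal B, P_n]\|_2^2$, so that the inductive increments sum telescopically to $c_{\vec l} = \sum_{i<k} l_i l_k$.

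Inserting this estimate into \eqref{eq:newcumu}, using $\sum_{l_1+\cdots+l_j=m,\,l_i \geq 1} 1/(l_1!\cdots l_j!) = [z^m](e^z-1)^j \leq j^m/m!$, summing $\sum_{j=2}^m j^{m-1} \leq m^m$, and applying Stirling gives
$$|C_m^{(n)}(\mathcal B)| \leq \frac{m^{m+2}}{2\,m!}\,\|\mathcal B\|_\infty^{m-2}\|[\mathcal B, P_n]\|_2^2 \sim \frac{m^{3/2} e^m}{\sqrt{2\pi}}\,\|\mathcal B\|_\infty^{m-2}\|[\mathcal B, P_n]\|_2^2,$$
which is \eqref{eq:ineqfromBD} up to a benign adjustment of constants. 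For \eqref{eq:oldestimate}, Lemma \ref{lem:rewritingoffredholmdet} identifies $\log\bigl[e^{-z\Tr \mathcal B P_n}\det(1+(e^{z\mathcal B}-1)P_n)\bigr]$ with $\sum_{m\geq 2} z^m C_m^{(n)}(\mathcal B)$; inserting \eqref{eq:ineqfromBD} dominates this series by $\|[\mathcal B, P_n]\|_2^2 \sum_{m\geq 2} m^{3/2}(e|z|\|\mathcal B\|_\infty)^m/(\sqrt{2\pi}\,\|\mathcal B\|_\infty^2)$, which converges for $e|z|\|\mathcal B\|_\infty < 1$ and on $|z| \leq 1/(3\|\mathcal B\|_\infty)$ is bounded by $A|z|^2 \|[\mathcal B, P_n]\|_2^2$ for some universal $A$.

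The principal obstacle is guaranteeing that the telescoping step always produces \emph{two} boundary factors of the form $P_n \mathcal B^l Q_n$ and $Q_n \mathcal B^l P_n$, rather than just one (which would give the weaker bound $\|[\mathcal B, P_n]\|_1$): this is achieved by the combination of cyclic reordering, decomposition of the outermost $\mathcal B^{l_j}$, and the identities $P_n Q_n = Q_n P_n = 0$. The combinatorial bookkeeping needed to recognise the closed form $c_{\vec l} = \sum_{i<k} l_i l_k$ and the Stirling-form prefactor is then routine.
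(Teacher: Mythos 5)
Your proof is correct and takes essentially the same approach as the paper: a one-pair-at-a-time telescoping identity (your $Q_n$-decomposition is equivalent to the paper's double-commutator form $[P_n,\mathcal B^{l_{j-1}}][P_n,\mathcal B^{l_j}]$), the bound $\|[\mathcal B^l,P_n]\|_2\le l\|\mathcal B\|_\infty^{l-1}\|[\mathcal B,P_n]\|_2$ from the commutator telescope, the estimate $\sum_{l_1+\cdots+l_j=m}1/\prod l_i!\le m^m/m!$, and Stirling. The only cosmetic difference is that you track the per-summand constant exactly as $c_{\vec l}=\sum_{i<k}l_il_k\le m^2/2$, whereas the paper uses the cruder bound $jm^2$ for the iterated differences and lets the $1/j$ in the cumulant sum absorb the factor $j$.
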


\begin{proof} The proof was already given in \cite{BD}, but for completeness we will provide the main line of reasoning here. 

The key is that, by only using the identity $P_n^2=P_n$ and the fact that $\Tr AB=\Tr BA$, we have 
\begin{multline*}
\Tr \mathcal B^{l_1} P_n\cdots \mathcal B^{l_j} P_n-\Tr \mathcal\mathcal B^{l_1} P_n\cdots \mathcal B^{l_{j-1}+l_j} P_n
\\
= \Tr P_n \mathcal B^{l_1} P_n\cdots \mathcal B^{l_j} P_n-\Tr P_n \mathcal\mathcal B^{l_1} P_n\cdots \mathcal B^{l_{j-1}+l_j} P_n\\
= \Tr P_n  \mathcal B^{l_1} P_n\cdots \mathcal B^{l_{j-2}} P_n [P_n, \mathcal B^{l_{j-1}}] [P_n,\mathcal B^{l_j}].
\end{multline*}
for $j\geq 2$.  Then we estimate the trace by 
\begin{multline}\label{eq:oldproofa}
\left|\Tr \mathcal B^{l_1} P_n\cdots \mathcal B^{l_j} P_n-\Tr \mathcal B^{l_1} P_n\cdots \mathcal B^{l_{j-1}+l_j} P_n\right|\\
\leq \left\|P_n  \mathcal B^{l_1} P_n\cdots \mathcal B^{l_{j-2}} [P_n, \mathcal B^{l_{j-1}}] [P_n,\mathcal B^{l_j}]P_n\right\|_1\\
\leq  \| \mathcal B\|_\infty^{l_1+\cdots+l_{j-2}}\| [P_n, \mathcal B^{l_{j-1}}]\|_2\| [P_n, \mathcal B^{l_j}]\|_2.
\end{multline}
Here we used that $\|P_n\|_\infty=1$ and the identities for the trace and Hilbert-Schmidt norm as listed in Section \ref{sec:prel}.  By also using $$[\mathcal B^l, P_n]= \sum_{j=1}^{l_j} \mathcal B^{l-j} [\mathcal B,P_n] \mathcal B^{j-1}$$ we obtain 
$$\|[\mathcal B^l, P_n]\|_2^2 \leq l^2\|\mathcal B\|_\infty^{2(l-1)} \|[\mathcal B, P_n]\|_2^2.$$
By substituting this into \eqref{eq:oldproofa} and using $l_j, l_{j-1}\leq m$ we obtain 
$$
\left|\Tr \mathcal B^{l_1} P_n\cdots \mathcal B^{l_j} P_n-\Tr 
\mathcal B^{l_1} P_n\cdots \mathcal B^{l_{j-1}+l_j} P_n\right|
\leq m^2  \| \mathcal B\|_\infty^{m-2}\| [P_n, \mathcal B]\|_2^2.$$
By iterating this $j-1$ times we find 
$$
\left|\Tr \mathcal B^{l_1} P_n\cdots \mathcal B^{l_j} P_n-\Tr 
\mathcal B^{m} P_n\right|
\leq  j m^2  \| \mathcal B\|_\infty^{m-2}\| [P_n, \mathcal B]\|_2^2.$$
By substituting this back into definition of $C^{(n)}_m$ in \eqref{eq:newcumu} we obtain 
$$|C_m^{(n)}(\mathcal B)|\leq
m^2 \| \mathcal B\|_\infty^{m-2}\| [P_n, \mathcal B]\|_2^2 \sum_{j=2}^m  \sum_{\overset{l_1+ l_2+ \cdots l_j=m}{ l_i\geq 1}}\frac{1}{l_1! \cdots l_j!}.  $$
Now \eqref{eq:ineqfromBD} follows from combining the latter with 
 $$\sum_{j=2}^m  \sum_{\overset{l_1+ l_2+ \cdots l_j=m}{ l_i\geq 1}} \frac{1}{l_1! \cdots l_j!} <\frac{m^m}{m!} \leq \frac{{\rm e}^m}{\sqrt{ 2 \pi m }}.$$
 From \eqref{eq:ineqfromBD} we obtain \eqref{eq:oldestimate} with $A=\frac{{\rm e}^2}{\sqrt{2\pi}} \sum_{m=2}^{\infty}m^{3/2}({\rm e}/3)^{m-2}<\infty$.

 The proof of \eqref{eq:ineqfromBD2} is  completely analogous to the proof of \eqref{eq:ineqfromBD} after replacing $\mathcal B $ by $F$ and $P_n$ by $P_-$. 
\end{proof}

\begin{theorem}\label{th:maintheoremhelp}
Let $\mathcal B=({\mathcal B}_{r,s})_{r,s=1}^\infty$ be a banded  matrix defining a bounded operator on $\ell_2(\bbN)$. Assume that $\mathcal B$ has a right limit $\mathcal  B^R$ along a subsequence $\{n_j\}_j$. Then 
\begin{multline}\label{eq:maintheoremhelp}
\lim_{j\to \infty} \exp(-z \Tr \mathcal B P_{n_j}) \det \left(1+({\rm e}^{ z \mathcal B}-1) P_{n_j}\right) \\
= \lim_{M\to \infty} \exp(-z \Tr \mathcal B_M^R P_-) \det \left(1+({\rm e}^{ z \mathcal B^R_M}-1) P_-\right) ,
\end{multline} 
uniformly for $z$ in a sufficiently small neighborhood of the origin. In particular, both limits exists and are analytic in a neighborhood of the origin.
\end{theorem}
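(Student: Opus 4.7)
The plan is to compare both sides through their power-series expansions in $z$, reducing the theorem to coefficient-wise convergence via Lemma \ref{lem:fromctoD} together with a uniform bound that permits interchange of limit and summation.

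First, applying Lemma \ref{lem:rewritingoffredholmdet} to the left-hand side gives, for $|z|$ small,
$$\exp\bigl(-z \Tr \mathcal B P_{n_j}\bigr) \det\bigl(1 + (e^{z\mathcal B} - 1) P_{n_j}\bigr) = \exp\Bigl(\sum_{m=2}^\infty z^m C_m^{(n_j)}(\mathcal B)\Bigr).$$
The proof of that lemma used only $P_n^2 = P_n$, cyclicity of the trace, and the expansion of $\log\det$; the same argument therefore extends verbatim when $P_n$ is replaced by the orthogonal projection $P_-$ on $\ell_2(\bbZ)$ and $\mathcal B$ is replaced by the finite-rank operator $\mathcal B_M^R$, yielding
$$\exp\bigl(-z \Tr \mathcal B_M^R P_-\bigr) \det\bigl(1 + (e^{z \mathcal B_M^R} - 1) P_-\bigr) = \exp\Bigl(\sum_{m=2}^\infty z^m D_m(\mathcal B_M^R)\Bigr).$$

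Next, I would obtain term-by-term convergence of the cumulants from Lemma \ref{lem:fromctoD} applied to the constant sequence $\mathcal B^{(n)} \equiv \mathcal B$: for every $m \in \bbN$ and every $M \geq m$,
$$\lim_{j\to\infty} C_m^{(n_j)}(\mathcal B) = D_m(\mathcal B_{bM}^R).$$
Since the left side does not depend on $M$, $D_m(\mathcal B_{bM}^R)$ is the same for every $M \geq m$. In particular $\lim_{M\to\infty} D_m(\mathcal B_M^R)$ exists and equals $\lim_{j\to\infty} C_m^{(n_j)}(\mathcal B)$.

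To interchange these limits with the summation in the exponent, I would invoke the bounds \eqref{eq:ineqfromBD} and \eqref{eq:ineqfromBD2}, namely
$$|C_m^{(n_j)}(\mathcal B)| \leq \frac{m^{3/2} e^m}{\sqrt{2\pi}} \|\mathcal B\|_\infty^{m-2} \|[\mathcal B, P_{n_j}]\|_2^2, \qquad |D_m(\mathcal B_M^R)| \leq \frac{m^{3/2} e^m}{\sqrt{2\pi}} \|\mathcal B_M^R\|_\infty^{m-2} \|[\mathcal B_M^R, P_-]\|_2^2.$$
The crucial uniform control is supplied by the banded structure: if $b$ is the bandwidth of $\mathcal B$, then both $\mathcal B^R$ and every $\mathcal B_M^R$ have bandwidth at most $b$ (right limits preserve the band, and diagonal truncation cannot widen it). The commutators $[\mathcal B, P_{n_j}]$ and $[\mathcal B_M^R, P_-]$ therefore each have at most $O(b^2)$ nonzero entries, with each entry bounded in modulus by $\|\mathcal B\|_\infty$, since $\|\mathcal B_M^R\|_\infty \leq \|\mathcal B^R\|_\infty \leq \|\mathcal B\|_\infty$ (the last inequality by lower semicontinuity of the operator norm under the isometric shifts defining the right limit, and the first because $\mathcal B_M^R$ is a compression of $\mathcal B^R$). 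Consequently both Hilbert--Schmidt norms are bounded by $b\|\mathcal B\|_\infty$, uniformly in $j$ and $M$, and the two power series converge absolutely and uniformly on a common disk $|z| < r_0$ depending only on $b$ and $\|\mathcal B\|_\infty$.

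Dominated convergence then lets $\lim_{j\to\infty}$ (resp. $\lim_{M\to\infty}$) pass inside $\sum_m$ on each side, so both sides of \eqref{eq:maintheoremhelp} collapse to $\exp\bigl(\sum_{m=2}^\infty z^m \lim_{j\to\infty} C_m^{(n_j)}(\mathcal B)\bigr)$, uniformly in $z$ near the origin; in particular both limits exist and are analytic there. The main obstacle is exactly the uniformity of the cumulant estimates: one must ensure the bounds on $\|[\mathcal B, P_{n_j}]\|_2$ and $\|[\mathcal B_M^R, P_-]\|_2$ do not degrade as $j$ or $M$ grows, and this relies entirely on translating the bandedness of $\mathcal B$ (together with its preservation under right limits and truncation) into a finite-rank, norm-bounded structure for the relevant commutators.
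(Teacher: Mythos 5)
Your proof is correct and follows essentially the same route as the paper: rewrite both sides via Lemma \ref{lem:rewritingoffredholmdet} (and its $P_-$ analog, which is indeed the content of \eqref{eq:defDmF}), use Lemma \ref{lem:fromctoD} for coefficient-wise convergence, and use the cumulant bounds \eqref{eq:ineqfromBD}, \eqref{eq:ineqfromBD2} together with bandedness for uniform control of the tails. The one organizational difference is at the end: the paper establishes uniform convergence of the exponential generating functions by invoking Montel's theorem to extract a normal limit and then showing that every subsequential limit is the same, while you bypass normal families with a direct $\varepsilon/3$ argument (split off a long tail uniformly, then use coefficient-wise convergence of the finite head); both are valid, and yours is arguably more elementary. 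One minor point you gloss over: Lemma \ref{lem:fromctoD} shows $D_m(\mathcal B^R_{bM})$ is constant for $M\geq m$, which gives existence of the limit along the subsequence $M'=bM$; to pass to $\lim_{M'\to\infty} D_m(\mathcal B^R_{M'})$ over all $M'$ one should note (by the same reasoning as Lemma \ref{lem:keylemma}, applied to $D_m$) that $D_m(F)$ depends only on the entries of $F$ in the block $|r|,|s|\leq bm$, so $D_m(\mathcal B^R_{M'})$ is in fact eventually constant in $M'$. This is implicit but worth stating.
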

\begin{proof}
For $M\in \bbN$  we write
\begin{align}
&\exp \left[\sum_{m=2}^\infty z^m C_m^{(n)}(\mathcal B)\right] = \exp \left[\sum_{m=2}^\infty z^m D_m(\mathcal B_{bM}^R)\right] \exp \left[-\sum_{m=M+1}^\infty z^m D_m(\mathcal B_{bM} ^R)\right] \\ &\quad \times  \exp \left[\sum_{m=2}^{ M} z^m ( C_m^{(n)}(\mathcal B)-D_m(\mathcal B_{bM}^R)\right]\exp \left[\sum_{m= M+1}^\infty z^m C_m^{(n)}(\mathcal B)\right]
\end{align}
Note that by \eqref{eq:ineqfromBD} we have
\begin{multline}\label{eq:tailbound1}
 \sum_{m=M+1}^\infty  |z|^m |C_m^{(n)}(\mathcal B)|
\leq \frac{1}{\sqrt{2 \pi}} \|[\mathcal B,P_n]\|_2^2 \sum_{m=M+1}^\infty m^{3/2} ({\rm e}|z|)^m \|\mathcal B\|_\infty^{m-2},    \end{multline}
and by \eqref{eq:ineqfromBD2} we have
 \begin{multline}\label{eq:tailbound2}
 \sum_{m=M+1}^\infty  |z|^m | D_m(\mathcal B^R_{bM})|
\leq \frac{1}{\sqrt{2 \pi}} \|[\mathcal B_{b M}^R,P_{-}]\|_2^2 \sum_{m=M+1}^\infty m^{3/2} ({\rm e}|z |)^m \|\mathcal B^R_{bM}\|_\infty^{m-2}.    \end{multline}
Since $\mathcal B$ is banded, the commutators in \eqref{eq:tailbound1} and \eqref{eq:tailbound2} are of finite rank and hence it is not difficult to see that the Hilbert-Schmidt norms are bounded in $M$ and $n$. Therefore we have that the  bounds in both \eqref{eq:tailbound1} and \eqref{eq:tailbound2}  tend to zero as $M\to \infty$, uniformly for $z$ in a sufficiently small neighborhood of the origin.

By \eqref{eq:oldestimate} and Montel's Theorem, the sequence  $\{F_{j}(z)\}_{j}$ defined by
$$
F_{j}(z)=\exp \left[\sum_{m=1}^\infty z^m C_m^{(n_j)}(\mathcal B)\right] $$
defines a normal sequence of analytic functions on a small disk around the origin. Hence the exists a subsequence $\{j'\}$ such that $\{F_{j'}\}$ converges to an analytic function $F(z)$ on that disk. But then by   Lemma \ref{lem:fromctoD} we have
\begin{align}\no
&F(z) = \exp \left[\sum_{m=2}^\infty z^m D_m(\mathcal B_{bM}^R)\right] \exp \left[-\sum_{m=M+1}^\infty z^m D_m(\mathcal B_{bM}^R)\right] \\ &\qquad \times \exp \left[\lim_{j'\to \infty} \sum_{m=M+1}^\infty z^m C_m^{(n_{j'})}(\mathcal B)\right]\no
\end{align} 
By  using \eqref{eq:tailbound1} and \eqref{eq:tailbound2}  and taking the limit $M\to \infty$ we obtain
\beq\no
F(z) = \lim_{M\to \infty}  \exp \left[ \sum_{m=2}^\infty z^m D_m(\mathcal B_{b M}^R)\right]
\eeq
In particular, the limit at the right-hand side exists.
By rewriting the right-hand side as in Lemma \ref{lem:rewritingoffredholmdet} using \eqref{eq:defDmF} we get
\beq\no
F(z)=
\lim_{M\to \infty}   {\rm e}^{-z \Tr P_{-} \mathcal B^R_{b M} P_{-} }  \det  \left(I+P_{-} ({\rm e}^{z \mathcal B^R_{b M} } -I)P_{-}\right).
\eeq
Now note that the right-hand side is independent of the subsequence $j'$ and therefore every convergent subsequence has the same limit. Hence the sequence $F_{j}$ converges to $F$.  This proves the statement.
 \end{proof}

\section{Proof of Theorems  \ref{th:fluctuationsbio}, \ref{th:approx} and \ref{th:maintheorem}} \label{sec:proof}

In this section we will prove Theorems \ref{th:fluctuationsbio}, \ref{th:approx} and \ref{th:maintheorem}.  We will start with the first one. 

\begin{proof}[Proof of Theorem \ref{th:fluctuationsbio}] 
The starting point of the proof  is the fact that the biorthogonal ensemble of size forms a determinantal point process with  kernel 
$$K_n(x,y)=\sum_{j=0}^{n-1} \psi_j(x)\phi_j(y).$$
As discussed in Section \ref{sec:cumul} the cumulants for a linear statistic associated to a polynomial $p$ can be written as
$$
\mathcal C_m^{(n)} (p)=\sum_{j=1}^m \frac{(-1)^j}{j} \sum_{\overset{l_1+ l_2+ \cdots l_j=m}{ l_i\geq 1}} \frac{\Tr p^{l_1} K_n\cdots  p^{l_j} K_n}{l_1! \cdots l_j!}.
$$
By exploiting the biorthogonality we  see, after some standard algebraic computations, that
\begin{multline}\label{eq:cumulantexpressionbio}
\mathcal C_m^{(n)} (p)=C^{(n)}_m(p(J^{(n)}))\\=\sum_{j=1}^m \frac{(-1)^j}{j} \sum_{\overset{l_1+ l_2+ \cdots l_j=m}{ l_i\geq 1}} \frac{\Tr \mathcal (p(J^{(n)})) ^{l_1} P_n\cdots \mathcal  (p(J^{(n)}))^{l_j} P_n}{l_1! \cdots l_j!},
\end{multline}
for $m,n \in \bbN$. 

To prove the  statement we take $p= P \circ Q$. Lemma \ref{lem:equivalent} implies now that when computing the limit $j \rightarrow \infty$, we may replace  $P(Q(J^{(n)}))$ by $T(P(s_Q))$, since both have $L(P(s_Q))$ as a right limit.  By applying Corollary~\ref{cor:toeplitzcumu} to $T(P(s_Q))$,  we see that the cumulants, and hence the moments, converge to those of  a normally distributed random variable with the variance as given in the statement.  This proves the theorem. \end{proof}

\begin{proof}[Proof of Theorem \ref{th:maintheorem}]
We start by recalling from \eqref{sec:cumul} that we have 
\[\EE\left[ \exp z X^{(n)}_f\right] =\det \left(1+ ({\rm e}^{z f}-1) K_n \right),\]
where $K_n$ is as in \eqref{eq:defKn}.
Moreover, since $\EE X^{(n)}_f= \Tr f K_n$ we have 
\[\EE\left[ \exp z (X^{(n)}_f- \EE X^{(n)}_f)\right] =\det \left(1+ ({\rm e}^{z f}-1) K_n \right)\exp(-z \Tr f K_n).\]
By \eqref{eq:cumulantexpressionbio} and the expansion of a Fredholm determinant in terms of the cumulants as explained in Section \ref{sec:cumul},  we find  
$$\EE\left[ \exp z (X^{(n)}_f- \EE X^{(n)}_f)\right] =\det \left(1+ ({\rm e}^{z f(J)}-1) P_n \right)\exp(-z \Tr f(J) P_n).$$
Since $J$ is banded and $f$ is a polynomial, also $f(J)$ is banded. Moreover, it is not hard to check that $f(J^R)$ is a right-limit of $f(J)$.  Hence Theorem \ref{th:maintheorem} follows directly from Theorem \ref{th:maintheoremhelp}.
\end{proof}

\begin{proof}[Proof of Theorem \ref{th:approx}]
For simplicity we will assume that the the subsequence is $\bbN$. Also, by possibly increasing the set $E$ we can always assume it is an interval and that $s(\mathbb T) \subset E$ where $\mathbb T= \{|z|=1\}$. Finally, by replacing $f$ and $p$ below by $f \circ Q$ and $p \circ Q$, we see that it is enough to prove the theorem for $Q=id$, which we do for notational simplicity.

The first step of the proof is showing that the limit of the variance is continuous. Note that $\hat{f}_k$ with $k$ as defined in \eqref{eq:cor:coef} are the Fourier coefficients of $f(s(w))$. Moreover, by integration by parts we see that $k \hat{f}_k$ are the Fourier coefficients of $f'(s(w))s'(w)$. 
\begin{multline*}
\sum_{k=1}^\infty k \hat{f}_k \hat{f}_{-k} -\sum_{k=1}^\infty k \hat{g}_k \hat{g}_{-k} = \sum_{k=1}^\infty k (\hat{f}_k-\hat{g}_k)  \hat{f}_{-k} +\sum_{k=1}^\infty \hat{g}_k  k(\hat{f}_{-k}-\hat{g}_{-k}) 
\end{multline*}
By using the  Cauchy-Schwarz inequality and the unitarity of the Fourier transform we thus see
\begin{multline*}
\left|\sum_{k=1}^\infty k \hat{f}_k \hat{f}_{-k} -\sum_{k=1}^\infty k \hat{g}_k \hat{g}_{-k} \right|
\leq\left( \|f(s(w))\|_2 + \|g(s(w))\|_2\right) \left\|\left(f'(s(w))-g'(s(w))\right) s'(w)\right\|_2,
\end{multline*}
where $\|\cdot\|_2$ stands for the $\mathbb L_2$ norm on the unit circle. 
Hence there exists a constant $c_1>0$ such that 
\beq\label{eq:approx:cont1}
\left|\sum_{k=1}^\infty k \hat{f}_k \hat{f}_{-k} -\sum_{k=1}^\infty k \hat{g}_k \hat{g}_{-k} \right|
\leq c_1\left(  \sup_{x \in E} |f(x)| +  \sup_{x \in E} |g(x)|\right)  \sup_{x \in E} |f'(x)-g'(x)|.
\eeq
This is the continuity result that we need for the limiting expression of the variance.

For finite $n$ we have a similar estimate: Let  $f\in C^1$ be such that there exists a constant $C$ such that
\beq \label{eq:boundonf}
|f(x)| \leq  C (1+|x|)^k, 
\eeq
for $x\in \bbR$. Then, by the assumptions of the theorem, we have
 \begin{multline}\label{eq:boundonintegralf}
\left|\int_\bbR f(x)^2 K_n(x,x) {\rm d} x \right|
\leq  \sup_{t \in E} |f(t)|^2 \int_E   K_n(x,x) {\rm d} x +o(1/n)
=\mathcal O(n),
\end{multline} 
as $n \to \infty$. 

Writing the variance as
\begin{multline*}
\Var X_f^{(n)}= 
 \frac12 \iint_{\bbR\times \bbR} ({f(x)-f(y)})^2 K_n(x,y) K_n(y,x) {\rm d}\mu(x) {\rm d}\mu(y) \\
=\frac12 \iint_{E\times E}({f(x)-f(y)})^2 K_n(x,y) K_n(y,x) {\rm d}\mu(x) {\rm d}\mu(y)
\\
+\frac12 \iint_{ (\bbR\setminus E)\times E}({f(x)-f(y)})^2 K_n(x,y) K_n(y,x) {\rm d}\mu(x) {\rm d}\mu(y)\\+\frac12 \iint_{\bbR \times (\bbR\setminus E)}({f(x)-f(y)})^2 K_n(x,y) K_n(y,x) {\rm d}\mu(x) {\rm d}\mu(y),
\end{multline*}
we see that
\begin{multline*}
0 \leq \Var X_f^{(n)}- \frac12 \iint_{E\times E}({f(x)-f(y)})^2 K_n(x,y) K_n(y,x) {\rm d}\mu(x) {\rm d}\mu(y)
\\
\leq \iint_{\bbR \times (\bbR\setminus E)}({f(x)-f(y)})^2 K_n(x,y) K_n(y,x) {\rm d}\mu(x) {\rm d}\mu(y).
\end{multline*}

Now, since we have $K_n(x,y) K_n(y,x) \leq K_n(x,x) K_n(y,y)$ we also have
\begin{multline} \label{eq:approx:variance1}
0 \leq \Var X_f^{(n)}- \frac12 \iint_{E\times E}({f(x)-f(y)})^2 K_n(x,y) K_n(y,x) {\rm d}\mu(x) {\rm d}\mu(y)
\\
\leq \iint_{\bbR \times (\bbR\setminus E)}({f(x)-f(y)})^2 K_n(x,x) K_n(y,y) {\rm d}\mu(x) {\rm d}\mu(y)\\
\leq 2 \iint_{\bbR \times (\bbR\setminus E)} \left(f(x)^2+f(y)^2\right) K_n(x,x) K_n(y,y) {\rm d}\mu(x) {\rm d}\mu(y)=o(1),
\end{multline}
as $n\to \infty$, where in the last step we used \eqref{eq:boundonf}, \eqref{eq:boundonintegralf} and the condition of the theorem.
Moreover,
\begin{multline}\label{eq:approx:variance2}
 \iint_{E\times E} ({f(x)-f(y)})^2 K_n(x,y) K_n(y,x) {\rm d}\mu(x) {\rm d}\mu(y) \\
=\iint \left(\frac{f(x)-f(y)}{x-y}\right)^2 (x-y)^2K_n(x,y)K_n(y,x) {\rm d}\mu(x) {\rm d}\mu(y) \\
\leq  \left(\sup_{x\in E} |f'(x)|\right)^2\iint  (x-y)^2K_n(x,y) K_n(y,x) {\rm d}\mu(x) {\rm d}\mu(y)\\=2\left(\sup_{x\in E} |f'(x)|\right)^2\Var X_{e_1}^{(n)}
\end{multline}
where $e_1(x)=x$ is the linear monomial. 

By combining \eqref{eq:approx:variance1} and \eqref{eq:approx:variance2} we thus see that there exists a constant $c_2>0$ such that
\beq\label{eq:approx:cont2}
\limsup_{n\to \infty}  \Var X_f^{(n)} \leq c_2 \left(\sup_{x\in E} |f'(x)|\right)^2.
\eeq

We are now ready for the final argument of the proof. 
Let $\eps>0$.  Then, by compactness of $E$, we can find a polynomial $p$ such that 
\beq \label{eq:epsilonbound}
\sup_{x\in E} |f'(x)-p'(x)| \leq \eps
\eeq
and also 
\beq \label{eq:boundforp}
\sup_{x\in E} |f(x)-p(x)| \leq \eps \cdot |E|.
\eeq
For such $p$ write
\begin{multline} \label{eq:approxbigplan}
\left| \EE \left[\exp {\rm i} t ( X_f^{(n)}- \EE X_f^{(n)})  \right]- \exp\left(-\frac12t^2 \sum_{k=1}^\infty k f_k f_{-k} \right)\right|\\
\leq \left| \EE \left[\exp {\rm i} t ( X_f^{(n)}- \EE X_f^{(n)})  \right]- \EE \left[\exp {\rm i} t ( X_p^{(n)}- \EE X_p^{(n)})  \right]\right|\\
+\left| \EE \left[\exp {\rm i} t ( X_p^{(n)}- \EE X_p^{(n)})  \right]- \exp\left(-\frac12 t^2 \sum_{k=1}^\infty k p_k p_{-k} \right)\right|\\
+\left|\exp\left(-\frac12 t^2 \sum_{k=1}^\infty k p_k p_{-k} \right)\ - \exp\left(-\frac12t^2 \sum_{k=1}^\infty k f_k f_{-k} \right)\right|.
\end{multline}

We also recall the following standard argument for real valued random variables $X$ and $Y$, and $t\in \bbR$:
\begin{multline*}
\left|\EE [\exp {\rm i} t X]-\EE[\exp {\rm i} t Y]\right|\leq  \left|\EE [\exp {\rm i} t X-\exp {\rm i} t Y]\right|\\ \leq   \left|\EE [(\exp {\rm i} t (X-Y))-1 \exp {\rm i} t Y]\right|\leq  \EE \left[ \left|\exp {\rm i} t (Y-X))-1\right|\right]\\\leq  |t| \EE \left[ \left| X-Y|\right|\right]\leq  |t|\left( \EE \left[ \left(X-Y\right)^2\right]\right)^{1/2}.
\end{multline*}
Moreover, if $\EE X= \EE Y$ then  $\EE \left[ \left(X-Y\right)^2\right]= \Var (X-Y)$.  Hence, by \eqref{eq:approx:cont2} and \eqref{eq:epsilonbound} we have
 \begin{multline}\label{eq:approx:est1}
 \limsup_{n\to \infty} \left|\EE\left[ \exp {\rm i } t  (X^{(n)}_f- \EE X^{(n)}_f)\right]-\EE\left[ \exp {\rm i } t  (X^{(n)}_{p}- \EE X^{(n)}_{p})\right] \right| \\
 \leq c_2  \sup_{x\in E} |f'(x)-p'(x)\|_\infty \leq \eps c_2, 
 \end{multline}
 for $n\in \bbN$ and $t \in \bbR$.  Moreover, by Corollary \ref{cor:fluctuationsbio} and the fact that $p$ is a polynomial,
 \beq\label{eq:approx:est2}
 \lim_{n\to \infty} \EE \left[\exp {\rm i} t ( X_p^{(n)}- \EE X_p^{(n)})  \right]= \exp\left(-\frac12 t^2 \sum_{k=1}^\infty  k \hat{p}_k \hat{p}_{-k} \right).
 \eeq
 Finally, by \eqref{eq:approx:cont1} and \eqref{eq:epsilonbound} there exists a constant $c_3>0$ such that 
 \beq\label{eq:approx:est3}
\left|\exp\left(-\frac12 t^2 \sum_{k=1}^\infty k \hat{p}_k \hat{p}_{-k} \right)\ - \exp\left(-\frac12t^2 \sum_{k=1}^\infty k \hat{f}_k \hat{f}_{-k} \right)\right|\leq c_3( \|f(s(w))\|_2+ \| p(s(w)\|) \eps.
\eeq 
By substituting \eqref{eq:approx:est1}, \eqref{eq:approx:est2} and \eqref{eq:approx:est3} into \eqref{eq:approxbigplan} we obtain 
\begin{multline*} 
\limsup_{n\to \infty} \left| \EE \left[\exp {\rm i} t ( X_f^{(n)}- \EE X_f^{(n)})  \right]- \exp\left(-\frac12t^2 \sum_{k=1}^\infty k \hat{f}_k \hat{f}_{-k} \right)\right|\\
\leq c_2 \eps +c_3( \sup_{x \in E} |f(x)|+  \sup_{x \in E} |p(x)|) \eps \leq c_2 \eps + c_3  \sup_{x \in E} |f(x)|  (2+\eps |E|) \eps,
\end{multline*}
where we also used \eqref{eq:boundforp} in the last step. Since the constants do not depend on $p$ we can take the limit $\eps \downarrow 0$ and this proves the theorem.
\end{proof}

\section{Examples}

In this section we demonstrate the power of our approach by studying the application of our results to some interesting examples. 

The first subsection presents examples of orthogonal polynomial ensembles with compactly supported measures that are purely singular with respect to Lebesgue measure, but still satisfy a Central Limit Theorem. We also present examples of orthogonal polynomial ensembles that have different Central Limit Theorems along different subsequences. 

In subsection \ref{sec:Hahn} we discuss the application of our results to Hahn polynomial ensembles and the implications for lozenge tilings of the hexagon. 

In subsection \ref{sec:two} we discuss the application of our results to the two matrix model.
 
\subsection{Some Examples of Orthogonal Polynomial Ensembles} \label{sec:OPEexamples}

The focus on recursion coefficients is extremely natural from the point of view of the spectral theory of Jacobi matrices, since the orthogonality measure for a sequence of orthogonal polynomials is also the spectral measure for the associated Jacobi matrix (see, e.g., \cite{deift}). From this point of view, the simplest model is $a_n \equiv 1$ and $b_n \equiv 0$. These are the recurrence coefficients for the (scaled) Chebyshev polynomials of the second kind, for which the orthogonality measure is $\textrm{d} \mu(x)=\frac{\sqrt{4-x^2}}{2\pi}\chi_{[-2,2]}(x) \textrm{d} x$. This Jacobi matrix can also be thought of as the discrete Laplacian on the half line.

Thus a Jacobi matrix with $a_n \rightarrow 1$ and $b_n \rightarrow 0$ is a compact perturbation of the Laplacian and so, by Weyl's Theorem \cite[Theorem XIII.14]{rs4}, the spectral measure $\mu$, for such a matrix, has $\sigma_{\textrm{ess}}(\mu)=[-2,2]$. Examples of such matrices with purely singular (with respect to Lebesgue measure) spectral measures have been extensively studied in the Schr\"odinger operator community. For a review of some such examples see \cite{last-review}.

\begin{example} \label{ex:sparse1}
One example, whose analog in the continuum case was first considered by Pearson \cite{pearson}, is that of a decaying `sparse' perturbation of the Laplacian. In this case 
\beq \no
a_n \equiv 1,
\eeq
and 
\beq \no
b_n =\left \{ \begin{array}{cc} \widetilde{b}_j & n=N_j \\
0 & \textrm{otherwise}, \end{array} \right.
\eeq 
where $\widetilde{b}_j \neq 0$, $\widetilde{b}_j \rightarrow 0$ and $N_j$ satisfies $\frac{N_j}{N_{j+1}} \rightarrow 0$ as $j \rightarrow \infty$. In this case, \cite[Theorem 1.7]{kls} says that if $\sum_{j=1}^\infty \widetilde{b}_j^2 < \infty$ then $\mu$ is purely absolutely continuous on $(-2,2)$ and if $\sum_{j=1}^\infty \widetilde{b}_j^2=\infty$ then $\mu$ is purely singular continuous on $(-2,2)$. Thus, by Theorem \ref{th:theorem1} we see there are many singular measures such that the associated orthogonal polynomial ensemble satisfies a Central Limit Theorem. 

It is also worth mentioning here that sparse perturbations of the Laplacian were used in \cite{Bsine} to construct singular continuous meausres such that the associated Christoffel-Darboux kernel satisfies sine kernel asymptotics. For a review of some of the work on the spectral theory of Schr\"odinger operators with decaying sparse potentials see \cite{last-review}.
\end{example}

\begin{example} \label{sparse2}
Fix $\widetilde{b} \neq 0$ and let 
\beq \no
a_n \equiv 1,
\eeq
and 
\beq \no
b_n =\left \{ \begin{array}{cc} \widetilde{b} & n=N_j \\
0 & \textrm{otherwise}, \end{array} \right.
\eeq 
where $N_j$ is any sequence satisfying $N_{j+1}-N_j \rightarrow \infty$ as $j \rightarrow \infty$. By \cite[Theorem 1.7]{S304} the spectral measure $\mu$ of the associated Jacobi matrix has $\mu_{\textrm{ess}}=[-2,2]$ together with perhaps another point outside of $[-2,2]$. In addition, by \cite[Corollary 1.5]{Remling}, $\mu$ is purely singular. 

Note that for any sequence $\{ n_j\}_{j=1}^\infty$, satisfying $\lim_{j \rightarrow \infty} \inf_{\ell}|n_j-N_\ell|=\infty$, the right limit along $n_j$ is a Laurent matrix . Thus, by Theorem \ref{th:theorem1} along any such sequence the associated orthogonal polynomial ensemble satisfies a Central Limit Theorem. 

The right limit along $n_j \equiv N_j$, however, is a rank one perturbation of a Laurent matrix given by 
\beq \no
a_n \equiv 1,
\eeq
and 
\beq \no
b_n =\left \{ \begin{array}{cc} \widetilde{b} & n=0 \\
0 & \textrm{otherwise}. \end{array} \right.
\eeq 
Theorem \ref{th:maintheorem} gives a formula for the limt of fluctuations in this case. It would be interesting to give a simpler formula than \eqref{eq:maintheorem} for the limit, at least for this case of a rank one perturbation of a Laurent matrix.
\end{example}

The following example shows that an orthogonal polynomial ensemble with a measure supported on an interval can satisfy several different Central Limit Theorems.   

\begin{example} \label{ex:example1}
Partition $\bbN$ into successive blocks $A_1,C_1,A_2,C_2,\ldots$ whose size is given by 
\beq \no
\#(A_j)=3^{j^2}, \qquad \#(C_j)=2^{j^2}.
\eeq
Now let the sequence $\{a_k,b_k\}_{k=1}^\infty$ be given by  
\beq \no
a_k=\left \{ \begin{array}{cc} 1 & k \in A_j \\
\frac{1}{2} & k \in C_j, \end{array}\right.
\eeq
and $b_k \equiv 0$, and let $\mu$ be the corresponding spectral measure. Then, by the fact that $\|J \| \leq 2$, $\supp(\mu)=\textrm{spectrum}(J) \subseteq [-2,2]$. Moreover, by constructing approximate eigenfunctions, $\supp(\mu)=[-2,2]$. 

Now, consider the corresponding orthogonal polynomial ensemble. If $n_j$ is taken in the middle of the $A_j$ blocks then by Theorem \ref{th:theorem1}, for any real-valued $f\in C^1(\bbR)$
\begin{align}
X_f^{(n_j)} -\EE X_f^{(n_j)} \to N\left(0,\sum_{k \geq 1} k  |\hat f_k|^2 \right), \qquad \text{ as } j \to \infty,
\end{align}
in distribution, where  
\beq \no
 \hat f_k=\frac{1}{2\pi} \int_0^{2\pi}  f(2  \cos  \theta) {\rm e}^{-{\rm i} k \theta} {\rm d}\theta.
\eeq
On the other hand, if $n_j$ is taken in the middle of the $C_j$ blocks then for any real-valued $f\in C^1(\bbR)$
\begin{align}
X_f^{(n_j)} -\EE X_f^{(n_j)} \to N\left(0,\sum_{k \geq 1} k  |\hat {\hat {f_k}}|^2 \right), \qquad \text{ as } j \to \infty,
\end{align}
in distribution, where  
\beq \no
\hat{\hat {f_k}}=\frac{1}{2\pi} \int_0^{2\pi}  f( \cos  \theta) {\rm e}^{-{\rm i} k \theta} {\rm d}\theta.
\eeq

Thus, the fluctuations in this example have at least two different Gaussian limits. Note that by the Stahl-Totik theory of regular measures \cite{StT}, for any continuous $f$
\beq \no
\frac{1}{n}\EE X_f^{(n)} \to \int f(x) {\rm d}\nu(x)
\eeq 
where $\nu$ is the potential theoretic equilibrium measure for the interval $[-2,2]$.
\end{example}

Example \ref{ex:example1} is Example 4.1 from \cite{BLS}, where it was constructed for a different purpose. We note that by \cite{Remling}, the measure $\mu$ is purely singular with respect to Lebesgue measure. Example 5.1 from \cite{BLS}, which we describe next, shows that $\mu$ may be supported on an interval, have a nontrivial absolutely continuous component, and yet the corresponding orthogonal polynomial ensemble may have several different Gaussian limits. In fact, a \emph{continuum} of different Gaussian limits is attained. 

\begin{example}\label{ex:example2} 
Let $b_n \equiv 0$ and $a_n$ be described as follows:
Partition $\{1,2,\dots\}$ into successive blocks $A_1,B_1,C_1,D_1,A_2,B_2, \dots$, where
\begin{equation} \no
\#(A_j)=3^{j^2} \quad \#(C_j) =2^{j^2} \quad \#(B_j)=\#(D_j)=j^6 -1
\end{equation}
On $A_j$, $a_n\equiv 1$, on $C_j$, $a_n\equiv\frac{1}{2}$, and on $B_j$ and
$D_j$, $\log(a_n^2)$ interpolates linearly from $\log(\frac{1}{4})$ to $\log(1)$,
that is, for $n\in B_j$,
\begin{equation} \no
\frac{a_n^2}{a_{n-1}^2} = c_j
\end{equation}
and for $n\in D_j$,
\begin{equation} \no
\frac{a_{n-1}^2}{a_n^2} = c_j
\end{equation}
where
\begin{equation} \no
c_j^{j^6} =\frac{1}{4}.
\end{equation}
Theorem 5.2 in \cite{BLS} says that in this case $\mu$ is purely absolutely continuous on $(-1,1)$  and purely singular on $[-2,2]\setminus(-1,1)$. 

As in Example \ref{ex:example1} above, it is easy to see that the two Toeplitz matrices, one with $a_n \equiv 1$ and one with $a_n \equiv 1/2$ ($b_n \equiv 0$ in any case) are right limits. In fact, in this case, for any $a \in [1/2,1]$, there is a Toeplitz right limit with $a_n \equiv a$. This is because along the blocks $B_n$ and $D_n$, $a_n$ varies at an increasingly slower rate between $1/2$ and $1$. There are no other right limits, since for any sequence, $\{n_j\}_{j=1}^\infty$, along which $a_{n_j}$ approaches a constant, all the $a_n$'s in any finite-size block around $n_j$ approach the same constant. 

It follows that for any $a \in [1/2,1]$ there is a sequence $\{n_j(a)\}_{j=1}^\infty$ such that for any real-valued $f\in C^1(\bbR)$
\begin{align}
X_f^{\left(n_j(a) \right)} -\EE X_f^{\left(n_j(a)\right)} \to N\left(0,\sum_{k \geq 1} k  |\hat{f}_k(a)|^2 \right), \qquad \text{ as } j \to \infty,
\end{align}
in distribution, where  
\beq \label{eq:coefficientsf}
\hat{f}_k(a)=\frac{1}{2\pi} \int_0^{2\pi}  f(2a \cos  \theta) {\rm e}^{-{\rm i} k \theta} {\rm d}\theta.
\eeq  
Here again, as in example \ref{ex:example1}, the measure $\mu$ is regular and so for any continuous $f$
\beq \no
\frac{1}{n}\EE X_f^{(n)} \to \int f(x) {\rm d}\nu(x)
\eeq 
where $\nu$ is the potential theoretic equilibrium measure for the interval $[-2,2]$.
\end{example}

\subsection{The Hahn ensemble and lozenge tilings of a hexagon} \label{sec:Hahn}

In the next example, we discuss a  discrete orthogonal polynomial ensemble. Such  ensembles  appear naturally in random tilings of planar domains. Examples are the classical discrete polynomials  such as Charlier, Hahn, Krawtchouk and Meixner polynomials  \cite{BKMM,Jannals2}.   Since the Jacobi matrices corresponding to  classical discrete orthogonal polynomials (after a rescaling) have a right limit that is a Laurent matrix, we have that Theorem \ref{th:theorem1} applies to these ensembles. As an illustration we work out the case of the Hahn ensemble that describes random lozenge tilings of a hexagon. We will also briefly discuss the connection with the study of the two dimensional fluctuations of such systems and the appearance of the Gaussian Free Field.

For  $\alpha, \beta >-1$ and $N\in \bbN$. The \emph{Hahn weight} $w_N^{(\alpha,\beta)}$ is defined as 
\beq\label{eq:hahn}
w_N^{(\alpha,\beta)}(x)= \begin{pmatrix} \alpha+x\\ x \end{pmatrix} \begin{pmatrix} \beta+N-x\\ N-x \end{pmatrix},
\eeq
on the nodes $ x=0,1,\ldots,N$.  The corresponding  orthogonal polynomials are called the \emph{Hahn polynomials} and are given by 
\beq\no
Q^{(\alpha,\beta)}_{N,n}(x)= {}_3 F_2\left(\begin{array}{c}-n,n+\alpha+\beta+1,-x\\ \alpha+1,-N  \end{array};1\right),
\eeq
for $n=0,1,\ldots,N$. For properties of Hahn polynomials we refer to \cite{koekoek}. The \emph{Hahn ensemble} is defined as the discrete orthogonal polynomial ensemble induced by the weight \eqref{eq:hahn} on the nodes $x=0,1,\ldots,N$.  The Hahn ensemble appears in Lozenge tilings of the hexagonal lattice that we will now briefly describe. For more details we refer to \cite{BKMM,Jannals2}.  

Fix $\gothic{a}, \gothic{b},\gothic{c} \in \bbN$  and consider the hexagon with corners at $(0,0)$, $(\gothic{a},-\gothic{a})$, $(\gothic{a}+\gothic{b},\gothic{b}-\gothic{a})$, $(\gothic{a}+\gothic{b}, \gothic{b}-\gothic{a}+ 2\gothic{c})$, $(\gothic{b}, \gothic{b}+ 2\gothic{c})$ and $(0,  2\gothic{c})$. Next we introduce lozenges of three different types. Set $e=(1,0)$ and $f=(0,1)$. Then a type I lozenge is spanned by $e+f $ and $2f$, a type II lozenge by $e-f$ and $-2f$, and a type III lozenge by $e-f$ and $e+f$.  Note that  the endpoints of the lozenges are always on the square grid. We will consider tilings of the $\gothic a \gothic b \gothic c$-hexagon by the three type of lozenges.  By symmetry we assume without loss of generality that $\gothic b \geq \gothic a$. 

Let us now consider  a random tiling by taking the uniform measure on all possible tilings of the hexagon. Our particular interest is in  the point  process defined by the type III lozenges whose centers lie at a vertical section $\{m\} \times \bbN$ for a given $m\in \{1,\ldots, \gothic a + \gothic b -1\}$. To describe this process, let $L_m$ be the number of type III lozenges on such a section. After some considerations, we find that
\beq
L_m= \begin{cases}
m, &  \text{ if }  0  \leq m\leq \gothic a,\\
\gothic a & \text{ if }  \gothic a  <m \leq \gothic b,\\
\gothic a+\gothic b-m& \text{ if } \gothic  b < m \leq \gothic a+\gothic b,\\
\end{cases}\eeq
Note that $\gothic c +L_m+1$  is the total number of points on the square lattice on the vertical line at $m$  that lie inside the hexagon and define 
$$\gothic a_m=|\gothic a-m|, \quad \text{and} \quad \gothic b_m=|\gothic b-m|.$$
Then it was proved by Johansson \cite{Jannals2} (see also \cite{BKMM}) that the type III lozenges with centers at the line $m \times \bbN$ form a Hahn ensemble of size $L_m$ with parameters $N=\gothic c+L_m$ and $(\alpha,\beta)=(\gothic a_m+1,\gothic b_m+1)$. 

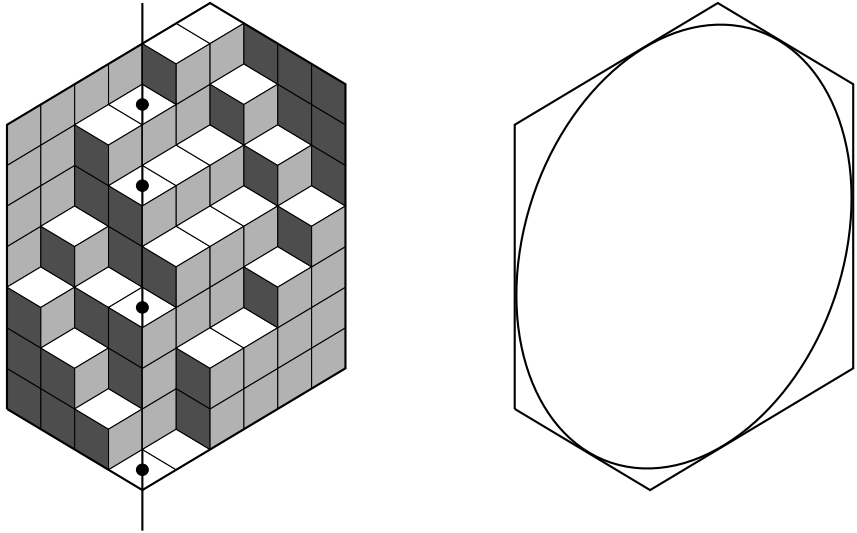
\begin{figure}
\begin{center}
\begin{tikzpicture}[xscale=0.45,yscale=0.27]

\draw (0,0) \lozd;
\draw (0,2) \lozd;
\draw (0,4) \lozd;
\draw (0,6) \lozu;
\draw (0,8) \lozu;
\draw (0,10) \lozu;
\draw (0,12) \lozu;

\draw (0,6) \lozr;

\draw (1,-1) \lozd;
\draw (1,1) \lozd;
\draw (1,3) \lozu;
\draw (1,7) \lozd;
\draw (1,9) \lozu;
\draw (1,11) \lozu;
\draw (1,13) \lozu;

\draw (1,3) \lozr;
\draw (1,9) \lozr;

\draw (2,-2) \lozd;
\draw (2,0) \lozu;
\draw (2,4) \lozd;
\draw (2,6) \lozu;
\draw (2,10) \lozd;
\draw (2,12) \lozd;
\draw (2,14) \lozu;

\draw (2,0) \lozr;
\draw (2,6) \lozr;

\draw (2,14) \lozr;

\draw (3,-3) \lozu;
\draw (3,1) \lozd;
\draw (3,3) \lozd;

\draw (3,7) \lozd;
\draw (3,9) \lozd;
\draw (3,11) \lozu;
\draw (3,15) \lozu;

\draw (3,-3) \lozr;
\draw (3,5) \lozr;
\draw (3,11) \lozr;
\draw (3,15) \lozr;

\filldraw (4,-3) ellipse (5pt and 8pt);
\filldraw (4,5)  ellipse (5pt and 8pt);
\filldraw (4,11) ellipse (5pt and 8pt);
\filldraw (4,15)  ellipse (5pt and 8pt);

\draw[thick] (4,-6) -- (4,20);
\draw (4,-2) \lozu;
\draw (4,0) \lozu;
\draw (4,2) \lozu;
\draw (4,6) \lozd;
\draw (4,8) \lozu;
\draw (4,12) \lozu;
\draw (4,16) \lozd;

\draw (4,-2) \lozr;
\draw (4,8) \lozr;
\draw (4,12) \lozr;
\draw (4,18) \lozr;

\draw (5,-1) \lozd;
\draw (5,1) \lozd;
\draw (5,3) \lozu;
\draw (5,5) \lozu;
\draw (5,9) \lozu;
\draw (5,13) \lozu;
\draw (5,15) \lozu;

\draw (5,3) \lozr;
\draw (5,9) \lozr;
\draw (5,13) \lozr;
\draw (5,19) \lozr;

\draw (6,-2) \lozu;
\draw (6,0) \lozu;
\draw (6,4) \lozu;
\draw (6,6) \lozu;
\draw (6,10) \lozu;
\draw (6,14) \lozd;
\draw (6,16) \lozu;

\draw (6,4) \lozr;
\draw (6,10) \lozr;
\draw (6,16) \lozr;

\draw (7,-1) \lozu;
\draw (7,1) \lozu;
\draw (7,5) \lozd;
\draw (7,7) \lozu;
\draw (7,11) \lozd;
\draw (7,13) \lozu;
\draw (7,17) \lozd;

\draw (7,7) \lozr;
\draw (7,13) \lozr;

\draw (8,0) \lozu;
\draw (8,2) \lozu;
\draw (8,4) \lozu;
\draw (8,8) \lozd;
\draw (8,10) \lozu;
\draw (8,14) \lozd;
\draw (8,16) \lozd;

\draw (8,10) \lozr;;

\draw (9,1) \lozu;
\draw (9,3) \lozu;
\draw (9,5) \lozu;
\draw (9,7) \lozu;
\draw (9,11) \lozd;
\draw (9,13) \lozd;
\draw (9,15) \lozd;

\draw[rotate around={83:(20,8)},thick] (20,8)  ellipse(11cm and 4.8cm);
\draw[ thick] (0,0)--(0,14)--(6,20)--(10,16)--(10,2)--(4,-4)--(0,0);
\draw[ thick] (15,0)--(15,14)--(21,20)--(25,16)--(25,2)--(19,-4)--(15,0);
\end{tikzpicture}
\end{center}
\caption{The left figure shows a lozenge tiling of a hexagon and  the centers of the type III lozenges along a vertical section. The right figure shows the separation of the hexagon into the disordered region and the frozen corners.}
\label{fig:tiling}
\end{figure}

We are particularly interested in the asymptotic properties of this process  when the hexagon is large. That is,  we take $\gothic c\to \infty$ such that $\gothic a/ \gothic c$ and $\gothic b/ \gothic c$ converge to positive real numbers. In this limit, the hexagon will be separated into a number of regions. There is an ellipse that touches each of the faces of the hexagon, (see the right hexagon in Figure \ref{fig:tiling}). Inside the ellipse the system is disordered (also referred to as the liquid region). In the remaining  corner pieces of the hexagon the tiling is frozen, in the sense that each piece has only one type of lozenges. To analyze the fluctuations in the  disordered region we intersect the hexagon with a vertical line with horizontal coordinate $m$ such that $m/\gothic c $ tends to some positive value and consider the lozenges of type III on that line. Due to the fact that these lozenges form a Hahn ensemble we have that the following result, which is a straightforward  corollary to Theorem \ref{th:theorem1} and basic properties of the Hahn polynomials, provides us with a CLT for that point process.

\begin{theorem} \label{th:hahn}
Let $A,B\geq 0$ and $t>1$. Consider the (rescaled) Hahn ensemble of size $n$. As $n\to \infty$, take $\alpha,\beta,N$ such that  $$\begin{cases}
\frac{\alpha}{n} \to A ,\\
\frac{\beta}{n}\to B,\\
\frac{N}{n}\to t.
\end{cases}$$
Then for any $f\in C^1([0,1])$, the scaled linear statistic $$X_f^{(n)} = \sum_{j=1}^n f(x_j/N),$$
with $x_j$  randomly chosen from the Hahn ensemble of size $n$,  satisfies  
$$
X_f^{(n)} -\EE X_f^{(n)} \to N\left(0,\sum_{k \geq 1} k  |\hat f_k|^2 \right), \qquad \text{ as } n \to \infty,
$$
in distribution, where   the coefficients $\hat f_k$ are defined as 
$$
 \hat f_k=\frac{1}{2\pi} \int_0^{2\pi}  f\left(2 a\cos  \theta+b\right) {\rm e}^{-{\rm i} k \theta} {\rm d}\theta,
$$
for $k\geq 1$, and 
\beq
\label{eq:limithahn}
\begin{split}
a&=\frac{\sqrt{(t-1)(1+A+B)(1+A) (1+A+B+t)(1+B)}}{t (2+A+B)^2} \\
b&=\frac{(t-1)(1+A+B)(1+A)+(1+A+B+t)(1+B)}{t(2+A+B)^2}.
\end{split}
\eeq
\end{theorem}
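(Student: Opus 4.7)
The plan is to reduce Theorem \ref{th:hahn} to Theorem \ref{th:approx} by rescaling the Hahn ensemble onto $[0,1]$ and computing a right limit of the rescaled Jacobi matrix. Since $X_f^{(n)}=\sum_j f(x_j/N)$ is the linear statistic for $f$ applied to the rescaled points $y_j = x_j/N$, the relevant sequence is the Jacobi matrix $\tilde J^{(n)} = J^{(n)}/N_n$ of the orthonormal Hahn polynomials in the variable $y$, where $J^{(n)}$ is the Hahn Jacobi matrix with parameters $(\alpha_n,\beta_n,N_n)$. The rescaled ensembles are all supported in the compact set $E=[0,1]$, so the integral in \eqref{eq:criterium} vanishes identically and Theorem \ref{th:approx} applies as soon as $\tilde J^{(n)}$ admits a right limit. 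Hence it is enough to prove that $\tilde J^{(n)}$ converges, in the strong sense defining a right limit along the full sequence $n\in\bbN$, to the Laurent matrix $L(s)$ with symbol $s(w)= aw + b + a/w$, where $a,b$ are as in \eqref{eq:limithahn}. The variance formula of Theorem \ref{th:approx} then reads $\sum_{k\geq 1} k \hat f_k \hat f_{-k}$ with $\hat f_k$ the Fourier coefficients of $f(s(e^{i\theta}))=f(2a\cos\theta+b)$, matching the statement.

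To produce the right limit, I would invoke the classical three-term recurrence for the Hahn polynomials (see \cite{koekoek}), which expresses the orthonormal recurrence coefficients as
$$ b_{k+1}^{(n)} = A_k^{(n)}+C_k^{(n)}, \qquad (a_k^{(n)})^2 = A_{k-1}^{(n)} C_k^{(n)},$$
with
$$A_k^{(n)} = \frac{(k+\alpha+\beta+1)(k+\alpha+1)(N-k)}{(2k+\alpha+\beta+1)(2k+\alpha+\beta+2)}, \qquad C_k^{(n)} = \frac{k(k+\alpha+\beta+N+1)(k+\beta)}{(2k+\alpha+\beta)(2k+\alpha+\beta+1)},$$
and with $(\alpha,\beta,N)=(\alpha_n,\beta_n,N_n)$ scaling as prescribed. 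Fixing $r\in\bbZ$ and setting $k=n+r$, each of the six affine factors in $A_k^{(n)}$ and $C_k^{(n)}$ is $\sim n$ times an explicit constant in $(A,B,t)$, and the shift $r$ drops out of the limit. A direct calculation then gives
$$ \frac{A_{n+r-1}^{(n)}}{n} \to \frac{(t-1)(1+A+B)(1+A)}{(2+A+B)^2}, \qquad \frac{C_{n+r}^{(n)}}{n}\to \frac{(1+A+B+t)(1+B)}{(2+A+B)^2},$$
whence $b_{n+r+1}^{(n)}/N_n \to b$ and $a_{n+r}^{(n)}/N_n \to a$ with the closed forms in \eqref{eq:limithahn}, independently of $r$. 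Consequently $\tilde J^{(n)}$ has the claimed (unique, $r$-independent) Laurent right limit.

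There is no substantive obstacle here beyond careful bookkeeping: one must keep the $N_n$-rescaling consistent between the points, the Jacobi coefficients, and the symbol, and verify the algebraic identification $2a\cos\theta+b \leftrightarrow s(e^{i\theta})$ with $a,b$ of \eqref{eq:limithahn}. As in the other applications in the paper, once the classical asymptotics of the recurrence coefficients of the discrete orthogonal family are in hand, Theorem \ref{th:approx} produces the Central Limit Theorem mechanically.
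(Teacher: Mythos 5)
Your proposal is correct and follows essentially the same route as the paper: rescale the nodes to $[0,1]$ (so the compactness criterion \eqref{eq:criterium} holds trivially), read off the classical Hahn recurrence coefficients from \cite{koekoek}, verify that after division by $N_n$ they converge to $a$ and $b$, and invoke Theorem \ref{th:approx}. The paper phrases this as an appeal to Theorem \ref{th:theorem1} (itself an immediate corollary of Theorem \ref{th:approx} in the compactly supported setting) and writes out the recurrence coefficients directly rather than through your $A_k, C_k$ decomposition, but the substance is identical.
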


\begin{proof}
Rescale the nodes so that they are on $[0,1]$ for all $N\in \bbN$ and consider the following measure
\beq
 \tilde \mu(x)= \sum_{x=0}^{N} w_N^{(\alpha,\beta)}(x) \delta_{x/N}.
\eeq
The orthonormal polynomials $p_n$ with respect to this measure are  the rescaled and normalized Hahn polynomials. They satisfy the recurrence \eqref{eq:oprecur} with 
\begin{align}\no
\begin{split}
b_n&=\frac{(N-n)(n+\alpha+\beta+1)(n+\alpha+1)}{N(2n+\alpha+\beta+1)(2n+\alpha+\beta+2)}\\
&\quad \quad +\frac{n(n+\alpha+\beta+N+1)(n+\beta)}{N (2n+\alpha+\beta)(2n+\alpha+\beta+1)}\\
a_n&=\frac{n(n+\alpha+\beta+N+1)(n+\beta)}{N (2n+\alpha+\beta)(2n+\alpha+\beta+1)} \\
&\quad  \quad \times \sqrt{\frac{(N-n)(n+\alpha+\beta)(\alpha+n)(2n+\alpha+\beta+1)}{n(n+\alpha+\beta+N+1)(\beta+n)(2n+\alpha+\beta-1)}},
\end{split}
\end{align}
which can be derived  from  the properties in \cite{koekoek}. One readily verifies that $a_n \rightarrow a$ and $b_n \rightarrow b$ as $n \rightarrow \infty$, with $a$ and $b$ as in \eqref{eq:limithahn} and, therefore, the statement is a corollary to Theorem \ref{th:theorem1}.
\end{proof}

Tilings of planar domains are often effectively described by the so-called height function, which  associates a  surface to the tiling.  In the lozenge tiling of the hexagon we define the height function by 
$$h_m(x)= \text{Number of type III lozenges below } (m,x). $$
Note that this definition of may differ from the definition that is used in other places in the literature by a adding a deterministic function. This, however, does not have an effect on the fluctuations of the height function.  As the size of the hexagon grows large, the height function converges to a deterministic function, which is the integral of the density of type III lozenges (note that this density can be computed from the recurrence coefficients using the procedure in \cite{KV}).  Theorem \ref{th:hahn} describes the fluctuations of the height function in the following way. Let $f$ have compact support in $[0,1]$, then we have, with $\chi_A$ the characteristic function for the set $A$, 
\begin{multline}
\sum_{x=0}^N h_m(x) (f(x/N)-f((x-1)/N) = \sum_{x=0}^N  f(x/N) (h_m(x+1)-h_m(x)) \\=  \sum_{x=0}^N  f(x/N) \chi_{ \{\text{centers of type III lozenges}\}} (m,x)= X_f^{(n)}.
\end{multline}
Thus we see that Theorem \ref{th:hahn} shows the fluctuations of $h_m$ are described by a CLT. One may interpret the result in the following way.  The centered height function $h_m-\EE h_m$ converges to a the Gaussian random distribution $F$. The distribution acts on $C^1$ test functions $f$ with compact support in $[0,1]$, under the pairing $\langle F, f'\rangle$ and the covariance structure is induced by the (Sobolev-type) norm $\sum_{k=1}^\infty k |\hat f_k|^2$.

It is interesting to compare this result with the results of \cite{Petrov} on the two dimensional fluctuations (see also \cite{Borad0,Borad,Borad2,Borad3} for two dimensional fluctuations in models related to classical orthogonal polynomial ensembles), where we view the height function as a function of two variables $(m,x)$.   It was proved in \cite{Petrov} that the two dimensional  fluctuations of the height function converge to the pull-back, by the complex structure,  of the Gaussian Free Field on the upper half plane with Dirichlet boundary conditions.   The complex structure is a diffeomorphism $\Omega$   that maps the disordered region in the hexagon onto the upper half plane.  
The  push-forward by $\Omega$ of  the centered fluctuation $h-\EE h$, viewed as a function of two variables, also converges to a  Gaussian random distribution $G$. The distribution now acts on sufficiently smooth test functions  $f$ with compact support in the upper half plane, under the pairing $-\langle G, \triangle f\rangle$ (where $\triangle$ is the Laplace operator) and the covariance structure is, up to a constant, induced by the Dirichlet norm $\int |\nabla f|^2 $.

Thus the centered height function converges to a random Gaussian distribution both as a function of two variables and as function of one variable along vertical sections. Naturally these results are strongly related since we probe the same random object with different classes of test functions, but we believe that it is not a priori obvious how to obtain one from the other. From this perspective, it is  interesting to compare the proof  for the one dimensional fluctuations in the present paper, with the proof for the two dimensional fluctuations of a  particular model in the paper \cite{D}. In both cases, the proof relies on the cumulant expressions for linear statistics, but the analysis is rather different.

\subsection{The two matrix model}\label{sec:two}

In the final example, we discuss the two matrix model. 
The two matrix model in random matrix theory is defined as the probability measure on the pairs $(M_1,M_2)$ of $n\times n$ Hermitian matrices defined by 
\beq \no
\frac{1}{Z_{2M}} {\rm e}^{-n \Tr \left(V_1(M_1)+V_2(M_2) -\tau M_1M_2\right)} \ {\rm d} M_1 {\rm d} M_2,
\eeq
where $\tau \neq 0$ is called the coupling constant, $Z_{2M}$ is a normalizing constant (partition function),  $V_1$ and $V_2$ are two polynomials of even degree and positive leading coefficients. Finally, ${\rm d}M_j$ stands for the product of the Lebesgue measure on the independent entries $${\rm d}M_j=\prod_{k=1}^n {\rm d}(M_j)_{kk} \prod_{1\leq k <l\leq n} {\rm d}\Re  (M_j)_{kl} {\rm d}\Im  (M_j)_{kl}.$$ 
The two matrix model gives rise to two biorthogonal ensembles in the following way: when we average over $M_2$ the eigenvalues of $M_1$ form a biorthogonal ensemble of size $n$ and vice versa. 

To describe the orthogonal ensembles we need the so-called biorthogonal polynomials that were introduced in \cite{EM}. Let   $\{p_{k,n}\}_k$ and $\{q_{l,n}\}_l$ be two families of polynomials such that $p_{k,n}$ and $ q_{l,n}$ have degree $k$ and $l$ respectively, and 
\beq \label{eq:biorthogonal2M}
\iint_{\bbR^2} p_{k,n}(x) q_{l,n}(y) {\rm e}^{-n \left(V(x)+W(y)-\tau x y \right)} {\rm d} x {\rm d} y= \delta_{kl}.
\eeq
It should be noted that  this orthogonality is not Hermitian and we can therefore not apply Gram-Schmidt to ensure the existence of such polynomials. Nevertheless it can be proved that they exist and have simple real zeros  \cite{ErclMcL}  and that the zeros have an interlacing property \cite{DGK}.  Finally, the polynomials $p_{k,n}$ and $q_{l,n}$ are unique up to a multiplicative constant (see also Remark \ref{rem:changeinJ}). We fix this constant by assuming that the leading coefficients of $p_{k,n}$ and $q_{k,n}$ are the same. We note that this convention is different from, for example, \cite{DGK,DKM} but fits our purposes better.

We will also need the (Laplace) transforms 
\beq \label{eq:laplacetransforms}
\begin{split}
P_{k,n} (y) & = {\rm e}^{-n W(y) } \int p_{k,n}(x) {\rm e}^{-n (V(x)-\tau x y)} \ {\rm d}x, \\
Q_{l,n}(x) &= {\rm e}^{-n V(x) } \int q_{k,n}(y) {\rm e}^{-n (w(y)-\tau x y)} \ {\rm d}y.
\end{split}
\eeq
With these functions we can rewrite \eqref{eq:biorthogonal2M} as 
\beq\label{eq:bioortho2}
\begin{split}
\int_\bbR p_{k,n}(x) Q_{l,n} (x) {\rm d} x&= \delta_{lk},\\
\int_\bbR P_{k,n}(y) q_{l,n}(y) {\rm d} y&= \delta_{lk}.
\end{split}
\eeq
Thus the eigenvalues of $M_1$ when averaged over $M_2$, form a biorthogonal ensemble generated by $\{p_{k,n}\}_k$ and $\{Q_{k,n}\}_k$ with kernel 
\beq\no
K^{(1)}_n(x_1,x_2)= \sum_{k=0}^{n-1} p_{k,n}(x_1) Q_{k,n}(x_2).
\eeq 
And vice versa,  the eigenvalues of $M_2$ when averaged over $M_1$, form a biorthogonal ensemble generated by $\{q_{l,n}\}_l$ and $\{P_{l,n}\}_l$ with kernel 
\beq\no
K^{(2)}_n(y_1,y_2)= \sum_{l=0}^{n-1} P_{l,n}(y_1) q_{l,n}(y_2).
\eeq 
Given a function $f$ we denote the associated linear statistic for the biorthogonal ensemble with kernel $K_1$ by $X^{(n)}_f$. The associated linear statistic for the biorthogonal ensemble with kernel $K_2$ will be denoted by $Y^{(n)}_f$. 

Now that we have established two biorthogonal ensembles, the question rises if they admit a recurrence. This answer can be found in \cite{BE}, where the authors provide a thorough study of the integrable structure of the biorthogonal polynomials. The fact that we have a recurrence is not hard to prove. Indeed, it is trivial  that $x p_{k,n}(x)$ can be expressed as a linear combination of $p_{m,n}(x)$ for $m=0,\ldots,k+1$. By using the orthogonality we see that  the precise coefficients can be obtained by computing the following integral
\begin{multline}\label{eq:integrationbyparts}
\iint xp_{k,n}(x)q_{m,n}(y) \exp(-n (V_1(x)+V_2(y)-\tau xy )) {\rm d} x{\rm d} y\\
=\frac{1}{\tau} \iint p_{k,n}(x)\left(V_2'(y) q_{m,n}(y)-q'_{m,n}(y)\right)  \exp(-n (V_1(x)+V_2(y)-\tau xy )) {\rm d} x{\rm d} y,
\end{multline}  
where we used integration by parts in the last step.
Since $V_2$ is a polynomial, the latter integral vanishes if the degree of $V_2'(y) q_{m,n}(y)$ (and hence also $m$) is less than $k$. Concluding, there exist coefficients $a^{(k)}_{m,n}$ and $b^{(l)}_{m,n}$ such that 
\beq\label{eq:biorecur1}
\begin{split}
x p_{k,n}(x)&=\sum_{m=0}^{d_2} a^{(k)}_{m-1,n}  p_{k-m+1,n} (x) \\
y q_{l,n}(y)&=\sum_{m=0}^{d_1} b^{(l)}_{m-1,n} q_{l-m+1,n} (y) 
\end{split}
\eeq
where $d_j= \mathop{\mathrm{deg}} V_j$.  Note that our convention that $p_{k,n}$ and $q_{k,n}$ have the same leading coefficients implies that  $a^{(k)}_{-1,n}=b^{(k)}_{-1,n}$. 
\begin{proposition}
Assume that there exists $a_{m}$ and a subsequence $\{n_j\}$ such that 
\beq \label{eq:condition2Ma}
\lim_{j\to \infty}a_{m,n_k}^{n_k+r} = a_m, \qquad m=-1,\ldots,d_2-1, \quad r\in \bbZ.
\eeq
Define $s(w)=\sum_{l=-1}^{d_2-1} a_l w^{l}$.  Then for any polynomial $p$ with real coefficients  we have that 
\beq\no
X_p^{(n_j)}-\EE X_p^{(n_j)}\to N\left(0,\sum_{k=1}^\infty  k \hat p_k  \hat p_{-k} \right),
\eeq
where 
\beq\no
\hat p_k = \frac{1}{2\pi {\rm i}} \oint_{|w|=1} p(s(w)) \frac{{\rm d} w}{w^{k+1}}. 
\eeq
Similarly, assume that there exists $b_{m}$ and a subsequence $\{n_j\}$ such that 
\beq\label{eq:condition2Mb}
\lim_{j\to \infty}b_{m,n_j}^{n_k+r} = b_m, \qquad m=-1,\ldots,d_1-1, \quad r\in \bbZ.
\eeq
Define $s(w)=\sum_{l=-1}^{d_1-1} b_l w^{l}$.  Then for any polynomial $p$ with real coefficients we have that 
\beq\no
Y_p^{(n_j)}-\EE Y_p^{(n_j)}\to N\left(0,\sum_{k=1}^\infty k \hat p_k  \hat p_{-k} \right),
\eeq
where 
\beq\no
\hat p_k = \frac{1}{2\pi {\rm i}} \oint_{|w|=1} p(s(w)) \frac{{\rm d} w}{w^{k+1}}. 
\eeq
\end{proposition}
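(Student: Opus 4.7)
The plan is to recognize the biorthogonal ensemble with kernel $K^{(1)}_n$ as a biorthogonal ensemble with recurrence in the sense of Section 2, and then to apply Corollary \ref{cor:fluctuationsbio} directly. First, by \eqref{eq:bioortho2} the families $\psi^{(n)}_j := p_{j,n}$ and $\phi^{(n)}_j := Q_{j,n}$ are biorthogonal with respect to Lebesgue measure on $\bbR$, and the joint density of the eigenvalues of $M_1$ (after averaging over $M_2$) is well-known to have the form \eqref{eq:defBO} with exactly these families. So we are genuinely in the setting of Section 2.

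Next I would read off the recurrence matrix $J^{(n)}$ from the first line of \eqref{eq:biorecur1}. The $(k,r)$-entry of $J^{(n)}$ is the coefficient of $p_{r,n}$ in the expansion of $x p_{k,n}$; substituting $r = k-m+1$, i.e. $m = k-r+1$, this gives $(J^{(n)})_{k,r} = a^{(k)}_{k-r,n}$ whenever $k-r \in \{-1, 0, \ldots, d_2-1\}$, and $0$ otherwise. In particular, $J^{(n)}$ is banded with a band structure that does not depend on $n$, so the hypotheses of Corollary \ref{cor:fluctuationsbio} are applicable once a right limit is verified.

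Under the assumption \eqref{eq:condition2Ma}, for any fixed $\alpha, \beta \in \bbZ$ and with $d = \alpha - \beta \in \{-1, 0, \ldots, d_2-1\}$,
$$
\lim_{j \to \infty} (J^{(n_j)})_{n_j + \alpha,\, n_j + \beta} \;=\; \lim_{j \to \infty} a^{(n_j+\alpha)}_{d,\, n_j} \;=\; a_d ,
$$
so the right limit of $\{J^{(n)}\}$ along $\{n_j\}$ is precisely the Laurent matrix $L(s)$ with symbol $s(w) = \sum_{l=-1}^{d_2-1} a_l w^l$: indeed $(L(s))_{\alpha,\beta} = s_{\alpha-\beta} = a_{\alpha-\beta}$ by the convention of Section 2. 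Corollary \ref{cor:fluctuationsbio} then yields the asserted central limit theorem for $X^{(n_j)}_p$ with variance $\sum_{k\ge 1} k\,\hat p_k\,\hat p_{-k}$ and Fourier-type coefficients as in \eqref{eq:cor:coef}. The statement for $Y^{(n_j)}_p$ is proved identically, using the second line of \eqref{eq:biorecur1} and hypothesis \eqref{eq:condition2Mb} in place of \eqref{eq:condition2Ma}.

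The proof is essentially a bookkeeping exercise once the main theorem is available; the one place to be careful is matching the indexing conventions for $J^{(n)}$ and $L(s)$, since the coefficients $a^{(k)}_{m-1,n}$ in \eqref{eq:biorecur1} are indexed by the band offset $m-1 = k-r$, and one must check that this indeed coincides with the exponent $l$ in the symbol $s(w)=\sum_l a_l w^l$ under the convention $(L(s))_{j,k}=s_{j-k}$. I do not foresee any genuine obstacle beyond that; in particular, no integration by parts identity of the type used in \eqref{eq:integrationbyparts} is needed for the asymptotic statement, since the existence of the recurrence \eqref{eq:biorecur1} is the only structural fact required.
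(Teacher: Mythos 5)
Your proof is correct and is exactly the argument the paper intends (the paper gives no explicit proof of this proposition, treating it as immediate from \eqref{eq:biorecur1} and Corollary \ref{cor:fluctuationsbio}). The one place worth flagging is the final remark: the integration by parts in \eqref{eq:integrationbyparts} \emph{is} what establishes that the sum in \eqref{eq:biorecur1} terminates at $m = d_2$ --- i.e.\ that $J^{(n)}$ is banded rather than merely lower Hessenberg --- and bandedness with an $n$-independent width is precisely what is needed to invoke the right-limit machinery. You phrase it as ``the existence of the recurrence \eqref{eq:biorecur1} is the only structural fact required,'' which is accurate provided one remembers that \eqref{eq:biorecur1} already encodes the bandedness obtained via \eqref{eq:integrationbyparts}; otherwise the sentence could be misread as saying the integration-by-parts step is dispensable. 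Also, as you anticipated, the convention $\hat p_k = \frac{1}{2\pi i}\oint p(s(w))\,w^{-k-1}\,{\rm d}w$ used in the proposition differs by $k \mapsto -k$ from \eqref{eq:cor:coef}, but since the variance is $\sum_{k\ge 1} k\,\hat p_k\,\hat p_{-k}$ (symmetric in $k \leftrightarrow -k$) this causes no discrepancy, so your bookkeeping check closes the proof.
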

The natural question arises as to what conditions on $V_1$, $V_2$ and $\tau$ imply \eqref{eq:condition2Ma} and/or \eqref{eq:condition2Mb} . It is expected that these limits hold under certain conditions on the potential $V_1$ and $V_2$ (in particular, when the spectral curve has genus $0$). However, a rigorous asymptotic analysis for the two matrix model has only been carried out  \cite{DKM} for  the biorthogonal ensemble with kernel $K^{(1)}$ when $V_2$ is quartic. 

We also want to remark that the recurrence coefficients in \eqref{eq:biorecur1} are related and that the \eqref{eq:condition2Ma} and \eqref{eq:condition2Mb} always come hand in hand. Indeed, if we define the matrices $ J_1$ and $ J_2$ by 
\beq\no
\begin{split}
\left(J_1\right)_{kl}&=\iint_{\bbR^2} x p_{k,n}(x)q_{l,n}(y) {\rm e}^{-n \left(V_1(x)+V_2(y)-\tau x y )\right)} {\rm d} x{\rm d} y,\\
\left( J_2\right)_{kl}&=\iint_{\bbR^2} y p_{l,n}(x)q_{k,n}(y) {\rm e}^{-n \left(V_1(x)+V_2(y)-\tau x y )\right)} {\rm d} x{\rm d} y,\\
\end{split}\eeq
then it
 follows from \eqref{eq:integrationbyparts} that 
\beq\no
\begin{split}
\left(J_1\right)_{kl}= \left(V_2'(J_{{2}})\right)_{lk},\quad k \geq l,\\
\left(J_2\right)_{kl}= \left(V_1'(J_{{1}})\right)_{lk},\quad k \geq l,
\end{split}
\eeq
Combining this with $a^{(k)}_{-1,n}= b^{(k)}_{-1,n}$ we see that the values of $a^{(k)}_{m,n}$ are fully determined by the values of $ b^{(k)}_{m,n}$ and vice versa. Hence if $J_1$ has a right limit, then so has $J_2$ and vice versa.  

The relation between $J_1$ and $J_2$ can be effectively used in the case  $V_1(x)=x^2$.   In this case, it is not difficult to simplify the transform in \eqref{eq:laplacetransforms} using gaussian integrals and obtain 
\beq\no
P_{k,n}(y)= {\rm e}^{-n(V_2(y)-\tau^2 y^2/2) } \tilde q_{k.n}(y),
\eeq
for some polynomial $\tilde q_{k,n}$ of degree $k$. But the the orthogonality in \eqref{eq:bioortho2} tells us that $\tilde q_{k,n}=q_{k,n}$ is the orthonormal polynomial of degree $k$ with respect to the varying measure $ {\rm e}^{-n(V_2(y)-\tau^2 y^2/2) } $ on  $\bbR$. Hence the biorthogonal ensemble with kernel $K^{(2)}$ is in fact an orthogonal polynomial ensemble.  Since the asymptotic behavior of these recurrence coefficients is fairly well understood, we can use these results to get information on $J_1$. This idea has been used in \cite{DGK} where a vector equilibrium problem  for the limiting density of points was derived.

\begin{theorem}
Assume that $V_1(x)=x^2/2$ and assume that 
\beq \no
\lim_{j\to \infty} a^{(n_j+r)}_{-1,n}=\lim_{j \to \infty} a^{(n_j+r)}_{1,n}=a, \qquad \lim_{j\to \infty} a^{(n_j+r)}_{0,n}=b, \qquad r\in \bbZ.
\eeq
for some subsequence $\{n_j\}$. 
Define $s_1(w)=\frac1\tau\left(V'_2(aw +b +a/w)\right)_-+a w$ and  $s_2(w)=a w+b +a/w$. 
Then for any polynomial $p$ with real coefficients  we have that 
\beq\no
\begin{split}
X_p^{(n_j)}-\EE X_p^{(n_j)}&\to N\left(0,\sum_{k=1}^\infty  k \hat p_k  \hat p_{-k} \right),\\
Y_p^{(n_j)}-\EE Y_p^{(n_j)}&\to N\left(0,\sum_{k=1}^\infty  k \hat {\hat p}_k  \hat {\hat p}_{-k} \right),
\end{split}
\eeq
where 
\beq\no
\begin{split}
\hat p_k &= \frac{1}{2\pi {\rm i}} \oint_{|w|=1} p(s_1(w)) \frac{{\rm d} w}{w^{k+1}},\\
\hat {\hat p}_k &= \frac{1}{2\pi {\rm i}} \oint_{|w|=1} p(s_2(w)) \frac{{\rm d} w}{w^{k+1}},\\
\end{split}
\eeq
\end{theorem}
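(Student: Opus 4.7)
The plan is to apply Corollary~\ref{cor:fluctuationsbio} separately to the two banded recurrence matrices $J_1$ and $J_2$ governing the biorthogonal ensembles with kernels $K_n^{(1)}$ and $K_n^{(2)}$. The substantive content is to verify, under the stated hypotheses, that $J_2$ and $J_1$ have right limits along $\{n_j\}$ that are Laurent matrices with symbols $s_2$ and $s_1$ respectively; the CLTs for $Y_p^{(n_j)}$ and $X_p^{(n_j)}$ then follow at once.

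The right limit of $J_2$ is the easy step. Because $V_1(x)=x^2/2$, a Gaussian integration in \eqref{eq:laplacetransforms} shows, as noted in the paragraph preceding the theorem, that the second ensemble reduces to the orthogonal polynomial ensemble for the measure $e^{-n(V_2(y)-\tau^2y^2/2)}\,{\rm d}y$, so $J_2$ is a symmetric tridiagonal Jacobi matrix with entries $b^{(k)}_{-1,n}$, $b^{(k)}_{0,n}$, and $b^{(k)}_{1,n}=b^{(k-1)}_{-1,n}$. The normalization condition $a^{(k)}_{-1,n}=b^{(k)}_{-1,n}$ together with the cross-relation $(J_2)_{kk}=(V_1'(J_1))_{kk}=(J_1)_{kk}=a^{(k)}_{0,n}$ (coming from $V_1'=\mathrm{id}$) translates the three stated limits into convergence of all three non-trivial diagonals of $J_2$, along every shift of $\{n_j\}$, to $a$, $b$, $a$. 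This yields the right limit $L(s_2)$ with $s_2(w)=aw+b+a/w$, and Corollary~\ref{cor:fluctuationsbio} applied to $J_2$ gives the CLT for $Y_p^{(n_j)}$.

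For $J_1$ I would invoke the integration-by-parts identity already sketched in \eqref{eq:integrationbyparts}, which in its sharpened form reads $(J_1)_{k,l}=\tau^{-1}(V_2'(J_2))_{l,k}$ for $k\geq l$. This pins down all entries of $J_1$ except its single remaining non-trivial off-diagonal (the one lying outside the range covered by the integration-by-parts identity), whose entries are $a^{(k)}_{-1,n}\to a$. Since Laurent matrices form a commutative algebra, $V_2'(L(s_2))=L(V_2'(s_2))$, and polynomial functional calculus is continuous in the entrywise convergence of uniformly banded matrices, so $V_2'(J_2)$ has right limit $L(V_2'(s_2))$ along the same subsequence. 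Restricting to the relevant triangular part, dividing by $\tau$, and adjoining the $a$-diagonal produces a Laurent right limit for $J_1$ whose symbol, after matching conventions for the triangular projection, is $s_1(w)=\tau^{-1}(V_2'(s_2(w)))_{-}+aw$. A second application of Corollary~\ref{cor:fluctuationsbio} then yields the CLT for $X_p^{(n_j)}$.

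The main point that needs care is the second step: the stated hypotheses mention only three coefficients $a^{(k)}_{-1,n}$, $a^{(k)}_{0,n}$, $a^{(k)}_{1,n}$, yet $J_1$ has up to $d_2+1$ non-trivial diagonals (one super-diagonal, the main diagonal, and $d_2-1$ sub-diagonals). The resolution is that all of these extra diagonals are not free parameters: via the cross-relation they are explicit polynomial functions of the entries of the tridiagonal $J_2$, so their limits along $\{n_j\}$ are forced by the convergence of the three Jacobi parameters of $J_2$. This collapse of the non-symmetric recurrence data of $J_1$ onto the Jacobi data of $J_2$ is specific to the choice $V_1(x)=x^2/2$ and is precisely what makes so economical a hypothesis sufficient.
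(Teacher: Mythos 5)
The paper does not prove this theorem (the statement appears without a proof, as the last item before the bibliography), so there is nothing to compare against. Your plan --- reduce the $Y$-ensemble to an orthogonal polynomial ensemble via Gaussian integration, use the integration-by-parts cross-relations to convert the stated hypotheses on $J_1$ into right limits for both $J_1$ and $J_2$, and apply Corollary~\ref{cor:fluctuationsbio} twice --- is the natural one and is the right overall strategy.

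The execution, however, has a genuine internal inconsistency. You invoke the cross-relation for $J_1$ in the sharpened form $(J_1)_{kl}=\tau^{-1}(V_2'(J_2))_{lk}$, carrying the factor $\tau^{-1}$ that \eqref{eq:integrationbyparts} produces, yet for $J_2$ you write $(J_2)_{kk}=(V_1'(J_1))_{kk}=(J_1)_{kk}$ with no $\tau^{-1}$. Both relations arise from the identical integration by parts (once in $y$, once in $x$), so both must carry the $\tau^{-1}$; the paper's displayed identities simply drop it. Carrying it through consistently gives $(J_2)_{kk}\to b/\tau$ and $(J_2)_{k,k-1}\to a/\tau$ alongside $(J_2)_{k,k+1}=b^{(k)}_{-1,n}=a^{(k)}_{-1,n}\to a$, a right limit $L\!\left(\tfrac{a}{\tau}w+\tfrac{b}{\tau}+\tfrac{a}{w}\right)$ which is not symmetric. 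And indeed $J_2$ itself is not symmetric under the paper's normalization convention (equal leading coefficients for $p_k$ and $q_k$): a direct check with $V_1=V_2=x^2/2$ gives $(J_2)_{01}/(J_2)_{10}=\tau$, since the $q_k$ are biorthogonal to the $P_l$ but not orthonormal. So the claim that ``$J_2$ is a symmetric tridiagonal Jacobi matrix with $b^{(k)}_{1,n}=b^{(k-1)}_{-1,n}$'' is false as stated, and the inference of the third diagonal from symmetry is unavailable. Remark~\ref{rem:changeinJ} (a diagonal conjugation changes the right-limit symbol by $w\mapsto rw$ without changing the variance) is the tool needed to repair this, but it must be invoked explicitly and the resulting $\tau$-dependence tracked rather than absorbed silently. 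The same bookkeeping issue hides behind ``matching conventions for the triangular projection'' in the $J_1$ step: with $(L(s))_{jk}=s_{j-k}$, the limit of $(J_1)_{k,k+1}=a^{(k)}_{-1,n}$ is the coefficient of $w^{-1}$ of the symbol, and the lower-triangular cross-relation supplies the non-negative powers, giving $\tau^{-1}(V_2'(s_2))_{+}+a/w$. Reconciling this with the stated $\tau^{-1}(V_2'(s_2))_{-}+aw$ --- and in particular with the constant term, since your stated $s_1$ has $(s_1)_0=0$ while the right-limit symbol must have $(s_1)_0=\lim a^{(k)}_{0,n}=b$ --- is a substantive step that needs to be written out, not presumed.
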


\end{document}